\renewcommand\labelenumi{(\roman{enumi})}
\renewcommand\theenumi\labelenumi
\newcommand\dotw{\mathbin{\vcenter{\hbox{\scalebox{0.4}{$\bullet$}}}}}
\theoremstyle{plain}
\newtheorem{theorem}{Theorem}[section]
\newtheorem{proposition}[theorem]{Proposition}
\newtheorem{corollary}[theorem]{Corollary}
\newtheorem{lemma}[theorem]{Lemma}
\theoremstyle{remark}
\newtheorem{remark}[theorem]{Remark}
\newcommand{\e}{{\mathrm e}}
\newcommand{\bb}{\mathfrak{B}}
\newcommand{\xx}{{\underline x}}
\definecolor{green}{rgb}{0,0.6,0}
\newcommand{\R}{{\mathbb R}}
\newcommand{\x}{{\bf x}}
\newcommand{\yy}{{\underline y}}
\newcommand{\y}{{\bf y}}
\newcommand{\C}{{\mathbb C}}
\numberwithin{equation}{section}
\newcommand{\Par}{\Theta}
\newcommand{\Pro}{\mathbb{P}}
\newcommand{\E}{\mathbb{E}}
\newcommand{\N}{{\mathbb N}}
\newcommand{\Z}{{\mathbb Z}}
\newcommand{\iu}{\mathrm{i}}
\newcommand{\di}{\mathrm{d}}
\DeclareMathOperator{\Tr}{Tr}
\newcommand{\Reminder}{\mathfrak{R}}
\begin{document}

\title[From orbital magnetism to bulk-edge correspondence ]{From orbital magnetism\\ to bulk-edge correspondence}
\author{Horia D. Cornean, Massimo Moscolari and Stefan Teufel} 


\begin{abstract}

By extending the gauge covariant magnetic perturbation theory to operators defined on half-planes, we prove that for $2d$ random ergodic magnetic Schr\"odinger operators, the zero-temperature bulk-edge correspondence can be obtained from a general bulk-edge duality at positive temperature involving the bulk magnetization and the total edge current.

\noindent Our main result is encapsulated in a formula, which states that the derivative of a large class of bulk partition functions with respect to the external constant magnetic field, equals the expectation of a corresponding edge distribution function of the velocity component which is parallel to the edge. Neither spectral gaps, nor mobility gaps, nor topological arguments are required. 

\noindent The equality between the bulk and edge indices, as stated by the conventional bulk-edge correspondence, is obtained as a corollary of our purely analytical arguments by imposing a gap condition and by taking a ``zero-temperature" limit.

\end{abstract}

\maketitle

\tableofcontents

\section{Introduction and main results}

We consider  random magnetic Schr\"odinger operators $H_{\omega,b}$ with $b:=-e\bb$, where $\bb\in\R$ is the strength of a constant magnetic field and $e$ is the positive elementary charge. These operators act in $L^2(\R^2)$. The restriction of $H_{\omega,b}$  to the half-space $E := \{ (x_1,x_2) \in \R^2 \,|\, x_2  \geq  0\}$ with a Dirichlet boundary condition at $x_2=0$ is denoted by $H_{\omega, b}^E$. It is well known that under a suitable covariance condition with respect to magnetic $\Z^2$-translations and for a large  class of smooth distribution functions $F:\R\to \R$ which decay fast enough at infinity, the corresponding bulk density
\begin{equation}
\label{BFBirkhoff}
{B}_F(b):=  \mathbb{E}\left(\Tr \left( \chi_{\Lambda_1} F(H_{\dotw,b})\right) \right)
\stackrel{\,\mathrm{a.s.}}{=}  \lim_{L\to\infty} {L}^{-2} \Tr \left( \chi_{{\Lambda_L}} F(H_{\omega,b})\right )
\end{equation}
with $\Lambda_L := [-L/2,L/2]^2$  is well defined and almost surely equal to the average density. Here $\mathbb{E}$ denotes the expectation with respect to the randomness $\omega$. 
In this work we prove that $b \mapsto {B}_F(b)$ is  differentiable and that its derivative equals minus the total particle current parallel to the edge  for the  edge Hamiltonian $H_{\omega,b}^E$ in the state $F'(H_{\omega,b }^E)$:

\begin{equation}\label{MainEquation}
 B_F'(b) = -
\lim_{L\to \infty} \mathbb{E} \left(\Tr  \big(  \tilde \chi_L\,\iu \left[H_{\dotw,b }^E,X_1\right]  F'(H_{\dotw,b }^E)\big)\right).
\end{equation}

Here $\tilde \chi_L(x)$  is a suitable regularisation of the characteristic function of 
 the strip $[0,1]\times [0,L]$ to be discussed below and $\iu  [H_{\omega,b }^E,X_1 ]$ is the velocity operator for the edge Hamiltonian in the $x_1$-direction. 
Also,  in \eqref{MainEquation} the expectation over disorder can be replaced by a spatial average almost surely:
\begin{equation}\label{current ergodic}
\mathbb{E}\left(\Tr  \big(  \tilde \chi_L\,\iu \left[H_{\dotw,b }^E,X_1\right]  F'(H_{\dotw,b }^E)\big)\right) \stackrel{\,\mathrm{a.s.}}{=}  \lim_{M\to \infty} M^{-1}   \Tr  \big(  \tilde \chi_{L,M}\,\iu \left[H_{\omega,b }^E,X_1\right]  F'(H_{\omega,b }^E)\big) \,,
\end{equation}
where $\tilde \chi_{L,M}$  is supported on the strip $[0,M]\times [0,L]$.

While we will discuss the physical implications of equation~\eqref{MainEquation} in detail in Section \ref{sec:physics}, let us mention that for 
 $F(x)=-T \ln \big (1+ \e^{-(x-\mu)/T}\big )$, with chemical potential
 $\mu\in\R$ and temperature $T>0$, equation~\eqref{MainEquation} states that the grandcanonical bulk magnetization $m^B(b,T,\mu)$, namely the left-hand side of equation~\eqref{MainEquation} (see also \eqref{eq:bulkM}) equals the total edge charge-current  $I_1^E(b,T,\mu)$ flowing  along the edge, that is the right-hand side of equation \eqref{MainEquation} (see also \eqref{eq:TotalCurrent}), i.e.
 \begin{equation}\label{mequalsI}
m^B(b,T,\mu) =I_1^E(b,T,\mu)\qquad\mbox{for all } b,\mu \in \R\,,\; T>0\,.
\end{equation}
Under the further assumption that $\mu$ lies in an almost sure spectral gap of the bulk Hamiltonian,  by taking the limit $T\to 0$ we obtain the equality of the transverse zero-temperature bulk conductivity $\sigma_{12}(b,T=0,\mu)$ and the edge zero-temperature conductance $\sigma_{\mathrm{E}}(b,T=0,\mu)$, i.e.
\begin{equation}\label{cond-equality}
\sigma_{12}(b,0,\mu) =\sigma_{\mathrm{E}}(b,0,\mu) .
\end{equation}
As a by-product of our main result, in Appendix \ref{appendix:Kubo} we extend for the pure Landau-Hamiltonian the bulk-edge correspondence of  transport coefficients also to the case of positive temperatures. In Section \ref{sec:physicsT} we discuss how to possibly extend this result to more general settings. Furthermore, in Appendix \ref{subsec:Discrete} we briefly sketch how to apply our strategy to tight-binding models, using both position operators and switch functions.

Regarding the development of new mathematical techniques, we extend for the first time the gauge covariant magnetic perturbation theory \cite{CorneanNenciu1998, Nenciu02, CorneanNenciu09, Cornean10} to non-relativistic magnetic Schr\"odinger operators defined on infinite domains with boundary.  

\subsection{The setting} 

Each electron moves in a $\mathbb{Z}^2$-periodic scalar potential $V$  under the influence of a $\mathbb{Z}^2$-periodic magnetic field which is orthogonal to the plane. If the flux per unit cell of the periodic magnetic field is zero, then it is generated by a periodic magnetic potential; otherwise, the magnetic potential can be divided in two components: one, $\mathcal{A}$, that is periodic, and another one, $A$, that has a linear growth. We assume that $V:\R^2\to\R$ and  $\mathcal{A},A:\R^2\to\R^2$ are smooth.   Moreover, we include the presence of a disordered background. Let $\nu$ be a probability measure on $[-1,1]$, we define $\Par=[-1,1]^{\Z^2}$ and we consider $\Pro=\bigotimes_{\Z^2} \nu$ to be a probability measure on $\Par$. We denote by $\E$ the expectation on the probability space $(\Par,\Pro)$. For every $\omega \in \Par$ we define a random potential $V_{\omega}$, by 
$$
V_{\omega}(x)=\sum_{\gamma \in \Z^2} \omega_\gamma u(x-\gamma)
$$
where $u \in C_0^{\infty}(\R^2)$. By construction, $\{\omega_i\}_{i \in \Z^2}$ is a family of independent identically distributed (with probability distribution $\nu$) random variables taking values in $[-1,1]$. Then, the bulk dynamics in $L^2(\mathbb{R}^2)$
is described by the Hamiltonian
\begin{align}\label{Gizas}H_{\omega,b}=\frac{1}{2}\left(-\iu \nabla - \mathcal{A} - b A \right)^2+V+V_{\omega},
\end{align}
where $A$ is the magnetic potential of a constant unit magnetic field perpendicular to the plane,  which in the Landau gauge is given by $A(\x):=\left(-x_2,0\right)$, and $b:=-e \mathfrak{B}$ is a real parameter representing the strength of the constant magnetic field $\mathfrak{B}$.

\tikzset{->-/.style={decoration={
			markings,
			mark=at position .57 with {\arrow{>}}},postaction={decorate}}}

\tikzset{-<-/.style={decoration={
			markings,
			mark=at position .57 with {\arrow{<}}},postaction={decorate}}}
\begin{SCfigure}[1.4]
    \begin{tikzpicture}[thick,scale=0.7, every node/.style={transform shape}]

	\filldraw[fill=green!25!white, draw=white, opacity=0.2] (-5,-1.5) -- (1.7,-1.5) -- (1.7,4.3) -- (-5,4.3) -- (-5,-1.5);

	\draw[->] (-5,-1.5)--(1.7,-1.5) node[right]{};
	\draw[->] (-2.5,-3.5)--(-2.5,4.3) node[above]{};

	\foreach \x in {-4,-2.5,...,1.7}{
		\foreach \y in {-3,-1.5,...,4.3}{
			\node[draw,circle,inner sep=2pt,fill] at (\x,\y) {};}
	}
	\draw[blue,thick] (-2.5,-1.5) -- (-2.5,0) -- (-1,0) -- (-1,-1.5) -- (-2.5,-1.5);
	\filldraw[fill=blue!25!white, draw=gray, opacity=0.2](-2.5,-1.5) -- (-2.5,2.2) -- (-1,2.2) -- (-1,-1.5) -- (-2,-1.5);
	\draw[thick] (-2.6,2.2) node at (-3.2,2.2){$L$} -- (-2.4,2.2);

	\draw[black,dashed, thick] (-1,-1) -- (-1,4.3) ;
	\node at (-1.75,-0.75) {$\Omega$};
	\node at (-1.75,3.2) {$\mathcal{S}_{\infty}$};
	\node at (-1.75,1.2) {$\mathcal{S}_{L}$};

	\node at (1.3,3.7) {$E$};

	\end{tikzpicture}
	\caption{The figure represents the  horizontal edge setting.  The Hamiltonian $H^{E}_{\omega,b}$ is the Dirichlet realization of $H_{\omega,b}$ on the infinite half-plane $E$, that is the pale green area in the figure.}
	\label{fig:Setting4}
\end{SCfigure}
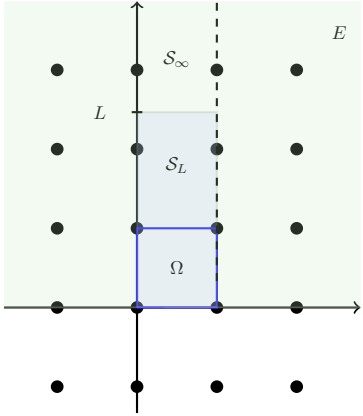

The family of selfadjoint operators $\{H_{\omega,b}\}_{\omega \in \Par}$ satisfies the following covariant relation
\begin{equation}
\label{eq:covariance2D}
\tau_{b,\gamma}H_{\omega,b}\tau^*_{b,\gamma}=H_{T(\gamma)\omega,b}, \quad \forall \, \gamma \in \Z^2,
\end{equation}
where $\tau_{b,\gamma}$ is a family of magnetic translations compatible with the Landau gauge, namely the family of operator defined for every $\gamma \in \Z^2$ by $(\tau_{b,\gamma} \psi )(x)= \e^{\mathrm{i} b \left(\gamma_1-x_1\right) \gamma_2} \psi(x-\gamma)$ for every $\psi \in \C^\infty(\R^2)$, while $T(\gamma)$ is the canonical action of $\Z^2$ on $\Theta$. Hence $(H_{\omega,b})_{\omega \in \Par}$ is ergodic with respect to the lattice $\Z^2$.

The edge dynamics is described by an Hamiltonian acting on $L^2(E)$, where
$$E:=\left\{(x_1,x_2)\in \R^2 |\;  x_2 \geq  0 \right\}.$$ 
The edge Hamiltonian is denoted by
$H^E_{\omega,b}$ and equals the Dirichlet realization of $H_{\omega,b}$ in $E$. 

It is straightforward to check that  $(H^E_{\omega,b})_{\omega \in \Par}$ is still ergodic with respect to the one-dimensional lattice generated by the vector $(1,0)$, thus in the presence of a boundary this  family is only $x_1$-covariant, in the sense that
\begin{equation}
\label{eq:covariance1D}
\tau_{b,\gamma}H^E_{\omega,b}\tau^*_{b,\gamma}=H^E_{T(\gamma)\omega,b} \qquad \forall \, \gamma=(\gamma_1,0) \in \Z^2 \,.
\end{equation}

Under the previous hypothesis, we have that, for every $\omega \in \Par$, $H_{\omega,b}$ is essentially selfadjoint on $C^\infty_0(\R^2)$, while $H^E_{\omega,b}$ is essentially self-adjoint on  $C^{\infty}(E)$ (up to the boundary) with a Dirichlet boundary condition at $x_2=0$ and with compact support. Furthermore, both operators are bounded from below and we denote by $\Sigma(b)$, and $\Sigma_{E}(b)$ respectively, their almost sure spectrum. Note that $\inf \Sigma(0) \leq \inf \Sigma(b)\leq \inf \Sigma_E(b)$ due to the diamagnetic inequality and the min-max principle.

\subsection{General bulk-edge correspondence at positive temperature}

We are now ready to state our main result regarding a general correspondence between bulk and edge quantities that holds true at any positive temperature.

\begin{theorem}\label{thm-positive}Let $\Omega=[0,1]\times [0,1]\subset \R^2$ and denote by $\chi_{\Omega}$ its indicator function. Let also $\chi_{L}$ be the indicator function of the strip $\mathcal{S}_{L}:=[0,1]\times [0,L]$, $L\geq 1$.  
Let $F$ be any real valued function whose restriction to $[\inf \Sigma(0),\infty)$ can be extended to a Schwartz function.  Then:

\begin{enumerate}[label=(\roman*)]
\item  Both $\chi_{\Omega} F(H_{\omega,b})$ and $\chi_{L}  F(H_{\omega,b}^E)$ are trace class and we may define: 
\begin{equation*}
\rho_{L,\omega}(b):=\frac{1}{L} \Tr \left(\chi_{L}  F(H_{\omega,b}^E)\right),\quad B_F(b):=\mathbb{E}\left({\rm Tr}\big (\chi_{\Omega} F(H_{\dotw,b})\big )\right).
\end{equation*}
Also, $B_F(\cdot)$ and $\rho_{L,\omega}(\cdot)$ are  differentiable  and
\begin{equation}\label{hc1}
\lim_{L\to\infty}\rho_{L,\omega}(b)=B_F(b),\quad \lim_{L\to\infty}\frac{\di\rho_{L,\omega}}{\di b}(b)=\frac{\di B_F}{\di b}(b)\quad\mbox{for almost all $\omega\in \Par$}\,.
\end{equation}

\item 
Let $g\in C^1([0,1])$ be any real function such that $g(0)=1$ and $g(1)=0$. Define $\widetilde{\chi}_L(\x):= \chi_{L}(\x) g(x_2/L)$. Then:
\begin{equation}\label{ar1}
    \frac{\di B_F}{\di b}(b)= -\lim_{L\to \infty} \mathbb{E}\left(\Tr  \left( \widetilde{\chi}_{L}\iu \left[H_{\dotw,b}^E,X_1\right]  F'(H_{\dotw,b}^E)\right)\right).
\end{equation}
\end{enumerate}
\end{theorem}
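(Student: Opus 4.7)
The plan is to combine the Helffer--Sj\"ostrand functional calculus with a careful boundary-term analysis. Part~(i) gives $B_F'(b) = \lim_{L\to\infty}\rho'_{L,\omega}(b)$ almost surely, and applying Helffer--Sj\"ostrand to $F(H^E_{\omega,b})$ together with the Landau-gauge identity $\tfrac{\di H^E_{\omega,b}}{\di b} = X_2\pi_1 = \pi_1 X_2$ --- where $\pi_1 = \iu[H^E_{\omega,b},X_1]$ is the velocity along the edge, and the two factors commute because $[\pi_1,X_2]=0$ --- yields
\[
\Tr\bigl(\chi_L\,\tfrac{\di F(H^E_{\omega,b})}{\di b}\bigr) \;=\; \int_\C \bar\partial\tilde F(z)\,\Tr\bigl(\chi_L R(z)\,X_2\pi_1\,R(z)\bigr)\,\tfrac{\di A(z)}{\pi},
\]
with $R(z)=(H^E_{\omega,b}-z)^{-1}$ and $\tilde F$ an almost-analytic extension. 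The resolvent identity $R(z)\chi_L = \chi_L R(z) - R(z)[H^E_{\omega,b},\chi_L]R(z)$ splits the integrand into a ``direct'' piece $\Tr(\chi_L F'(H^E_{\omega,b})X_2\pi_1)$ and a ``boundary'' remainder $\int_\C \bar\partial\tilde F(z)\Tr(X_2\pi_1 R(z)[H^E_{\omega,b},\chi_L]R(z)^2)\,\di A(z)/\pi$.

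The direct piece, divided by $L$, vanishes as $L\to\infty$: the integrand $\chi_L(x)\,x_2\,(F'(H^E_{\omega,b})\pi_1)(x,x)$ is effectively supported near the edge $x_2=0$, because bulk Landau-like states carry zero mean velocity and so $(F'(H^E_{\omega,b})\pi_1)(x,x)\to 0$ as $x_2\to \infty$, yielding a trace of order $O(1)$ and hence $O(L^{-1})$ after normalisation. The entire limit $B_F'(b)$ therefore comes from the boundary remainder, in which $[H^E_{\omega,b},\chi_L]$ is supported on $\partial\mathcal{S}_L$. The Dirichlet condition at $x_2=0$ kills the contribution there; the lateral pieces at $x_1=0,1$ are $O(1)$ by $x_1$-covariance~\eqref{eq:covariance1D} combined with exponential off-diagonal resolvent decay, so they are negligible after division by $L$; only the top boundary at $x_2=L$ produces an $O(L)$ contribution (the factor $X_2$ evaluates to $L$ there) that survives the $L^{-1}$ normalisation. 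A careful reorganisation of this surviving piece, via integration by parts in $x_2$ against the smoothing profile $g$ with $g(0)=1$ and $g(1)=0$ in order to cancel the artificial upper cutoff and expose the true edge-current structure, converts it into $-\mathbb{E}\Tr(\widetilde{\chi}_L\pi_1 F'(H^E_{\dotw,b}))$. The final $L\to\infty$ limit and the passage from spatial trace to expectation follow from $x_1$-ergodicity and dominated convergence.

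The main obstacle is to carry out the above rigorously given that the position operator $X_2$ is unbounded on the strip (growing up to order $L$) and that the boundary commutator $[H^E_{\omega,b},\chi_L]$ is distributional for the sharp indicator $\chi_L$. Absolute norm bounds diverge in $L$; instead one must rely on exponential off-diagonal decay of the resolvent kernel $R(z)(x,y)$ to produce trace-norm estimates uniform in $L$ for products like $X_2\pi_1 R(z)[H^E_{\omega,b},\chi_L]R(z)^2$. Providing such off-diagonal decay for the half-plane operator $H^E_{\omega,b}$ is exactly the content of the paper's extension of gauge-covariant magnetic perturbation theory to operators on domains with boundary, and leveraging these estimates to control the distributional sharp-cutoff commutators (e.g.\ by a smooth mollification of $\chi_L$ followed by a limiting argument) is the central analytic work of the proof.
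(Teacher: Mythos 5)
Your overall route---differentiating $\Tr(\chi_L F(H^E_{\omega,b}))$ in $b$ through Helffer--Sj\"ostrand using $\partial_b H^E_{\omega,b}=X_2\,\iu[H^E_{\omega,b},X_1]$ and then isolating the surviving contribution as a regularized edge current---is the same thermodynamic strategy as the paper's (Steps 2 and 3 of Section \ref{sect6}), but two of your key steps fail as stated. First, the naive perturbation formula $\partial_b R(z)=-R(z)X_2\pi_1R(z)$ and, more importantly, its remainder cannot be controlled by off-diagonal resolvent decay alone: the perturbation $H_{b+\epsilon}-H_b=\epsilon X_2P_1+\tfrac{\epsilon^2}{2}X_2^2$ grows on the \emph{diagonal}, so the second-order remainder of the naive expansion carries factors of $X_2$ of size up to $L$ on the strip and is not $O(\epsilon^2)$ uniformly in $L$ after the $L^{-1}$ normalization. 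The paper's cure is not a trace-norm estimate but the gauge-covariant phase $\varphi_2(\x,\y)=(y_1-x_1)y_2$, which replaces the linearly growing vector potential by $A_1(\x-\y)$, i.e.\ by the difference $y_2-x_2$, which \emph{is} tamed by off-diagonal decay; this produces \eqref{eq:expansionNL} with a remainder $O(\epsilon^2)$ uniform in $L$ and $\omega$. That uniformity is also what proves the second limit in \eqref{hc1}, i.e.\ the interchange of $\di/\di b$ with $L\to\infty$; your proposal assumes all of part (i), although it is part of the statement and contains the hardest analytic point.

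Second, your claim that the ``direct piece'' $L^{-1}\Tr\big(\chi_LF'(H^E_{\omega,b})X_2\pi_1\big)$ vanishes because $\big(F'(H^E_{\omega,b})\pi_1\big)(\x;\x)\to0$ as $x_2\to\infty$ is false. The diagonal current density does not decay pointwise; it converges to the $\Z$-periodic bulk density $j_1^B$, which is oscillatory with zero average over each period but not zero (Lemma \ref{AR2} and Figure \ref{fig:edgecurrent}). Weighted by $x_2$ this gives $\int_0^Lx_2\,j_1^B(x_2)\,\di x_2=L\int_0^1x_2\,j_1^B(x_2)\,\di x_2+O(1)$, an $O(L)$ contribution surviving the normalization; it is exactly the term $-\int_\Omega x_2K_b(\x)\,\di\x$ of \eqref{AG1}, it is the reason the profile $g(x_2/L)$ must appear in \eqref{ar1}, and Lemma \ref{CurrentIndipG} is needed to show the answer is independent of $g$. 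In your accounting this bulk term is silently dropped, and the role you assign to $g$ (cancelling the upper cutoff) does not match the actual mechanism. Relatedly, the lateral pieces of $[H^E_{\omega,b},\chi_L]$ are supported on segments of length $L$, hence a priori $O(L)$ rather than $O(1)$; their treatment requires the covariance/partial-cyclicity argument of Lemma \ref{lemma:cycle}, not just kernel decay. The paper avoids sharp-cutoff commutators altogether: the weight $y_2-x_2$ coming out of the gauge-covariant expansion is converted to $y_2-L$ by the exact trace identities \eqref{dc43}, from which the profile $1-y_2/L$ emerges directly.
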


\vspace{0.2cm}

\medskip

The quantity $B_F(b)$ is a standard generalization of the integrated density of states for the bulk Hamiltonian and, in the same way, the quantity $\rho_{L,\omega}(b)$ can
be interpreted as a generalized integrated density of states of the edge Hamiltonian corresponding to a strip of
width $1$ and height $L$, which starts at the edge and goes far up into the bulk, see Figure \ref{fig:Setting4}. The trace class properties of $\chi_{\Omega} F(H_{\omega,b})$ and $\chi_{L}  F(H_{\omega,b}^E)$ are just a consequence of the localization of the integral kernels of $F(H_{\omega,b})$ and $F(H_{\omega,b}^E)$, see Lemma \ref{lemma:IntTrClassEstimateF}. Moreover, a standard application of Birkhoff's ergodic theorem allows to replace the disorder by a spatial average almost surely, see for example \eqref{BFBirkhoff} and \eqref{current ergodic}.  

Then, while the first limit in \eqref{hc1} is a consequence of the locality of the elliptic operators we work with, the second limit in \eqref{hc1} is more subtle because it requires a careful analysis and control of the magnetic dependence of $\rho_{L,\omega}$, which we do by extending the gauge covariant magnetic perturbation theory to operators defined on infinite domains with boundary. Finally, \eqref{ar1} connects the magnetic derivative of the bulk generalized integrated density of states with the total edge current. The physical interpretation of Theorem \ref{thm-positive} in terms of bulk and edge quantities is discussed in Section \ref{sec:physics}, while the definition and role of the total edge current and bulk magnetization is thoroughly discussed in Section \ref{sec:currentDef} and Section \ref{sec:magnetization} respectively.  

\begin{remark}
The results of Theorem \ref{thm-positive} still hold true, with essentially the same proof, in the case where the edge Hamiltonian is perturbed by a smooth potential $W_\omega$ supported in a finite strip near the edge, that is $\textrm{supp}(W_\omega) \subseteq \R \times [0,d]$, $d>0$. More precisely, let $H^{E,W}_{\omega,b}=H^E_{\omega,b} + W_\omega$ be the perturbed edge Hamiltonian densely defined on $L^2(E)$ with Dirichlet boundary condition at $x_2=0$. We assume that $(H^{E,W}_{\omega,b})_{\omega \in \Par}$ is still ergodic on the one-dimensional lattice generated by $(1,0)$. Then for all $g$ satisfying the assumptions listed in Theorem \ref{thm-positive} we have that 
\begin{equation}\label{potentialEQ}
   \lim_{L\to \infty} \mathbb{E}\left(\Tr  \left( \widetilde{\chi}_{L}\iu \left[H_{\dotw,b}^{E,W},X_1\right]  F'(H_{\dotw,b}^{E,W})\right)\right)=  \lim_{L\to \infty} \mathbb{E}\left(\Tr  \left( \widetilde{\chi}_{L}\iu \left[H_{\dotw,b}^E,X_1\right]  F'(H_{\dotw,b}^E)\right)\right)\,,
\end{equation}
which generalizes \cite[Theorem 1]{CombesGerminet} to positive temperature.
Physically \eqref{potentialEQ} means that the total edge current is independent of the potential landscape near the edge. This is important,  since in real samples  the interaction between the electrons leads to variations of the density near the edge resulting in a complicated  self-consistent landscape of the electro-chemical potential, see e.g.\ \cite{weis} and references therein. 
In particular, the microscopic edge current density is very sensitive to these microscopic details as well as to the exact location of the edge, while the total edge current is not. We take this as an indication that while the precise form of edge-states of specific  one-body Hamiltonians is of mathematical interest, it does not play a major role in the physical explanation of the universality of edge currents. 
\end{remark}

\begin{remark}
Our results in  \eqref{hc1} are related to the thermodynamic limit for the Landau diamagnetism  \cite{AngelescuNenciuBundaru,HeSj, Cornean00, Resta2010, BS}. They are also connected with the analysis of a higher order magnetic response of the bulk system, which consists of proving asymptotic series in $b-b_0$ of the type 
$$B_F(b)\sim B_F(b_0)+\sum_{n\geq 1} (b-b_0)^n \Xi_n(b_0).$$
In this kind of expansions, our gauge covariant magnetic perturbation theory plays a crucial role and has been adapted by physicists to various discrete bulk models. A typical physical application is the study of the zero-field orbital susceptibility \cite{BS, Savoie}, but also higher order expansions are recently considered to be of physical interest \cite{Haldane}.    
\end{remark}

\begin{remark}
\label{rmk:genericUnitcells}
 In Theorem~\ref{thm-positive},  
 $(H_{\omega,b})_{\omega \in \Par}$ is ergodic with respect to the lattice $\Z^2$. However, the results of Theorem~\ref{thm-positive} hold true for a generic two-dimensional lattice $\Gamma \subset \R^2$ generated by two linearly independent vectors ${\bf a_1},{\bf a_2}$ (where we assume without loss of generality that ${\bf a_1}=(a_{11},0)$ with $a_{11}>0$ and ${\bf a_2}=(a_{21},a_{22})$ with $a_{22}>0$), provided that one makes the following changes:
\begin{enumerate}
    \item The fundamental cell $\Omega$ is replaced by the fundamental cell of the new lattice $\Gamma$, that is $\Omega':=\{\x \in \R^2 \, | \, \x=\beta_1 {\bf a_1} + \beta_2 {\bf a_2}, \; \beta_1,\beta_2 \in [0,1] \}$. Also, the set $\mathcal{S}_L$ has to be replaced by $\mathcal{S}_L':=\{\x \in \R^2 \, | \, \x=\beta_1 {\bf a_1} + \beta_2 (0,a_{22}), \; \beta_1\in [0,1], \; \beta_2\in [0,L] \}$.
    \item The function $\chi_{\Omega}$ is replaced by $\frac{1}{|\Omega'|}\chi_{\Omega'}$, where $|\Omega'|$ denotes the area of $\Omega'$.
    \item The function $\widetilde{\chi}_{L}$ has to be replaced by $\chi_{\mathcal{S}_L'}(\x) g(x_2/(La_{22}))$. 
\end{enumerate}
In particular, this implies that our main result \eqref{MainEquation} holds true also for more general half-spaces of the form $E_{(\alpha_1,\alpha_2)}:=\left\{ (x_1,x_2) \in \R^2 \, | \,   x_2  \geq  \alpha_1 x_1 + \alpha_2\right\}$ with $(\alpha_1,\alpha_2)\in \mathbb{Q}\times\mathbb{R}$, {\it i.e.} when the cut defining the edge Hamiltonian follows any shifted crystallographic plane. A detailed presentation of such generalization involves a careful use of unitary rotation maps and it will be provided elsewhere.
\end{remark}

\subsection{Physical implications of Theorem \ref{thm-positive}}
\label{sec:physics}
In this section we give a physical interpretation of Theorem \ref{thm-positive}. The more mathematically inclined readers can skip this part on their first read.

Depending on the precise form of $F$, equation~\eqref{MainEquation} acquires different physical meanings, but in all cases,  $B_F$ is a thermodynamic bulk density and the right-hand side of \eqref{MainEquation} is some form of an edge current.

The most important situation is the case in which $F=F_\mathrm{GCP}(x):=-T \ln \big (1+ \e^{-(x-\mu)/T}\big )$ with chemical potential
 $\mu\in\R$ and temperature $T>0$. According to   standard equilibrium quantum statistical mechanics, 
 the bulk quantity  $-B_{F_\mathrm{GCP}}(b)$ equals the grand canonical pressure $p^B(b,T,\mu)$. The regularity properties of this function in terms of $b$ and $\mu$ were thoroughly studied in \cite{BS}. In particular, it is $C^2$ in $b$ and $\mu$ when $T>0$. Its partial derivatives define the (orbital) magnetization $m^B(b,T,\mu) := \partial_{\bb}\, p^B(b,T,\mu)$ (here $\bb=-b/e$) and the particle density $n^B(b,T,\mu):=\partial_\mu\, p^B(b,T,\mu) $.
 
 We notice that  $F_\mathrm{GCP}'(x)=\left(\e^{(x-\mu)/T}+1\right)^{-1} = F_\mathrm{FD}(x)$ is the Fermi-Dirac distribution and it turns out that 
 $$n^B(b,T,\mu)=\partial_{\mu}\, p^B(b,T,\mu)=B_{F_\mathrm{FD}}(b).$$
 
 Differentiating $-B_{F_\mathrm{GCP}}(b)$ with respect to  $\bb=-b/e$  and multiplying both sides with $e$, the new left-hand side of \eqref{MainEquation}  becomes the magnetization while the new right-hand side  becomes the total edge  charge-current $I^E_1(b,T,\mu)$ (see also Section \ref{sec:currentDef})
 in  the   Gibbs state $F_\mathrm{FD}(H_{\omega, b}^E)$.  In conclusion, if  $F=F_\mathrm{GCP}$, our main equation~\eqref{MainEquation} 
 leads to  \eqref{mequalsI}. This relation together with \eqref{hc1} is also connected with the well-known problem of the thermodynamic limit of Landau diamagnetism \cite{AngelescuNenciuBundaru,HeSj, Cornean00, BS} and diamagnetic currents \cite{MacrisMartinPule, Kunz}. Formula \eqref{mequalsI} extends similar relations derived in \cite{MacrisMartinPule, Kunz} for the case of pure Landau Hamiltonians with  Boltzmann partition functions,  {\it i.e.}  $F(x)=e^{-x/T}$. There, the authors can only handle a pure exponential function because their approach is based on the Feynman-Kac formula applied to magnetic Gibbs semigroups.

 \smallskip

Differentiating with respect to $\mu$ in formula \eqref{mequalsI}, followed by a zero-temperature limit and coupled with the well-known St\v reda formula, leads to our formula \eqref{cond-equality}. This provides a new alternative proof to the bulk-edge correspondence between the zero-temperature bulk conductivity and edge conductance, as it is explained in Section \ref{sec:ZeroT}. However, in order to understand the implications of \eqref{mequalsI} for the positive temperature quantum transport coefficients, we have to do a preliminary analysis on the bulk magnetization and the total edge current.

\subsubsection{The bulk magnetization}
\label{sec:magnetization}
Assume for simplicity that we only have a scalar potential $V$ that is smooth and $\Z^2$-periodic and that $\mathcal{A}=0$. Then, consider $H_{\Lambda_\ell}(b)=2^{-1}(-i\partial_1+bx_2)^2-2^{-1}\partial_2^2+V$ with Dirichlet boundary conditions in some big square box $\Lambda_\ell$ centered at the origin and with size length $\ell$. Let  $F_\mathrm{GCP}(x)=-T \ln \big (1+ \e^{-(x-\mu)/T}\big )$ with chemical potential
 $\mu\in\R$ and temperature $T>0$, and $F_{\rm FD}(x)=F_\mathrm{GCP}'(x)=1/(1+ \e^{(x-\mu)/T}\big )$. The grandcanonical magnetization can be written, up to physical constants, as a thermodynamic limit of the form 
\begin{align}
\label{eq:bulkM}
& m^B(b,T,\mu)=\lim_{\ell\to \infty} \frac{1}{|\Lambda_\ell|}{\rm Tr} \Big ( (\partial_b H_{\Lambda_\ell}(b)) F_{\rm FD}(H_{\Lambda_\ell}(b))\Big ) \\
&\approx\frac{1}{|\Lambda_L|}{\rm Tr} \Big ( (\partial_b H_{\Lambda_L}(b)) F_{\rm FD}(H_{\Lambda_L}(b))\Big )=-\frac{1}{|\Lambda_L|}\int_{\Lambda_L}\mathrm{d}\x \;  x_2  \Big ( P_1(b)\; F_{\rm FD}(H_{\Lambda_L}(b))\Big )(\x,\x),\quad L\gg 1.
\end{align}
Using the gauge covariance of the bulk Hamiltonian defined on the whole space, it is natural to expect that the expression of the magnetization contains two terms \cite{ThonhauserCeresoliVanderbiltResta,Resta2010}: one term which involves a bulk contribution coming from $P_1(b)F_{\rm FD}(H(b))$ defined on the whole space, that we call $m^{B, {\rm circ}}(b,T,\mu)$, and one extra term due to the presence of extended edge states and the fact that the  multiplication operator with $x_2$ is of order $L$ near the boundary of $\Lambda_L$, that we call $m^{B, {\rm res}}(b,T,\mu)$. Since the bulk magnetization, being a bulk quantity, can be expressed solely in term of the bulk Hamiltonian, we have that the edge states contribution to the thermodynamic limit can {\it also} be expressed by using the bulk Hamiltonian. This fact can be explicitly illustrated for pure Landau Hamiltonians, see Appendix \ref{appendix:Kubo}. In the discrete case, under the extra assumption that the spectrum of the bulk Hamiltonian is composed by $M$ simple Bloch bands $\{E_{l}(k)\}^M_{l=1}$, it has been shown in \cite[Corollary 4]{SchulzBaldesTeufel13} and \cite[Corollary 3]{StiepanTeufel} that one can split the bulk magnetization into two terms
\begin{equation}
\label{eq:magnetizationSplitting}
\begin{aligned}
m^B(b,T,\mu)=m^{B, \rm circ}(b,T,\mu) + m^{B, \rm res}(b,T,\mu)\\
m^{B, \rm circ}(b,T,\mu):=  \sum_{l=1}^{M} \int_{\mathbb{B}_b} \frac{\mathrm{d} k}{(2 \pi)^{2}}F_{ \rm FD}\left(E_{l}(k)\right) R_{1, 2}^{(l)}(k) \\
m^{B, \rm res}(b,T,\mu):= \sum_{l=1}^{M} \int_{\mathbb{B}_b} \frac{\mathrm{d} k}{(2 \pi)^{2}} F_{\rm GCP}(E_{l}(k)) \Omega_{1, 2}^{(l)}(k)
\end{aligned}
\end{equation}
where $\mathbb{B}_b$ is the enlarged magnetic Brillouin zone, $R_{1, 2}^{(l)}(k)$ is associated with the Rammal-Wilkinson tensor and $\Omega_{1, 2}^{(l)}(k)$ is the Berry curvature, see \cite{SchulzBaldesTeufel13} for a precise statement. We expect that a similar splitting also holds true in the continuum case, but the extension to the continuum setting is not trivial and we postpone its analysis to a future work \cite{CorneanMoscolariTeufelFuture}. 

By comparing the thermodynamic definition of the grandcanonical magnetization and the expression \eqref{eq:magnetizationSplitting} for the magnetization, we deduce that the \emph{residual} part of the magnetization, $m^{B, \rm res}(b,T,\mu)$, is the one associated with the edge states, hence it is the one relevant for the quantum Hall effect. Indeed, at zero temperature the Hall conductivity satisfies (see for example \cite{Resta2010}) 
\begin{equation}
\label{eq:SigmaM}
\sigma_H(b,0,\mu)=-e\partial_\mu m^{B, \rm res} (b,0,\mu).
\end{equation}
 
For the pure Landau Hamiltonian, see \cite{CorneanNenciuPedersen06} and Appendix \ref{appendix:Kubo}, we have that formula \eqref{eq:SigmaM} is valid also at positive temperature. The formal semiclassical result of \cite{XiaoYaoFangNiu,XiaoChangNiu} (see also the recent book \cite{Vanderbilt}) suggests that \eqref{eq:SigmaM}, when $\sigma_H$ is the antisymmetric part of the conductivity tensor, should be valid also at positive temperature under specific assumptions and in suitable limiting regimes, rigorous results in this direction are in preparation \cite{DeNittisMoscolariPolo}. 
Furthermore, notice that in \cite{XiaoChangNiu} the splitting of the magnetization is discussed together with the relation of $m^{B,{\rm res}}$ to the edge states by using wave packet dynamics. Instead, the term $m^{B, \rm circ}(b,T,\mu)$ is associated to the internal motion of the wave packet and, as one can see in the Landau example, this is due to the fact that electrons in the Landau levels carry a certain angular momentum. This can be also explained by analyzing the semiclassical \emph{circular} electronic orbits. \\Heuristically, we can consider $m^{B, \rm circ}(b,T,\mu)$ to be the density of the angular momentum of the bulk system. However, as this analysis suggests, the total magnetization of a quantum system is not equal to the density of angular momentum but there is some extra term due to the appearance of edge states in the thermodynamic limit procedure. This extra term is given by $m^{B, \rm res}(b,T,\mu)$, which cannot be understood in terms of any (local) density, in view of the delocalized nature of the edge states.

\subsubsection{The total edge current}
\label{sec:currentDef}
Consider $F$ to be the Fermi-Dirac distribution, namely $F_\mathrm{FD}(x):=\left(\e^{(x-\mu)/T}+1\right)^{-1}$.  But the  same arguments apply for any function $F$ satisfying the hypothesis of Theorem \ref{thm-positive}.  According to Lemma~\ref{AR2}, the so-called bulk ``averaged equilibrium current per unit area" vanishes when we average over any translated unit cell $\Omega+{\bf a}$, ${\bf a}\in\R^2$:
$$
\mathbb{E}\left( \Tr \left( \chi_{\Omega+{\bf a}} \iu \left[H_{\dotw,b },X_1\right] F_\mathrm{FD}(H_{\dotw,b})\right)\right)=0.
$$  
The main contribution to the right-hand side of \eqref{MainEquation} is thus expected to come from a particle-current flowing near the edge. However, the current density operator $  \iu [H_{\omega,b }^E,X_1]  F_{FD}'(H_{\omega,b }^E)$  
 restricted to the strip $[0,1]\times [0,\infty)$ is not trace class by itself, but only when subtracting the oscillating contributions in the bulk.
 
 Indeed, let us define the averaged density of the bulk and edge particle-current by evaluating the integral kernels of the corresponding current density operators on the diagonal and then averaging over one period in the direction parallel to the edge, 
 \[
 j_1^{B/E}(x_2) := \int_0^1 \,\mathrm{d} x_1\; \E\left(\iu \left[H^{\;\;/E}_{\dotw,b },X_1\right]  F_\mathrm{FD}(H^{\;\;/E}_{\dotw, b })\right)((x_1,x_2),(x_1,x_2))\,.
 \]
Clearly, $j_1^B$ is $\mathbb{Z}$-periodic and, by construction, $j_1^E$ is supported on $\{x_2\geq 0\}$. The vanishing of the averaged equilibrium current per unit area implies  that $j_1^B$ is zero when averaged over any interval of length one. 
By exploiting geometric perturbation theory, see Appendix \ref{appendix1}, it is possible to show \cite{MoscolariStottrup} that $j_1^E(x_2) -j_1^B(x_2) = \mathcal{O}(x_2^{-\infty})$ for $x_2\to +\infty$, hence $j_1^E -j_1^B$ is integrable. Thus   $j_1^E$  is integrable if and only if $j_1^B\equiv0$.
 The general situation is sketched in Figure~\ref{fig:edgecurrent}.  
 
 \begin{figure}
	\color{black}
	\begin{tikzpicture}[scale=1]
	
	\draw[thick,->] (-1,0)--(10,0) node[anchor= north ]{$x_2$};
	\draw[thick,->] (0,-1)--(0,5);
	\draw[thick] (7,-.1) -- (7,.1) node[anchor= south ]{$L$};
	\draw[thick] (-.1,1) -- (.1,1) node[anchor= east ]{\hspace{-5mm}$1$};
	\draw[purple] (0,0) -- (7,1) -- (10.5,1) node[anchor=south]{$1-g(\cdot/L)$};
	\draw[purple,dashed] (0,1) node[anchor=south west]{$g(\cdot/L)$} -- (7,0) -- (10.5,0);

	\draw[thick, blue] (0,4.5)  node[anchor=north west]{\quad$j_1^E$} .. controls (2.7,-2) .. (3.5,0)  .. controls (3.9,1.3) .. (4.5,0)  .. controls (4.9,-1) .. (5.5,0) .. controls (6,.9) .. (6.5,0)  .. controls (7,-.85) .. (7.5,0)  .. controls (8,.8) .. (8.5,0)  .. controls (9,-.8) .. (9.5,0)  .. controls (10,.8) .. (10.5,0) ;	
	\draw[red, dashed] (-1.5,0).. controls (-1,-.8) ..  (-0.5,0) node[anchor=south east]{\;$j_1^B$}  .. controls (0,.8)  ..  (0.5,0)  .. controls (1,-.8) .. (1.5,0).. controls (2,.8) .. (2.5,0).. controls (3,-.8) .. (3.5,0) .. controls (4,.8) .. (4.5,0) .. controls (5,-.8) .. (5.5,0) .. controls (6,.8) .. (6.5,0)  .. controls (7,-.8) .. (7.5,0)  .. controls (8,.8) .. (8.5,0)  .. controls (9,-.8) .. (9.5,0)  .. controls (10,.8) .. (10.5,0);

	\end{tikzpicture}
	\caption{The averaged density of the edge particle-current $j_1^E$ (blue curve) as a function of the distance $x_2$ to the edge approaches the oscillatory  persistent averaged density of the bulk particle-current $j_1^B$ (dashed red curve). 
	 }
	\label{fig:edgecurrent}
\end{figure}
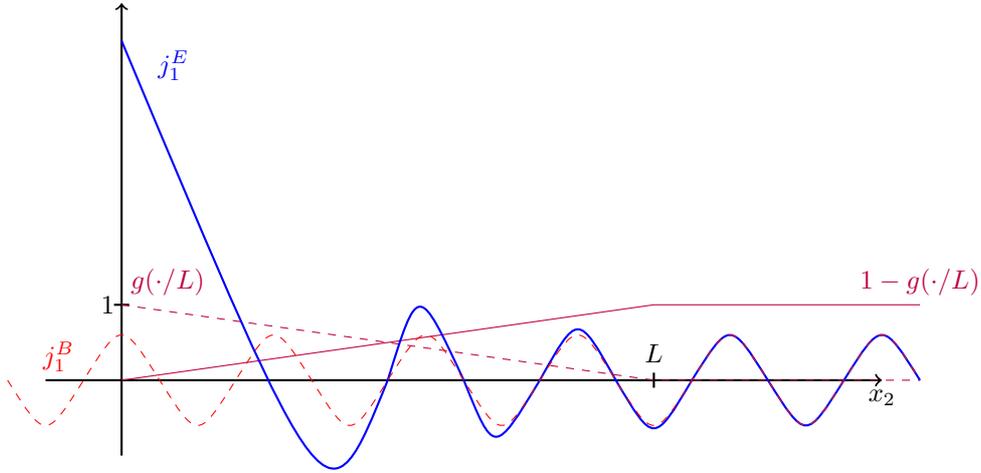
 
 It turns out that the quantity  $\int_0^\infty \left(j_1^E(x_2)- j_1^B(x_2)\right)\mathrm{d} x_2 $ is not a good definition for the total edge particle-current, since   subtracting $j_1^B$ all the way near the edge produces an uncontrolled offset. 
 A natural solution to this problem is to  subtract $j_1^B$ from $j_1^E$ only in the bulk by using a suitable cut-off function. Consider a   $C^1$-function    $g:[0,1]\to [0,1]$   with $g(0)=1$ and $g(1)=0$, and let us define the total edge charge-current as
	\begin{equation}
 \label{eq:TotalCurrent}
	I^E_1 :=-e \lim_{L\to \infty} \int_0^L \mathrm{d} x_2\; \left(j_1^E(x_2)-  \big(1- g(x_2/L)\big) j_1^B(x_2)\right)\, =
	-e \lim_{L\to \infty} \int_0^L \,\mathrm{d} x_2 \;  g(x_2/L) j_1^E(x_2) \,,
	\end{equation}
    where the last equality follows from $
	   \lim_{L\to \infty} \int_0^L \mathrm{d} x_2\;  (1- g(x_2/L))\left(  j_1^E(x_2)- j_1^B(x_2)\right) =0
	$. As we will show in Lemma~\ref{CurrentIndipG}, the value of $I^E_1$  is independent of the specific cut-off function $g$, indicating that $I^E_1$ is only encoding what is happening at the cut. 
	Two other equivalent expressions for the total edge current $I^E_1$ defined above, which could also  be considered   natural alternative  definitions, are 
\[
 I^E_1 = -e\lim_{L\to \infty} \int_L^{L+1} \mathrm{d}t  \int_0^{t}\mathrm{d}x_2 \, j_1^E(x_2) \,,
 \]
 where one  averages over the location of a sharp cutoff in the bulk, and $I_1^E = e \,J_1^E(0)$, where $J_1^E$ is the anti-derivative of $j_1^E$ that averages to zero at infinity, i.e.\   $\lim_{L\to \infty} \int_L^{L+1} J_1^E(x_2)\,\mathrm{d}x_2 =0$.
 
As already noticed in the literature \cite{CooperHalperinRuzin,XiaoYaoFangNiu,ElgartGrafSchenker,Resta2010, HaiduGummich}, not all currents are transport currents\footnote{‘‘\emph{For a quantum mechanical system in the presence of an applied magnetic field, however, there may be nonzero circulating currents even in a situation of thermodynamic equilibrium, as was noted above. We shall find it convenient to break the currents into a ‘‘transport’’ part
and a ‘‘magnetization’’ part...}’’ \cite{CooperHalperinRuzin}}. One has to make a distinction between the \emph{total} edge current and the \emph{transport} edge current. The total and transport currents differ by the so-called \emph{magnetization currents}, which are heuristically due to the the inhomogeneity of the materials and the fact that electrons might perform closed orbits that produce magnetic moments but do not contribute to transport. For this reason we define the transport {\it edge} current by subtracting the {\it bulk} circular magnetization from the {\it total} edge current:
\begin{equation}
\label{eq:ITr}
I_1^{E,{\rm tr}}:=I_1^E - m^{B,{\rm circ}}(b,T,\mu) \,.
\end{equation}

We will give an heuristic motivation to \eqref{eq:ITr} in a moment. As we explained in Section \ref{sec:magnetization}, $m^{B,{\rm circ}}(b,T,\mu)$ is associated to the density of magnetic momentum of the bulk system, hence, at a classical level, in view of Ampère's law, \eqref{eq:ITr} accounts for the subtraction of the electronic currents which induce the magnetic momentum in the bulk. 

Now let us detail our heuristic motivation of \eqref{eq:ITr}. Denote by $m^{E,{\rm circ}}(b,T,\mu)$ the density of magnetic moment of the {\it edge} system. Due to the presence of the boundary, $m^{E,{\rm circ}}(b,T,\mu)$ must be position dependent and, in view of the symmetry of the system, it should only depend on $x_2$. We have two ``boundary conditions" given by $\lim_{x_2 \to -\infty}m^{E,{\rm circ}}(b,T,\mu)(x_2) = 0$ and $\lim_{x_2 \to +\infty}m^{E,{\rm circ}}(b,T,\mu)(x_2) = m^{B,{\rm circ}}(b,T,\mu)$, which are a consequence of the fact that in the lower half-plane there is no material, and far from the cut the material can be approximated with the bulk homogeneous system. 

In the general $3d$ case, both the magnetization and the  magnetization current density are vectors, and they are related by the equation $\mathbf{j}^{E,\rm mag}:=\nabla \times \mathbf{m}^{E, \rm circ}$. 
Taking into account our particular $2d$ setting, the   magnetization    current density has only one non-zero component, parallel to the edge, and is given by
\begin{equation}
\label{eq:CurrentMag}
j^{E, {\rm mag}}_1:= \partial_2 m^{E,{\rm circ}}(b,T,\mu). 
\end{equation}
  Then, in order to compute the magnetization current that flows across a fiducial line perpendicular to the boundary we simply have to integrate \eqref{eq:CurrentMag}:
\begin{align}\label{july2022}
I_1^{E,{\rm mag}}=\int_{-\infty}^{+\infty} \mathrm{d} x_2 \; j^{E, {\rm mag}}_1(x_2) = \int_{-\infty}^{+\infty} \mathrm{d} x_2 \; \partial_2 m^{E,{\rm circ}}(b,T,\mu)(x_2) = m^{B,{\rm circ}}(b,T,\mu). 
\end{align}
By following \cite{CooperHalperinRuzin,Resta2010}, the total edge current can be split as
\begin{equation*}
I_1^{E}= I_1^{E, {\rm tr}} + I_1^{E, {\rm mag}}
\end{equation*}
which together with \eqref{july2022} it immediately implies \eqref{eq:ITr}. 

One of the main consequence of \eqref{eq:ITr} is that the edge conductance is defined by the derivative of the transport edge current with respect to the chemical potential
\begin{equation}
\label{eq:EdgeC}
\sigma_E(b,T,\mu):=-e\partial_\mu I_1^{E,{\rm tr}} = -e\partial_\mu I_1^{E} -e \partial_\mu m^{B,{\rm circ}}(b,T,\mu).
\end{equation}
Notice that \eqref{eq:EdgeC} is in accordance with the usual definition of the edge conductance whenever we are at zero temperature and the chemical potential $\mu$ belongs to an almost sure spectral gap of the bulk Hamiltonian: in such cases we have that $ \partial_\mu m^{B,{\rm circ}}(b,0,\mu)=0$, hence distinguishing between transport and total edge currents is not important. On the other hand, this distinction becomes relevant at positive temperature, as we explicitly show in the case of pure Landau Hamiltonians (see Appendix \ref{appendix:Kubo}). 

There exists another situation where subtracting the magnetization currents is crucial, even in the zero-temperature case; this happens when $\mu$ lies in a region of a mobility gap, as it has  already been recognized in \cite{ElgartGrafSchenker}. Indeed, even though the setting and the language are different, in \cite{ElgartGrafSchenker} the authors
use a definition of edge conductance similar, at least in spirit, to \eqref{eq:EdgeC}.

\medskip

\subsubsection{Bulk-edge correspondence of transport coefficients} 
\label{sec:physicsT}
 
Let us again consider $F=F_\mathrm{FD}$. Hence we have $B_{F_\mathrm{FD}}(b)=n^B(b,T,\mu)$ and Theorem \ref{thm-positive} implies:
\begin{equation}\label{sept1}
-e \frac{\partial}{\partial \bb } \,n^B(b,T,\mu)= - e^2\lim_{L\to \infty} \mathbb{E}\left(\Tr  \big \{ \widetilde{\chi}_{L}\iu \left[H_{\dotw,b}^E,X_1\right]  F_{\rm FD}'(H_{\dotw,b}^E)\big \}\right). 
\end{equation}
Using that $\partial_{\bb}n^B=\partial_\mu m^B$ together with \eqref{mequalsI} we also have that 
\begin{equation}\label{sept2}
-e\partial_\mu I_1^E(b,T,\mu)=-e \frac{\partial}{\partial \bb } \,n^B(b,T,\mu). 
\end{equation}
Formula \eqref{sept2} is a \emph{generalized} St{\v r}eda formula which is valid at any positive temperature and without any spectral assumptions. As the analysis of the paradigmatic example of the Landau Hamiltonian shows (see Appendix \ref{appendix:Kubo}), the right-hand side of \eqref{sept2} coincides with the \emph{Fermi see contribution to the transverse bulk Hall conductivity}. When $\mu$ lies in a spectral gap of the almost sure spectrum of $H_{\omega,b}$, the limit $T\searrow 0$ is further investigated in Proposition~\ref{prophc1} and Corollary~\ref{thm:Main}, where we show how the ``usual" bulk-edge correspondence of transport coefficients is recovered with an exponential rate of convergence.

Therefore, our result shows that the conventional zero-temperature bulk-edge correspondence, usually presented as an equality of certain topological indices, is just a particular manifestation of a  more general paradigm connecting bulk and edge quantities, mathematically expressed by \eqref{MainEquation}. 

\medskip

The extension of the bulk-edge correspondence of transport coefficient to positive temperature requires, at least, the knowledge of the bulk conductivity at positive temperature. As discussed in Section \ref{sec:magnetization}, a rigorous derivation and evaluation of the Kubo formula at positive temperature for quantum Hall systems is still an active field of research, some results can be found in \cite{CorneanNenciuPedersen06,AizenmanGraf,BoucletGerminetKleinSchenker, DeNittisLein}. By assuming, as it is customary in the physics literature and as it is in the pure Landau case, that
\eqref{eq:SigmaM} is valid also at positive temperature, we can obtain a positive temperature bulk-edge correspondence between quantum transport coefficient. Indeed, from \eqref{mequalsI} we have that $m^B(b,T,\mu)=I^E_1(b,T,\mu)$ and by subtracting the contribution of the magnetization current we have
\begin{equation}
    m^{B, {\rm res}}(b,T,\mu)=I^{E,{\rm tr}}_1(b,T,\mu)
\end{equation}
which is an equality between the bulk residual magnetization and the transport edge current. Finally, by taking the derivative with respect to $\mu$, we get an equality between the Hall conductivity, see \eqref{eq:SigmaM}, and the edge conductance \eqref{eq:EdgeC}. Further analysis of the positive temperature coefficient in particular cases, like the case of simple isolated Bloch bands, see for example \cite{SchulzBaldesTeufel13}, or the case of $\mu$ lying in a mobility gap, see for example \cite{ElgartGrafSchenker}, are postponed to future works \cite{CorneanMoscolariTeufelFuture}.

\subsection{The zero-temperature limit} \label{sec:ZeroT}
We now show how to recover the ``usual" zero-temperature bulk-edge correspondence for gapped systems from our general result that connects the bulk magnetization to the total edge current. Let us assume that the almost sure spectrum $\Sigma(b_0)$ of the bulk Hamiltonian $H_{\omega,b_0}$ for a certain $b_0 \in \R$  has a  gap that includes the interval   $[E_-,E_+]$   with $E_-<E_+$. As a by-product of gauge covariant magnetic perturbation theory \cite{Cornean10}, $[E_-,E_+]$ stays away from the almost sure spectrum $\Sigma(b)$ of $H_{\omega,b}$ for $|b-b_0|$ small enough. The spectral island $\sigma_0(b):=\Sigma(b) \cap (-\infty,E_-)$ is bounded and non-empty, and denote by $P_{\omega,b}$ the spectral projection onto~$\sigma_0(b)$.

\begin{proposition}\label{prophc1}
Consider any smooth function $0\leq F_0\leq 1$ which equals $1$ on $(-\infty, E_-]$ and $0$ on $[E_+,\infty)$.
Fix $\mu\in [E_-,E_+]$. For every $T>0$ we define $F_{T,\mu}(x):=F_{FD}(x)=\big ( \e^{\frac{x-\mu}{T}} +1\big)^{-1}$.  Then there exist two constants $C_1,C_2>0$ such that
\begin{align}\label{hc4}
\left| B'_{F_{T,\mu}}(b_0)-B'_{F_{0}}(b_0)\right| \leq C_1\; \e^{-C_2/T}. 
\end{align}

In particular, $\lim_{T \searrow 0}B'_{F_{T,\mu}}(b_0)$ is independent of which $F_0$ we choose with the above properties. Moreover, if $\chi_{\infty}$ denotes the indicator function of the strip $\mathcal{S}_{\infty}:=[0,1]\times (0,\infty)$, then we have:
\begin{equation}
\label{eq:ZeroTLimit}
\lim_{T \searrow 0}B'_{F_{T,\mu}}(b_0)=B'_{F_{0}}(b_0)=- \E\left({\rm{Tr}}\left\{ \chi_{\infty} \iu \left[H^E_{\dotw,b_0},X_1\right] F_0'(H^E_{\dotw,b_0}) \right\}\right) .
\end{equation}
\end{proposition}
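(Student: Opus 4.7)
The plan is to split the proof into two parts: establishing the exponential bound \eqref{hc4}, and then identifying the limit $B'_{F_0}(b_0)$ as the edge integral with $\chi_\infty$ in \eqref{eq:ZeroTLimit}. For the first part I would combine a bulk representation of $B_F$ with the gauge covariant magnetic perturbation theory to transfer the pointwise exponential smallness of $G := F_{T,\mu} - F_0$ on the bulk spectrum into smallness of the $b$-derivative of $B_G$. For the second I would apply Theorem \ref{thm-positive}(ii) together with a Combes-Thomas localization of edge states inside the bulk gap.

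For \textbf{Part 1}, by linearity of $F \mapsto B_F'$ it suffices to bound $|B'_G(b_0)|$. Since $[E_-, E_+]$ lies in an open spectral gap, the distance $\delta := \mathrm{dist}(\mu, \Sigma(b_0))$ is strictly positive even when $\mu = E_\pm$, and elementary estimates on the Fermi-Dirac function yield $|G^{(k)}(x)| \leq C_k\, T^{-k}\, e^{-\delta/T}$ for $x \in \Sigma(b_0)$. Using $F_0 \equiv 1$ on $(-\infty, E_-]$ and $F_0 \equiv 0$ on $[E_+, \infty)$, I would decompose
\[
G(H_{\omega, b}) = (F_{T,\mu} - 1)(H_{\omega, b})\, P_{\omega, b} + F_{T,\mu}(H_{\omega, b})(\Id - P_{\omega, b})
\]
and express the first summand by a Cauchy integral on a compact contour $\Gamma_-$ enclosing $\sigma_0(b_0)$ inside the analyticity strip $\{|\mathrm{Im}(z)| < \pi T\}$ of $F_{T,\mu}$, and the second by a Helffer-Sj\"ostrand representation of $F_{T,\mu}$ whose support is restricted by $\Id - P_{\omega, b}$ to $\mathrm{Re}(z) \geq E_+$. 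In both integrands the scalar factor is pointwise $O(e^{-\delta/T})$. Differentiating under the integral sign, the crucial input is that $\chi_\Omega \partial_b (z - H_{\omega, b})^{-1}|_{b_0}$ has trace-class norm uniform on the compact part of the contour and polynomially controlled in $|\mathrm{Re}(z)|$ on the Helffer-Sj\"ostrand domain, both of which are provided by the gauge covariant magnetic perturbation theory developed earlier in the paper. Assembling these bounds yields $|B'_G(b_0)| \leq C_1 e^{-C_2/T}$ for any $C_2 < \delta$.

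For \textbf{Part 2}, I would apply Theorem \ref{thm-positive}(ii) to $F_0$. The function $F_0'$ is smooth and compactly supported in $[E_-, E_+]$, which lies strictly inside the bulk spectral gap, so a Combes-Thomas estimate shows that the integral kernel of $F_0'(H^E_{\omega, b_0})$ decays exponentially in $x_2$ away from the edge; this decay survives composition with the first-order differential operator $\iu [H^E_{\omega, b_0}, X_1]$. Consequently $\chi_\infty\, \iu [H^E_{\omega, b_0}, X_1] F_0'(H^E_{\omega, b_0})$ is trace class with $\E$-integrable trace, and since $\widetilde\chi_L(\x) \to \chi_\infty(\x)$ pointwise with $0 \leq \widetilde\chi_L \leq 1$ and the kernel is exponentially localized near $x_2 = 0$, dominated convergence identifies the limit in the edge formula with the claimed $\chi_\infty$-integral.

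The main obstacle is the $b$-derivative estimate in Part 1: transferring the pointwise exponential smallness of $G$ on $\Sigma(b_0)$ into exponential smallness of $B'_G(b_0)$ requires the bounded and unbounded spectral pieces to be treated by different complex-analytic machinery, in each case coupled with the trace-class magnetic-derivative estimates from the extended gauge covariant perturbation theory. A subsidiary technical point is ensuring that the contour $\Gamma_-$ and the support of the Helffer-Sj\"ostrand $\bar\partial$ can be chosen uniformly for $b$ in a fixed neighborhood of $b_0$ where the gap persists.
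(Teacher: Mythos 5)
Your overall strategy is sound and, in Part 1, runs parallel to the paper's proof with a different packaging. The paper splits the \emph{function} as $F_{T,\mu}=F_{T,\mu}\varphi+F^{(<\mu)}+F^{(>\mu)}$ with $\varphi$ supported in the gap (so that $(F_{T,\mu}\varphi)(H_{\omega,b})=0$), applies the magnetic-perturbation formula \eqref{eq:expansionRhoF} to each Schwartz piece, and gets the exponential factor from explicit bounds on the almost-analytic extensions, namely $\sup_{z\in\mathcal D}|\bar\partial F_N^{>\mu}(z)|\le C\,|z_2|^N\langle z_1\rangle^{-N}\e^{-\delta/(4T)}$ and the analogous bound for $F_N^{<\mu}-F_{0,N}$, combined with the uniform estimate \eqref{eq:aux5}. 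You instead split by the spectral projection and use a Cauchy contour in the shrinking analyticity strip of $F_{T,\mu}$ for the bounded piece; this works, at the cost of resolvent and trace bounds that degrade like powers of $T^{-1}$ on the horizontal segments, which the exponential smallness of $F_{T,\mu}-1$ there absorbs. Both routes rest on the same nontrivial input: the gauge covariant magnetic perturbation theory giving polynomially controlled bounds for the $b$-derivative of the trace per unit volume of the resolvent. Your version requires transferring the expansion \eqref{eq:aux1} from the Helffer--Sj\"ostrand representation to the contour representation, which is routine but should be said.

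In Part 2 there is a genuine soft spot. You claim that ``a Combes--Thomas estimate shows that the integral kernel of $F_0'(H^E_{\omega,b_0})$ decays exponentially in $x_2$ away from the edge.'' Combes--Thomas controls the \emph{off-diagonal} decay of $(H^E-z)^{-1}$ for $z$ away from the spectrum of $H^E$; but $H^E_{\omega,b_0}$ generically has spectrum (edge states) filling $[E_-,E_+]$, exactly where $F_0'$ is supported, so there is no uniform Combes--Thomas bound available for the relevant $z$, and in any case off-diagonal decay does not by itself give decay of the diagonal $K(\x;\x)$ in $x_2$. The mechanism that actually produces the $x_2$-decay is the comparison with the bulk: since $\mathrm{supp}\,F_0'\subset\R\setminus\Sigma(b_0)$ one has $F_0'(H_{\omega,b_0})=0$, hence
\begin{equation*}
\chi_\infty\,\iu\left[H^E_{\omega,b_0},X_1\right]F_0'(H^E_{\omega,b_0})=\chi_\infty\left\{\iu\left[H^E_{\omega,b_0},X_1\right]F_0'(H^E_{\omega,b_0})-\chi_E\,\iu\left[H_{\omega,b_0},X_1\right]F_0'(H_{\omega,b_0})\chi_E\right\}=\mathcal I_{\omega,b_0},
\end{equation*}
and it is Proposition~\ref{AR1} (trace class of $\mathcal I_{\omega,b}$ via geometric perturbation theory, plus the $L^1$ bound \eqref{eq:currentL1} on its averaged diagonal) that supplies the integrable dominating function. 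Once you route the argument through $\mathcal I_{\omega,b_0}$, your dominated-convergence step with $\widetilde\chi_L\to\chi_\infty$ is exactly the paper's conclusion. As stated, though, the direct appeal to Combes--Thomas for $F_0'(H^E)$ would not go through.
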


\begin{remark}
Because $F_0'$ is supported in the gap of $H_{\omega,b_0}$, for every $\omega \in \Par$, we have that 
$$\chi_{\infty} \iu \left[H^E_{\omega,b_0},X_1\right] F_0'(H^E_{\omega,b_0})$$
is trace class even though the strip is semi-infinite, see the proof of Proposition \ref{prophc1} for more details on this issue. 
\end{remark}

\begin{remark}
The quantity $B_{F_0}(b_0)$ is nothing but the integrated density of states of the projection $(P_{\omega,b_0})_{\omega \in \Par}$  averaged over the random potential.  According to the St{\v r}eda formula (see e.g.  \cite{Bellissard1986,Bellissard1986II,RammalBellissard,Bellissard1992,Kellendonk,SchulzBaldesTeufel13}  and \cite{CorneanNenciuPedersen06,CorneanMonacoMoscolari2018} for an analysis using covariant magnetic perturbation theory as we do in this paper) we have: $$C_0:=2\pi B_{F_0}'(b_0)=2\pi \mathbb{E} \big (\Tr \left( \chi_{\Omega}P_{\dotw,b_0} \iu \big [ [X_1,P_{\dotw,b_0}],[X_2,P_{\dotw,b_0}]\big ] \right) \big )$$ where $C_0$ is the Chern character \cite{MarcelliMonacoMoscolariPanati,MarcelliMoscolariPanati,Bellissard1986,BES} of the family $(P_{\omega,b_0})_{\omega \in \Par}$, which is proportional to the zero-temperature transverse Hall conductivity computed in the linear response regime (see for example the recent papers \cite{DeRoeckElgartFraas,MarcelliMonaco} and references therein). Combining this with Proposition~\ref{prophc1} we arrive at: 

\begin{corollary}
\label{thm:Main}
In the gapped case as explained above:
\begin{equation}
\label{EdgeCurrent}
C_0= - 2\pi \E\left({\rm{Tr}}\left\{ \chi_{\infty} \iu \left[H^E_{\dotw,b_0},X_1\right] F_0'(H^E_{\dotw,b_0}) \right\} \right).
\end{equation}
\end{corollary}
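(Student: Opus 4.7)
The strategy is to directly combine two ingredients already established in the excerpt: the St\v reda-type identification of the Chern marker with $2\pi B'_{F_0}(b_0)$, and the zero-temperature limit identity from Proposition~\ref{prophc1}. No genuinely new analytical estimate is needed beyond what is encapsulated in those two statements; the corollary is essentially a substitution.

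First, by the St\v reda formula as recalled in the remark preceding the statement, we have
$$C_0 \,=\, 2\pi B'_{F_0}(b_0) \,=\, 2\pi\, \E\!\left(\Tr\left(\chi_\Omega\, P_{\dotw,b_0}\, \iu\big[[X_1,P_{\dotw,b_0}],[X_2,P_{\dotw,b_0}]\big]\right)\right).$$
This identification is itself a consequence of applying gauge covariant magnetic perturbation theory to the map $b \mapsto B_{F_0}(b)$, using that $F_0$ is smooth with $F_0 \equiv 1$ on $\sigma_0(b_0)$ and $F_0 \equiv 0$ on the upper part of $\Sigma(b_0)$, so that $F_0(H_{\omega,b}) = P_{\omega,b}$ for $b$ in a neighborhood of $b_0$.

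Second, Proposition~\ref{prophc1}, equation~\eqref{eq:ZeroTLimit}, gives the bulk-edge identity
$$B'_{F_0}(b_0) \,=\, -\,\E\!\left(\Tr\left\{\chi_\infty\, \iu\left[H^E_{\dotw,b_0},X_1\right] F_0'(H^E_{\dotw,b_0})\right\}\right).$$
Multiplying this by $2\pi$ and equating with the first displayed identity yields \eqref{EdgeCurrent} at once. So the entire content of the corollary is the chain
$$C_0 \,=\, 2\pi B'_{F_0}(b_0) \,=\, -\,2\pi\,\E\!\left(\Tr\left\{\chi_\infty\, \iu\left[H^E_{\dotw,b_0},X_1\right] F_0'(H^E_{\dotw,b_0})\right\}\right).$$

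The one point that still deserves verification, and which is already flagged in the remark following Proposition~\ref{prophc1}, is that the right-hand side is well-defined, i.e.\ that $\chi_\infty\, \iu[H^E_{\omega,b_0},X_1] F_0'(H^E_{\omega,b_0})$ is trace class despite $\chi_\infty$ being supported on the \emph{semi-infinite} strip $\mathcal{S}_\infty$. This follows because $F_0'$ is compactly supported inside a bulk spectral gap of $H_{\omega,b_0}$: by a Combes--Thomas estimate, together with the geometric perturbation theory of Appendix~\ref{appendix1} used to compare $H^E_{\omega,b_0}$ with its bulk counterpart as $x_2 \to +\infty$, the integral kernel of $F_0'(H^E_{\omega,b_0})$ decays exponentially away from the edge. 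Hence the trace is absolutely convergent, and the passage from $\chi_L$ to $\chi_\infty$ in the $L \to \infty$ limit of \eqref{eq:ZeroTLimit} is unproblematic. The only part that could conceivably be a real obstacle is the non-trivial direction of Proposition~\ref{prophc1} itself (the exponential bound \eqref{hc4} and its consequences), but this is already assumed available at the stage of the corollary.
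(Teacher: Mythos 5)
Your proposal is correct and follows exactly the paper's own route: the corollary is obtained by substituting the zero-temperature identity \eqref{eq:ZeroTLimit} of Proposition~\ref{prophc1} into the St\v reda formula $C_0=2\pi B'_{F_0}(b_0)$, and the trace-class issue for the semi-infinite strip is resolved precisely as you indicate, via the exponential decay of $F_0'(H^E_{\omega,b_0})$ away from the edge since $F_0'$ is supported in the bulk gap. No further comment is needed.
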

\end{remark}

As mentioned in the introduction, Corollary~\ref{thm:Main} is exactly the usual bulk-edge correspondence at zero temperature, see \cite{SchulzBaldesKellendonkRichter, KellendonkRichterSchulzBaldes, ElbauGraf,ElgartGrafSchenker,ProdanSchulzBaldes, DeNittisSchulzBaldes, KellendonkSchulzBaldes,AllridgeMaxZirnbauer}.

For completeness, in Appendix~\ref{appendix1} we sketch a simple and well-known argument (see for example \cite{CombesGerminet,GrafPorta,DeNittisSchulzBaldes}), adapted to our general magnetic setting, based on the so-called edge states and in the absence of disorder, for why the quantity on the right-hand side of \eqref{EdgeCurrent} must be an integer.

\subsection{Open questions}

\begin{enumerate}
   \item If $b_0=0$ and $H_{\omega,0}$ is time reversal invariant (i.e.\ it commutes with the anti-unitary involution given by complex conjugation), then one can prove  that $B_F(b)=B_F(-b)$. Hence, \eqref{ar1} is reduced to $0=0$ regardless of which $F$ we use. In this case, the relevant quantity is $B_F''(0)$. Moreover, if $F(x)=-T\ln(1+\e^{(x-\mu)/T})$ and $b_0=0$ then $B_F''(0)$ equals (up to a universal constant) the bulk magnetic susceptibility at zero magnetic field. An interesting open question is thus:  is there a generalized bulk-edge correspondence also in this case, which might link the bulk magnetic susceptibility with some edge current-current correlations?

    \item 
    A second open question, which seems to be mathematically quite challenging, is to allow for a larger class of magnetic Hamiltonians arising from general elliptic magnetic pseudodifferential operators, including pseudorelativistic and Dirac operators. Recently, our results have been extended \cite{CorneanMoscolariSorensen} to the case of Dirac operators with constant magnetic field and infinite-mass boundary condition, however the general case is still open.
    First, there is the boundary issue which is even more  difficult than for second order elliptic differential operators, see for example the discussion in \cite{CorneanMoscolariSorensen}. Second, the gauge covariant magnetic perturbation theory has not yet been extended to general pseudodifferential operators defined on domains with boundary. Furthermore, it would be interesting to investigate the extension of our approach to non-elliptic pseudodifferential operators, for example analysing models similar to those in \cite{GrafJudTauber}. 
    
    \item In Section \ref{sec:ZeroT} we prove that our main formula \eqref{ar1} reduces, in the zero-temperature limit, to the usual bulk-edge correspondence \eqref{eq:ZeroTLimit} when the chemical potential $\mu$ lies in a spectral gap of the almost sure spectrum of the bulk Hamiltonian. An interesting open problem concerns the zero-temperature limit when $\mu$ lies either in a region of mobility gap or in the absolutely continuous part of the bulk almost sure spectrum. While we know of no previous attempts to the zero-temperature limit in those two cases, the mobility gap case in the setting of tight-binding models at zero temperature has been discussed in \cite{ElgartGrafSchenker,BolsSchenkerShapiro}, (see also the discussion in Appendix \ref{subsec:Discrete}), where  also   the quantization of bulk conductivity and edge conductance is shown. Instead, the case of $\mu$ lying in the absolutely continuous spectrum seems to have not yet been investigated. 
    
    \item  Our main results can be easily extended to the setting of the quantum spin Hall effect with conserved spin, \textit{i.e.} $\Z_2$ time-reversal symmetric topological insulators, just by defining suitable spin density states, like it was done in \cite{MonacoMoscolari} for the so-called spin St\v reda formula. This would provide a positive temperature generalization of \cite{GrafPorta,AvilaSchulzBaldesVillegasBlas}. However, a more challenging open problem would be to extend our analysis beyond the spin conserved case thus providing a bulk-edge correspondence proof for a general bulk quantum spin Hall model like the one analyzed in \cite{MPS, MonacoUlcakar}.
    
    \item Last but not least, we would like to achieve a deeper understanding of the fact that at positive temperature, the Hall conductivity (computed within the "classical" linear response theory) does not coincide with the bulk quantity provided by the right-hand side of the (generalized) St\v reda formula \eqref{sept2}. This fact has already been pointed out in the physics literature \cite{Streda,Resta2010}, and we also comment on it in Section \ref{sec:physicsT} and especially in Appendix \ref{appendix:Kubo}, where we completely analyze the case of a pure Landau Hamiltonian. However, for more general operators, the problem is still open.

    \end{enumerate}

\subsection{Two key technical results}

We now formulate two technical results that, together with the aforementioned gauge covariant magnetic perturbation theory, play a central role in the proof of Theorem \ref{thm-positive} and Proposition \ref{prophc1}. The first one regards trace class and kernel regularity property of operators involved in the definition of the total edge current, while the second one states that the ``averaged equilibrium current per unit area" of the bulk Hamiltonian equals zero.

\begin{proposition}\label{AR1} 
The following statements hold true: 
\begin{enumerate}[label=(\roman*)]
\item 
\label{AR1:point1}The operator 
\begin{equation}\label{hc2}
\mathcal{I}_{\omega,b}:=\chi_{\infty}\big \{ \iu \left[H_{\omega,b}^E,X_1\right]  F'(H_{\omega,b}^E)-\chi_{E} \iu \left[H_{\omega,b},X_1\right]F'(H_{\omega,b})\chi_{E} \big\}
\end{equation}
is trace class for every $\omega \in \Par$, $b \in \R$, and its trace is uniformly bounded with respect to~$\omega$. 

\item \label{AR1:point2} Both operators $ \iu \left[H_{\omega,b}^E,X_1\right]  F'(H_{\omega,b}^E)$ and $\iu \left[H_{\omega,b},X_1\right]F'(H_{\omega,b})$ have jointly continuous integral kernels on $\mathcal{S}_{\infty}$, uniformly in $\omega \in \Par$.

\item \label{AR1:point3} Denote the diagonal value of the integral kernel of $\mathbb{E}\left(\mathcal{I}_{\dotw,b}\right)$ by $I_b(x_1,x_2)$. Then $I_b(x_1,x_2)$ is continuous on $\mathcal{S}_\infty$ and 
\begin{equation}\label{eq:currentL1}
\int_0^1 \di x_1 \int_{0}^\infty \di x_2 \; |I_b(x_1,x_2)|\; <\infty. 
\end{equation}
\end{enumerate}

\end{proposition}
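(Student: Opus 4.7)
The plan is to handle parts (ii), (i), and (iii) in that order, with the main work going into sharp off-diagonal estimates on kernels and, crucially, exponential decay of the \emph{difference} $\mathcal{I}_{\omega,b}$ away from the edge.

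\emph{Part (ii).} By hypothesis $F$ agrees on $[\inf\Sigma(0),\infty)$ with a Schwartz function; since both $H_{\omega,b}$ and $H^E_{\omega,b}$ have spectrum contained in $[\inf\Sigma(0),\infty)$ (diamagnetic inequality and min-max), I can replace $F$ and $F'$ by Schwartz functions under the functional calculus. I would then invoke a Helffer--Sj\"ostrand representation together with a Combes--Thomas exponential decay estimate on the resolvents, uniform in $\omega$. This gives that $F(H_{\omega,b})$, $F'(H_{\omega,b})$ and the corresponding edge operators have jointly $C^\infty$ integral kernels whose derivatives decay faster than any polynomial in $|x-y|$, uniformly in $\omega\in\Par$. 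Applying the first-order differential operator $\iu[H_{\omega,b},X_1]=-\iu\partial_1-\mathcal{A}_1-bA_1$ in the first variable preserves joint continuity and yields the kernel estimates claimed, with the Dirichlet condition causing no trouble in the interior of $E$.

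\emph{Part (i).} The genuine obstruction to trace-class is the infinite extent of $\mathcal{S}_\infty$ in the $x_2$-direction, since each of the two operators in $\mathcal{I}_{\omega,b}$ is essentially a local operator with a persistent bulk contribution. I would decompose $\chi_\infty=\sum_{n\geq 0}\chi_n$ with $\chi_n$ the indicator of $[0,1]\times[n,n+1)$ and aim for a bound $\|\chi_n\,\mathcal{I}_{\omega,b}\|_1\leq C\eu^{-cn}$ uniformly in $\omega$. Via Helffer--Sj\"ostrand applied to $F'$, this reduces to comparing $(H^E_{\omega,b}-z)^{-1}$ with $\chi_E(H_{\omega,b}-z)^{-1}\chi_E$. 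Following the gauge covariant magnetic perturbation philosophy, I would insert a smooth cutoff $\phi$ equal to $1$ on $\{x_2\geq 2\}$ and $0$ near $\{x_2=0\}$, producing a geometric identity of the schematic form
\begin{equation*}
\phi\,(H^E_{\omega,b}-z)^{-1} \,=\, (H_{\omega,b}-z)^{-1}\phi \,+\, (H_{\omega,b}-z)^{-1}[H_{\omega,b},\phi](H^E_{\omega,b}-z)^{-1},
\end{equation*}
where the commutator $[H_{\omega,b},\phi]$ is a first-order operator with smooth bounded coefficients supported in the bounded strip $\{1\leq x_2\leq 2\}$. Inserting this identity at both ends of $\iu[H_{\omega,b}^E,X_1]F'(H_{\omega,b}^E)$ and combining with Combes--Thomas exponential decay in $x_2$ of both resolvents, the $n$-th piece of the difference integrates in $z$ against the quasi-analytic extension of $F'$ to yield the desired $\eu^{-cn}$ trace-norm bound. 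Summing in $n$ gives trace-class and a uniform-in-$\omega$ trace bound.

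\emph{Main obstacle.} The delicate step is establishing, for the half-plane operator $H^E_{\omega,b}$, a Combes--Thomas resolvent estimate that holds uniformly up to the Dirichlet boundary, with $|\mathrm{Im}\,z|^{-N}$ dependence compatible with the Schwartz quasi-analytic extension of $F'$; combined with the geometric identity above, this is precisely the extension of gauge covariant magnetic perturbation theory to operators on infinite domains with boundary that the paper advertises, and is where the bulk of the analytic work sits.

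\emph{Part (iii).} Continuity of $I_b(x_1,x_2)$ on $\mathcal{S}_\infty$ follows from (ii): both kernels entering $\mathcal{I}_{\omega,b}$ are jointly continuous and uniformly bounded in $\omega$, so dominated convergence shows that the expectation over $\omega$ is jointly continuous and that its diagonal value is well defined. For the integrability \eqref{eq:currentL1}, I would combine the uniform continuity from (ii) (which controls $|I_b|$ on the compact square $[0,1]\times[0,1]$) with the estimate $\|\chi_n\,\mathcal{I}_{\omega,b}\|_1\leq C\eu^{-cn}$ from (i), which, using continuity of the kernels, translates into a pointwise bound $|I_b(x_1,x_2)|\leq C\eu^{-cx_2}$ for $x_2\geq 1$. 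Integration over $\mathcal{S}_\infty$ then yields \eqref{eq:currentL1}.
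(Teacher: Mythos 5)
Your overall strategy coincides with the paper's: Helffer--Sj\"ostrand representation of $F'$, a geometric resolvent identity comparing $(H^E_{\omega,b}-z)^{-1}$ with $\chi_E(H_{\omega,b}-z)^{-1}\chi_E$ so that every term in the difference carries a factor localized near the cut, and a factorization into Hilbert--Schmidt operators obtained by distributing exponential weights compatible with the Combes--Thomas decay of the resolvent kernels. Your commutator identity $\phi(H^E-z)^{-1}=(H-z)^{-1}\phi+(H-z)^{-1}[H,\phi](H^E-z)^{-1}$ is exactly the content of the paper's decomposition \eqref{gpt1}; the only organizational difference is that you slice $\chi_\infty=\sum_n\chi_n$ and sum trace norms, whereas the paper treats the whole semi-infinite strip at once by inserting weights $\e^{\pm\beta\langle X_1\rangle/\zeta}$ and $\e^{\pm\beta\langle X_2\rangle/\zeta}$. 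Both work, but two of your quantitative claims are off. First, the bound $\|\chi_n\mathcal{I}_{\omega,b}\|_1\leq C\e^{-cn}$ is not available without a spectral gap: for fixed $z$ the resolvent decays like $\e^{-\delta|z_2| n/\langle z_1\rangle}$, and after integrating against $\bar\partial F'_N(z)$, which only vanishes polynomially as $|z_2|\to0$, the exponential degrades to decay faster than any polynomial in $n$. That is still summable, so your scheme survives, but the stated rate is wrong. Second, in (ii) the claim of jointly $C^\infty$ kernels with rapidly decaying derivatives is an overclaim; the $2d$ resolvent kernel has a logarithmic diagonal singularity and $[H,X_1](H-z)^{-1}$ a $\|\x-\y\|^{-1}$ singularity, so joint continuity of $F'(H)$ and $\iu[H,X_1]F'(H)$ requires composing several resolvents and a regularization argument (the paper uses polar integral operator theory and finite-rank approximation in Lemma~\ref{lemma:IntTrClassEstimateF}). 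Only continuity is needed, so this does not derail the proof, but the assertion as written is not justified by the sketch.

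The one step that is actually invalid as stated is the passage in (iii) from the trace-norm bound on $\chi_n\mathcal{I}_{\omega,b}$ to the pointwise bound $|I_b(x_1,x_2)|\leq C\e^{-cx_2}$: a trace-norm estimate together with continuity of the kernel does not control the diagonal pointwise (the diagonal of a trace-class operator need not even be integrable against the trace norm unless one has a factorization). The correct route, which is the one the paper takes and which is already implicit in your construction, is to record that $\mathcal{I}_{\omega,b}=\sum_j A_{j,\omega}B_{j,\omega}$ with $A_{j,\omega},B_{j,\omega}$ Hilbert--Schmidt uniformly in $\omega$, write $I_b(\x)=\E\bigl(\sum_j\int A_{j,\dotw}(\x;\y)B_{j,\dotw}(\y;\x)\,\di\y\bigr)$, and apply the Cauchy--Schwarz inequality twice (in $\y$ and then in $\x$) to bound $\int_{\mathcal{S}_\infty}|I_b|$ by $\sum_j\|A_{j,\omega}\|_2\|B_{j,\omega}\|_2$. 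With that repair, and with the exponential rate in $n$ relaxed to super-polynomial, your argument is a faithful reconstruction of the paper's proof.
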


\begin{remark}
\label{rmk:comparisonRegularizationCurrent}
 If $H_{\omega,b}$ has a gap in the spectrum and $F'$ is supported in such a gap (thus reducing to a zero-temperature setting), then $-\Tr \left( \mathcal{I}_{\omega,b}\right)= -\Tr \left(\chi_{\infty}\big \{ \iu \left[H^E_{\omega,b},X_1\right]  F'(H^E_{\omega,b}) \big \}\right)$, which coincides with the usual edge conductivity and the result of Proposition~\ref{AR1}~\ref{AR1:point1} is essentially known. Instead, when $F'$ is not supported on a gap of $H_{\omega,b}$, the subtraction of the bulk term is crucial to get the trace class property. Indeed, by exploiting geometric perturbation theory as we do in the proof of Lemma~ \ref{AR1}, one can prove that the first summand on the right-hand side of \eqref{hc2} is not even a compact operator on its own. 
 \end{remark}

\begin{lemma}\label{AR2} The ``averaged equilibrium current per unit area" vanishes, in the sense that given any real valued function $G$ which coincides with a Schwartz function on the almost sure spectrum of $H_{\omega,b}$ we have 
\begin{equation}\label{eq:EqNoCurrent}
  \E\left(\Tr \big ( \chi_{\Omega} \iu [H_{\dotw,b},X_i]G(H_{\dotw,b})\big )\right)=0, \qquad i \in \{1,2\}.
\end{equation}
Moreover, as a consequence of Birkhoff's ergodic theorem the ``equilibrium current per unit area" vanishes for almost all $\omega\in \Par$, that is
\begin{equation}
\label{eq:TDLcurrent}
\lim_{L \to \infty} \frac{\Tr \big ( \chi_{\Lambda_L} \iu [H_{\omega,b},X_i]G(H_{\omega,b})\big )}{L^2}=0, \qquad i \in \{1,2\}.
\end{equation}
\end{lemma}

\subsection{The content of the rest of the paper} $\qquad\qquad\qquad$

In Section~\ref{sect4} we prove Proposition~\ref{AR1}~\ref{AR1:point1} and \ref{AR1:point3}, while the proof of Proposition~\ref{AR1}~\ref{AR1:point2} is postponed to Appendix~\ref{appendix:IntKernels}. The proof of \ref{AR1:point1} is based on uniform exponential decay estimates and geometric perturbation theory, both briefly explained in Appendices \ref{subsec:GPT} and \ref{appendix:IntKernels} for completeness. Even though we follow a standard strategy by showing that  $\mathcal{I}_{\omega,b}$ can be written as a finite sum of products of Hilbert-Schmidt operators, the fact that $F'$ might not have compact support makes the proof much more involved.  

In Section~\ref{sect5} we show, among other things, that the ``averaged equilibrium current per unit area" is  zero. This fact enters in an essential way in the proof that the total edge current, namely the right-hand side of \eqref{ar1}, is well-defined and it is independent of $g$, see Lemma \ref{CurrentIndipG}. A previous (rather involved) proof which covers our case of unbounded magnetic random operators can be found in \cite{BoucletGerminetKleinSchenker}. We chose, for completeness, to give a different and somewhat simpler proof which reduces the unbounded setting to the bounded case, which was already treated in \cite{SchulzBaldesTeufel13}.

Section~\ref{sect6} contains the most important new ideas coming from the extension of gauge covariant magnetic perturbation theory to half-spaces, and also contains the core of the proof of our main result Theorem \ref{thm-positive}.

The proof of  Theorem \ref{thm-positive}~(i) consists of a thorough analysis of the ``$F$"- integrated density of states
$\rho_{L,\omega}(b)$. In Proposition~\ref{step1} we prove that when $L\to\infty$, this quantity is self-averaging and converges to the pure bulk quantity $B_F(b)$; the result is not  unexpected, taking into account the locality of the elliptic operators we work with, which makes that the influence of the boundary condition is localized near the cut. 

In Proposition~\ref{prop:TDLMagneticD} we prove a deeper and more difficult result which states that not only $\rho_{L,\omega}(b)$, but also its derivative with respect to $b$ has the same thermodynamic limit property and converges to $B_F'{(b)}$. The proof is challenging due to the linear growth at infinity of the magnetic potential, and the gauge covariant magnetic perturbation theory adapted to operators defined on domains with boundary plays a crucial role. We note that when $F(x)=-T \ln(1+\e^{(x-\mu)/T})$, the problem is related to the thermodynamic limit of the bulk magnetization \cite{Cornean00}.

The proof of  Theorem \ref{thm-positive}~(ii), i.e. formula \eqref{ar1}, follows directly from Propositions~\ref{prop:TDLMagneticD} and \ref{prop:TDLMagneticEdge}. 

Section~\ref{sect9} deals with the gapped case and the zero-temperature limit, as explained in Proposition~\ref{prophc1}. 

\subsection{Connections with the existing literature}

The history of bulk-edge correspondence in topological insulators goes back to the eighties, right after the discovery of the quantum Hall effect. Indeed, as it was first realized in the pioneering work by Halperin \cite{Halperin}, a comprehensive description of the integer quantum Hall effect at zero temperature requires the introduction of charge-carrying edge states, whose conductivity is quantized and exactly equal to the transverse bulk Hall conductivity. Around ten years later, this remarkable mechanism has been clarified by Hatsugai \cite{Hatsugai}, who analyzes edge states of the discrete magnetic Laplacian explicitly, and by Fröhlich et al.\ \cite{FrKe,FrSt} (see also the more recent review in \cite{Froehlich}), who derive effective actions for the bulk and the edge physics using scaling limits and minimal assumptions on the microscopic model. This led to what is nowadays called \emph{bulk-edge correspondence} (B-EC) and which is the main footprint of several types of topological materials.

The current mathematical understanding of the bulk-edge correspondence has been built on several works which either focused on the analysis of the bulk problem, \textit{i.e.\ }a Hamiltonian defined on the whole plane, or on the analysis of the edge problem, \textit{i.e.\ }a Hamiltonian defined on the half plane or a variation of the latter. Regarding the bulk setting we mention the seminal works by Bellissard et al.\ \cite{Bellissard1986,Bellissard1986II,BES}, followed by \cite{AvronSeilerSimon,AizenmanGraf,GerminetKleinSchenker}, while for the edge setting we mention the first rigorous works by De Bièvre and Pulé \cite{DeBievrePule}, and by Fr\"ohlich, Graf and Walcher \cite{FroehlichGrafWalcher} followed by the more recent \cite{CombesGerminet,FeffermanLeeThorpWeinstein,HislopPopoffSoccorsi}.

There are several proofs of  the bulk-edge correspondence in the literature. Most of them deal with tight-binding Hamiltonians: they either make use of non-commutative geometric and $K$-theoretic techniques based on \cite{Bellissard1986,Bellissard1986II,BES}, see the works by Schulz-Baldes, Kellendonk and Richter \cite{SchulzBaldesKellendonkRichter,KellendonkRichterSchulzBaldes}, the monograph \cite{ProdanSchulzBaldes}, and the more recent~\cite{AllridgeMaxZirnbauer}; or they rely on a generalization of the index of pair of projections as in \cite{AvronSeilerSimon}, see the works by Graf~et~al.~ \cite{ElbauGraf, ElgartGrafSchenker, GrafPorta}. We also mention \cite{DeNittisSchulzBaldes} and \cite{AvilaSchulzBaldesVillegasBlas}, where, in the former, the bulk-edge correspondence for a tight-binding model is proved by using a spectral flow approach related to the mathematical description of Laughlin's flux insertion argument given in \cite{AvronSeilerSimon}, while the latter is based on the transfer matrix approach of \cite{Hatsugai}. 
The first proof of B-EC for continuous disordered systems, with constant magnetic field, is due to Kellendonk and Schulz-Baldes \cite{KellendonkSchulzBaldes}. Their setting is similar to ours, but they only allow compactly supported partition functions, thus restricting to zero temperature.  Their techniques also require a spectral gap in order to properly define the edge current and conductivity.

Later on,  the results for tight-binding Hamiltonians of \cite{ElgartGrafSchenker} have been extended in \cite{Taarabt} to the continuous setting with soft walls; which means that the Hamiltonian is defined on the whole space and there is a suitable scalar potential mimicking the Dirichlet boundary condition at the cut. More recently, several new proofs of B-EC for continuous clean systems modelling a soft junction between two different crystals appeared; they are based on semiclassical analysis of pseudodifferential operators \cite{Dr2019}, on Fredholm index theory \cite{Bal2019} or on a generalization of the Maslov index \cite{Gontier2021}. Nevertheless, they do not include constant magnetic fields, which is a strong  limitation in view of the physical models used in the quantum Hall effect. Moreover, we notice that several results presented in \cite{ProdanSchulzBaldes} have been recently extended to the continuous setting, see \cite{BourneRennie} and references therein. See also the recent approach to bulk-edge correspondence using Roe alegbras in \cite{Kubota}.

Furthermore, we want to stress that even though the previously cited proofs are based on several different mathematical approaches, they all share the same underlying strategy: first they define a certain edge conductance and then they show, by comparison, that it coincides with a topological index associated to the bulk Hamiltonian, which is intrinsically defined only at zero temperature. The constant magnetic field, when present, is treated just as a fixed parameter. Instead, our new approach, which is inspired by the thermodynamic argument of St{\v r}eda and Smr{\v c}a \cite{StredaSmrcka}, is based on a careful and detailed analysis of the magnetic dependence of the bulk and edge ``smeared" integrated densities of states, thus being temperature independent and not relying on any topological arguments. Additionally, our approach can be implemented both in the case of unbounded Schr\"odinger operators and of tight-binding models. In this paper we chose to work with the technically more demanding case of unbounded random Schr\"odinger operators and we postpone the analysis of the tight-binding setting to a future work (see Appendix \ref{subsec:Discrete} for more details). 

Our results can also be adapted to a situation where the Dirichlet edge is replaced by  a ``continuous interface" between two half planes, like in~\cite{Dr2019}. A detailed analysis of this case  will be presented in another paper.

As a final remark, we notice that our main result, and also the mathematical papers cited so far, rely on the independent electron approximation. The only results in the direction of a description of the bulk-edge correspondence for interacting electrons are \cite{AntinucciMastropietroPorta,MastropietroPorta}, which, together with the companion paper \cite{GiulianiMastropietroPorta}, implies a stability of the bulk-edge correspondence with respect to suitable small interactions. Moreover, notice that in \cite{AntinucciMastropietroPorta,MastropietroPorta} the edge conductance is defined by analyzing the linear response to a weak local field along the edge, while in our work the edge conductance is defined as the derivative of the transport current in equilibrium with respect to the chemical potential as it is customary in the analysis of the non-interacting bulk-edge correspondence, see for example \cite{SchulzBaldesKellendonkRichter}. Even though a general proof of the bulk-edge correspondence for interacting electrons is out of the scope of the present paper, we expect that our new approach based on magnetic perturbations could shed some new light also on this challenging problem  \cite{LampartMoscolariTeufelWesselFuture}.

\addtocontents{toc}{\protect\setcounter{tocdepth}{0}}

\subsection*{Acknowledgments}
The authors would like to thank J.~Bellissard, E.~Canc{\`e}s, G.~De~Nittis, J.~Fr\"ohlich, D.~Gontier, G.~M.~Graf, B.~Helffer, J.~Kellendonk, A.~Levitt, C.~Lubich, G.~Marcelli, D.~Monaco, G.~Nenciu, G.~Panati, M.~Porta,  H.~Schulz-Baldes, J.~Shapiro and C.~Tauber for helpful discussions and for sharing their insights.
H.C. gratefully acknowledges the financial
support from Grant 8021-00084B and 2032-00005B of the Danish Council for Independent Research $|$ Natural Sciences. The work of M.M. has been supported by a fellowship of the Alexander von Humboldt Foundation during his stay at the University of T\"ubingen, where this work initiated.
M.M. gratefully acknowledges the support of PNRR Italia Domani and Next Generation EU through the ICSC National Research Centre for High Performance Computing, Big Data and Quantum Computing and the support of the MUR grant Dipartimento di Eccellenza 2023–2027. S.T.\ acknowledges financial support by the Deutsche Forschungsgemeinschaft (DFG, German Research Foundation) – TRR 352 – Project-ID 470903074.

\addtocontents{toc}{\protect\setcounter{tocdepth}{2}}

\section{Proof of Proposition~\ref{AR1}}\label{sect4}

  To simplify notation, let us replace $F'$ with any Schwartz function $G$. Let us start by analyzing the difference of operators
    \begin{equation}
    \label{eq:diffFprime}
     G(H^E_{\omega,b})-\chi_{E} \, G(H_{\omega,b}) \chi_{E}.
    \end{equation}
    We want to rewrite  \eqref{eq:diffFprime} with the help of the  Davies-Droste-Dynkin-Helffer-Sj\"ostrand formula \eqref{dc22}, which is a generalization of the Cauchy-Green-Pompeiu formula from complex analysis, and we simply refer to it as the Helffer-Sj\"ostrand (H-S) formula \cite{HelfferSjostrand1989} from now on. If $z=z_1+\iu z_2\in \mathbb{C}$ and $N\in \mathbb{N}$ is arbitrary, let $G_N(z_1,z_2)$ denote the almost analytic extension of $G$ with support in $|z_2|\leq 1$ such that 
    \begin{equation}\label{dc27}
     \forall \, z_1\in \R,\quad \forall \, 0<|z_2|\leq 1,\quad \zeta:=\langle z_1\rangle/ |z_2|\geq 1,\quad |\overline{\partial}G_N(z)|\leq C_N \zeta^{-N}.
    \end{equation}
    With $\mathcal{D}=\R\times [-1,1]$ we have:
    $$
    \begin{aligned}
    &G(H^E_{\omega,b})-\chi_{E} \, G(H_{\omega,b}) \chi_{E}=-\frac{1}{\pi} \int_{\mathcal{D}} \mathrm{d}z_1\mathrm{d}z_2\, \bar{\partial} {G_N}(z) \left(\left(H^E_{\omega,b}-z\right)^{-1} - \chi_{E} \left(H_{\omega,b}-z\right)^{-1}\chi_{E} \right).
    \end{aligned}
    $$
    In the following we will use the shorthand notation $H_{\omega,b}=:H$ and $H^E_{\omega,b}=:H^E$. As it will be evident from the proof, the proof does not depend on the specific value of $b$, and all the estimates are uniform with respect to $\omega \in \Par$. Therefore, we get for free that the traces exist and are uniformly bounded with respect to $\omega$.
    
    By using geometric perturbation theory as described in Section~\ref{subsec:GPT}, we can write the resolvent of the edge Hamiltonian as a sum of terms for which we can control the localization properties in the direction perpendicular to the boundary. 
    
    Consider \eqref{gpt1} for a fixed positive number $\ell>2$, that is 
    $$
    \left(H^E - z \right)^{-1}=U_{\ell}(z)  - \left(H^E- z \right)^{-1}W_{\ell}(z) .
    $$
     Using $\chi_{E}=\eta_\ell +\eta_0$ we have
    \begin{equation}
    \label{eq:TClass1}
    \begin{aligned}
    &(H^E-z)^{-1}  - \chi_{E}(H-z)^{-1}\chi_{E}=(\widetilde{\eta}_{\ell}-\eta_\ell) \left(H - z \right)^{-1} \eta_{\ell}  - \eta_0 \left(H - z \right)^{-1} \eta_{\ell} \\ 
    &-\chi_{E}\left(H - z \right)^{-1} \eta_{0} 
    +\widetilde{\eta}_0 \left( H^E - z \right)^{-1} \eta_0 - \left(H^E - z \right)^{-1}W_{\ell}(z) .
    \end{aligned}
    \end{equation}
    
    Using that $[H,X_1]=[H^E,X_1]$ are both ``tangential velocities" which  commute with functions only depending on $x_2$, the above identity also holds when we apply $[H^E,X_1]$ or $[H,X_1]$ on the left-hand side. 
    
    The first four terms have the same structure, in the sense that they have at least one factor localizing near the boundary, and can be treated in the same way. Let us first consider the third term, and then we will separately analyze the very last one, which has a different nature.
    
      First, we have  $\chi_{\infty}\chi_{E}=\chi_{\infty}$. Second, we observe that  whenever we have an operator family $A(z)$ which is analytic on $\mathcal{D}$ and with a polynomial growth in $z_1$ at infinity, then 
      $$\int_{\mathcal{D}} \mathrm{d}z_1\mathrm{d}z_2 \, \bar{\partial} {G_N}(z) A(z)=0.$$
      Third, using the previous observation and applying the first resolvent identity twice, we obtain:  
    \begin{equation}
    \label{eq:GreenthmInfty}
    \begin{aligned}
     &\int_{\mathcal{D}} \mathrm{d}z_1\mathrm{d}z_2 \, \bar{\partial} {G_N}(z) \chi_{\infty}  \left[H,X_1\right] \left( H - z \right)^{-1}  \eta_0 \\
     &= \int_{\mathcal{D}} \mathrm{d}z_1\mathrm{d}z_2 \, \bar{\partial} G_N(z)  (z-\iu)^2  \chi_{\infty}\left[H,X_1\right] \left( H - z \right)^{-1} \left( H - \iu \right)^{-2} \eta_0 .
     \end{aligned}
    \end{equation}
     Let us argue that the operator 
    $$ \chi_{\infty}  \left[H,X_1\right] \left( H - z \right)^{-1} \left( H - \iu \right)^{-2} \eta_0 $$ is trace class, with a trace norm which is bounded up to a constant by $\zeta^N$ (see \eqref{dc27} for the definition of $\zeta$) where $z\in \mathcal{D}$ is arbitrary while $N$ is a sufficiently large fixed number.  Indeed, we can decompose the integral kernel in the product of two operators 
    \begin{equation}
    \label{eq:aux8}
    \left(\chi_{\infty}  \left[H,X_1\right] \left( H - z \right)^{-1} \e^{\frac{2\beta}{\zeta} \langle X_1 \rangle }\right) \left( \e^{-\frac{2\beta}{\zeta} \langle X_1 \rangle }\left( H - \iu \right)^{-2} \eta_0\right)=: A B. 
    \end{equation}

    By using the estimate \eqref{IntRes2}, we get
    $$
    |A(\x;\y)|\leq \zeta^6 \e^{\frac{2\beta}{\zeta} } \left(1+ \frac{1}{\|\x-\y\|}\right) \e^{-\frac{(\delta-2\beta)}{\zeta}\|\x-\y\|}
    $$
    which, together with a Schur estimate, implies that $A$ is a bounded operator whose norm grows like $\zeta^N$, for some $N \in \N$. Then, it is enough to show that $B$ is a trace class operator whose norm grows polynomially in $\zeta$. Indeed, for $\beta>0$, we have
    \begin{equation}
    \label{eq:aux2TraceClass}
    \begin{aligned}
     B&=\e^{-\frac{2\beta}{\zeta} \langle X_1 \rangle }\left( H - \iu \right)^{-2} \eta_0\\
    &=\left(\e^{- \frac{2\beta}{\zeta} \langle X_1 \rangle }\left( H - \iu \right)^{-1}\e^{ \frac{2\beta}{\zeta} \langle X_1 \rangle }e^{ -\frac{\beta}{\zeta} \langle X_1 \rangle } \e^{-\frac{\beta}{\zeta}\langle X_2 \rangle }\right) \\
    &\;\;\quad \cdot \left( \e^{-\beta \frac{\eta}{\zeta} \langle X_2 \rangle } \e^{- \frac{\beta}{\zeta} \langle X_1 \rangle } \e^{ \frac{2\beta}{\zeta} \langle X_2 \rangle }\left( H - \iu \right)^{-1} \e^{-\frac{2\beta}{\zeta} \langle X_2 \rangle } \right) \e^{\frac{2\beta}{\zeta} \langle X_2 \rangle } \eta_0 \\
    & =: C \, D \, \e^{\frac{2\beta}{\zeta} \langle X_2 \rangle } \eta_0 .
    \end{aligned}
    \end{equation}
    Using \eqref{IntKRes1} together with the triangle inequality, we get
    $$
    |C(\x;\y)| \leq K  \zeta^4 \left(1+ \left| \ln \left\|\x-\y\right\|  \right| \right) \e^{-\frac{(\delta-2\beta)}{\zeta} \left\|\x-\y\right\| } \e^{- \frac{\beta}{\zeta} \langle  y_1 \rangle } \e^{- \frac{\beta}{\zeta}\langle  y_2 \rangle }.
    $$
    Then, by choosing $\beta<\delta/2$, it shows that $C$ is a Hilbert-Schmidt operator. A similar estimate shows that $D$ is a Hilbert-Schmidt operator, too. We also have that  $\e^{\frac{2\beta}{\zeta} \langle X_2 \rangle }\eta_0$  is bounded  uniformly in $z$ because $\zeta\geq 1$. Thus, both $CD$ and $B$ are trace class operators with a trace norm growing polynomially in $\zeta$.

    Therefore, we have
    \begin{equation}
    \label{eq:aux9}
    \begin{aligned}
    &\left\| \int_{\mathcal{D}} \mathrm{d}z_1\mathrm{d}z_2 \, \bar{\partial} {G_N}(z) (z-\iu)^2 \chi_{\infty}  \left[H,X_1\right] \left( H - z \right)^{-1} \left( H - \iu \right)^{-2} \eta_0 \right\|_1 \\
    &\quad \leq   \int_{\mathcal{D}} \mathrm{d}z_1\mathrm{d}z_2 \, \left|\bar{\partial} {G_N}(z) (z-\iu)^2\right| \left\| \chi_{\infty}  \left[H,X_1\right] \left( H - z \right)^{-1} \left( H - \iu \right)^{-2} \eta_0 \right\|_1 \\
    &\quad \leq  K\int_{\mathcal{D}} \mathrm{d}z_1\mathrm{d}z_2\,  \left|\bar{\partial} {G_N}(z) \right| |z-\iu| \zeta^M < \infty
    \end{aligned}
    \end{equation}
    where $\| \cdot \|_1$ denotes the trace class norm, and in the last inequality we have chosen $N$ large enough. All the other terms in \eqref{eq:TClass1}, with the exception of the last one, can be treated in an analogous way. 
    Thus it remains to prove that the operator
    $$
    \frac{\iu}{\pi} \int_{\mathcal{D}} \mathrm{d}z_1\mathrm{d}z_2\,   \bar{\partial} {G_N}(z)  \chi_{\infty} \left[H^E,X_1\right] \left(H^E - z \right)^{-1}W_{\ell}(z)
    $$
    is trace class. Repeating the same strategy that led us to \eqref{eq:GreenthmInfty}, by adding and subtracting $\chi_{\infty} \left[H^E,X_1\right] \left(H^E - \iu \right)^{-1}W_{\ell}(\iu)$, the operator becomes:
    \begin{equation}
    \label{eq:auxTraceClass}
    \begin{aligned}
    &\frac{\iu}{\pi} \int_{\mathcal{D}} \mathrm{d}z_1\mathrm{d}z_2\,  \bar{\partial} {G_N}(z)  (z-\iu) \chi_{\infty} \left[H^E,X_1\right] \left(H^E-z \right)^{-1} \left(H^E-\iu \right)^{-1} W_{\ell}(z) \\
    &+\frac{\iu}{\pi} \int_{\mathcal{D}} \mathrm{d}z_1\mathrm{d}z_2\,  \bar{\partial} {G_N}(z)  (z-\iu) \chi_{\infty} \left[H^E,X_1\right] \left(H^E-\iu \right)^{-1} \left( W_{\ell}(z) - W_{\ell}(\iu) \right) .
    \end{aligned}
    \end{equation}
    
    The first term in \eqref{eq:auxTraceClass} can be treated using the same strategy in \eqref{eq:aux8}. Indeed, $W_{\ell}(z)=\chi_{Q}W_{\ell}(z)$, where $Q:=\R \times [0,2\ell]$,  $W_{\ell}(z)$ satisfies \eqref{eq:WKern}. Regarding the second term, we can treat the four terms in $W_{\ell}(z) - W_{\ell}(\iu) $ separately, and it is enough to show how to treat the one containing $[H^E,X_2]$ and $\eta_\ell$. By using the resolvent identity, we get
    
    \begin{equation*}
    \begin{aligned}
    &\frac{1}{\pi} \int_{\mathcal{D}} \mathrm{d}z_1\mathrm{d}z_2  \bar{\partial} {G_N}(z)  (z-\iu)^2 \chi_{\infty} \left[H^E,X_1\right] \left(H^E-\iu \right)^{-1}  \partial_2\widetilde{\eta}_\ell  \left[H^E,X_2\right]  \left(H^E-z \right)^{-1} (H^E-\iu )^{-1} \eta_{\ell} .
    \end{aligned}
    \end{equation*}
    Repeating the same strategy as before, we gain a higher power of $\left(H^E-\iu \right)^{-1}$, that is
    \begin{equation}
    \label{eq:aux3TC}
    \begin{aligned}
    &\frac{1}{\pi} \int_{\mathcal{D}} \mathrm{d}z_1\mathrm{d}z_2  \bar{\partial} {G_N}(z)  (z-\iu)^3 \chi_{\infty} \left[H^E,X_1\right] \left(H^E-\iu \right)^{-1}  \partial_2\widetilde{\eta}_\ell  \left[H^E,X_2\right]  \left(H^E-z \right)^{-1} \left(H^E-\iu \right)^{-2} \eta_{\ell} .
    \end{aligned}
    \end{equation}
    Then, we have to propagate the exponential localization as in \eqref{eq:aux8}, that is we decompose the integrand in \eqref{eq:aux3TC} as the product of five operators:
    $$
    \begin{aligned}
    &\left(\chi_{\infty}\e^{\frac{2\beta}{\zeta} \langle X_1 \rangle}\right )\cdot 
    \left ( \e^{\frac{-2\beta}{\zeta} \langle X_1 \rangle } \left[H^E,X_1\right] \left(H^E-\iu \right)^{-1} \e^{\frac{2\beta}{\zeta} \langle X_1 \rangle } \right) \cdot \left ((\iu \partial_2\widetilde{\eta}_\ell) \, \e^{\frac{2\beta}{\zeta} \langle X_2 \rangle } \right ) \\
    &\quad \cdot \left( \e^{-\frac{2\beta}{\zeta} \langle X_2 \rangle }\e^{-\frac{2\beta}{\zeta} \langle X_1 \rangle }   \left[H^E,X_2\right]  \left(H^E-z \right)^{-1} \e^{\frac{2\beta}{\zeta} \langle X_2 \rangle } \e^{\frac{2\beta}{\zeta} \langle X_1 \rangle } \right) \\
    & \quad \quad \cdot\left(\e^{-\frac{2\beta}{\zeta} \langle X_1 \rangle } \e^{-\frac{2\beta}{\zeta} \langle X_2 \rangle } \left(H^E-\iu \right)^{-2}\right) := ABCDE .
    \end{aligned}
    $$
    Using the triangle inequality and the exponential localization given by \eqref{IntKernelResolvent2}, one can show that the first four factors are bounded and have a norm which grows at most polynomially in $\zeta$. It is thus sufficient to investigate the trace norm of $E$. Rewrite $E$ as:
    $$E=
    \left(\e^{-\frac{\beta}{\zeta} \langle X_1 \rangle } \e^{-\frac{\beta}{\zeta} \langle X_2 \rangle }\e^{-\frac{\beta}{\zeta} \langle X_1 \rangle }  \e^{-\frac{\beta}{\zeta} \langle X_2 \rangle } \left(H^E-\iu \right)^{-1} \e^{\frac{\beta}{\zeta}\langle X_1 \rangle } \e^{\frac{\beta}{\zeta} \langle X_2 \rangle } \right) \cdot \left(\e^{-\frac{\beta}{\zeta} \langle X_1 \rangle } \e^{-\frac{\beta}{\zeta} \langle X_2 \rangle } \left(H^E-\iu \right)^{-1}\right).
    $$
    An argument similar to \eqref{eq:aux2TraceClass} shows that $E$ is trace class with a trace class norm that grows at most polynomially in $\zeta$. Finally, an estimate like \eqref{eq:aux9} concludes the proof that  $\mathcal{I}_{\omega}(b)$ is trace class.
    
    Point (ii) of this proposition is proved (among other things) in Lemma \ref{lemma:IntTrClassEstimateF}. 
    
    Now let us prove point (iii). From the proof of point (i) we see that we can write $\mathcal{I}_{\omega,b}$ as a finite sum of products of Hilbert-Schmidt operators of the type $\mathcal{I}_{\omega,b}=\sum_{j=1}^M A_{j,\omega} B_{j,\omega}.$ This implies:
    $$I_b(x_1,x_2)=\mathbb{E}\left(\sum_{j=1}^M \int_{E} \mathrm{d}\y \,A_{j,\dotw}(x_1,x_2;\y) B_{j,\dotw} (\y; x_1,x_2)\right). $$
    Note that this function is supported on the strip $\mathcal{S}_\infty$. After a  double application of the Cauchy-Schwarz inequality with respect to both $\x$ and $\y$ and using that the Hilbert-Schmidt norms of $A_{j,\omega}$ and $B_{j,\omega}$ are uniformly bounded in $\omega$, the proof follows.   \hfill \qedsymbol

    \medskip

\section{Proof of Lemma~\ref{AR2}}\label{sect5}

We start with a preliminary technical lemma regarding a generalized trace cyclicity in our setting. 

\begin{lemma}
	\label{lemma:cycle}
	Let $\left\{A_{\omega}\right\}_{\omega \in \Par},\left\{B_{\omega}\right\}_{\omega \in \Par}$ be two families of integral operators with integral kernels that are jointly continuous in $\R^4 \setminus \{(\x;\x), \;\; \x \in \R^2\}$ and which satisfy either \eqref{IntKRes1} or \eqref{IntKernelResolvent2}. 
	
	Let $C_\omega$ denote either $A_\omega$ or $B_\omega$ and let $b\in \R$. 
	
	\noindent {\rm (i)} Assume that 
	\begin{equation}
    \label{eq:covariance1DAux}
    \tau_{b,\gamma}C_{\omega}\tau_{b,-\gamma}=C_{T(\gamma_1,0)\omega}, \quad \forall \, (\gamma_1,0) \in \Z^2.
    \end{equation}
	Then 
	for every $L>0$
	\begin{align}\label{dc24}
	\E \left(\int_{[0,1]\times \R_+ }\di \x \int_{\R \times [0,L]} \di \y A_{\dotw}(\x;\y) B_{\dotw}(\y;\x)\right)=\E \left(\int_{[0,1]\times [0,L]}\di \y \int_{\R \times \R_+} \di \x  B_{\dotw}(\y;\x)A_{\dotw}(\x;\y)\right) .
	\end{align}

	\noindent {\rm (ii)} Assume that 
	\begin{equation}
    \label{eq:covariance2DAux}
    \tau_{b,\gamma}C_{\omega}\tau_{b,-\gamma}=C_{T(\gamma)\omega},\quad \forall \, \gamma \in \Z^2.
    \end{equation}
	Let $\Omega$ be the unit square. Then both $\chi_{\Omega} A_{\omega} B_{\omega}$ and $\chi_{\Omega} B_{\omega} A_{\omega}$  are trace class and
	\begin{equation}\label{dc28}
	\E\left(\Tr\left(\chi_{\Omega}A_{\dotw}B_{\dotw}\right)\right)=\E\left(\Tr\left(\chi_{\Omega}B_{\dotw}A_{\dotw}\right)\right).
	\end{equation}
	
	\end{lemma}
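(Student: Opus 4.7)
The plan is to reduce both statements to a single covariance principle: under the hypotheses, the function $h(\x,\y) := \E(A_\omega(\x;\y) B_\omega(\y;\x))$ is invariant under simultaneous lattice translations $(\x,\y)\mapsto(\x-\gamma,\y-\gamma)$ — with $\gamma \in \Z\times\{0\}$ in case (i) and $\gamma\in\Z^2$ in case (ii) — and then to exploit this invariance by a change of variables or a tiling argument. The crucial observation that enables it is that in the Landau gauge the magnetic translation conjugation acts on kernels by
\[
(\tau_{b,\gamma} C \tau_{b,-\gamma})(\x;\y) = \eu^{\iu(\alpha_\gamma(\x)-\alpha_\gamma(\y))}\, C(\x-\gamma;\y-\gamma)
\]
for an affine phase $\alpha_\gamma$. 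In the product $A_\omega(\x;\y) B_\omega(\y;\x)$ the two phases cancel identically, so \eqref{eq:covariance1DAux}/\eqref{eq:covariance2DAux} combined with the $T$-invariance of $\Pro$ yield $h(\x,\y)=h(\x-\gamma,\y-\gamma)$. A preliminary Fubini step uses the estimates \eqref{IntKRes1} and \eqref{IntKernelResolvent2} — which bound the kernels off-diagonal by $\zeta^{N}(1+|\ln\|\x-\y\||)\eu^{-c\|\x-\y\|}$, uniformly in $\omega$ — to justify the interchange of integration order in both iterated integrals of \eqref{dc24} and in the trace integral of \eqref{dc28}.

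For part (i), once Fubini is justified, the claim reduces to
\[
\int_0^1 dx_1 \int_{\R} dy_1 \, H(x_1,y_1) \;=\; \int_0^1 dy_1 \int_{\R} dx_1 \, H(x_1,y_1),
\]
where $H(x_1,y_1):=\int_0^{\infty} dx_2 \int_0^L dy_2\, h(\x,\y)$ inherits the $(1,0)$-diagonal invariance from $h$. Substituting $(s,t)=(y_1-x_1,\,x_1)$ on the left and $(s,t)=(y_1-x_1,\,y_1)$ on the right, the unit intervals become $t\in[0,1]$ for each fixed $s\in\R$; since $t\mapsto H(t,t+s)$ is $1$-periodic by the invariance, both sides equal $\int_\R ds\int_0^1 dt\, H(t,t+s)$.

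For part (ii), I would first verify the trace class property for each $\omega$. Using $1=\sum_{\gamma\in\Z^2}\chi_{\Omega+\gamma}$, write $\chi_\Omega A_\omega B_\omega=\sum_\gamma (\chi_\Omega A_\omega \chi_{\Omega+\gamma})(\chi_{\Omega+\gamma} B_\omega)$; both factors are Hilbert--Schmidt by \eqref{IntKRes1}/\eqref{IntKernelResolvent2}, and the HS norm of the first factor decays like $\eu^{-c|\gamma|}$ uniformly in $\omega$, so the series converges in trace norm. Expressing the trace as $\Tr(\chi_\Omega A_\omega B_\omega)=\int_\Omega d\x \int_{\R^2} d\z\, A_\omega(\x;\z) B_\omega(\z;\x)$, one takes the expectation and uses the $\Z^2$-diagonal invariance of $h$ together with the tiling $\R^2=\bigsqcup_{\gamma\in\Z^2}(\Omega+\gamma)$ to unfold the $\z$-integration:
\[
\int_\Omega d\x \int_{\R^2} d\z\, h(\x,\z) = \sum_\gamma \int_\Omega d\x \int_\Omega d\z'\, h(\x-\gamma,\z') = \int_{\R^2} d\x' \int_\Omega d\z'\, h(\x',\z').
\]
Renaming $\x'\leftrightarrow\z'$ and using that the scalar product $A_\omega(\x';\z')B_\omega(\z';\x')$ equals $B_\omega(\z';\x')A_\omega(\x';\z')$ yields $\E(\Tr(\chi_\Omega B_\omega A_\omega))$, proving \eqref{dc28}.

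The main obstacle I anticipate is the bookkeeping around the diagonal singularity of the kernels: the logarithmic blow-up in \eqref{IntKRes1} is locally integrable on $\R^2$, but combined with the need for uniform-in-$\omega$ control over the tiling sum in part (ii), one must carefully split near/far-diagonal contributions before invoking Fubini and before estimating HS norms of each piece $\chi_\Omega A_\omega \chi_{\Omega+\gamma}$. Once this technical book-keeping is under control, the phase cancellation and the unfolding/refolding are essentially mechanical.
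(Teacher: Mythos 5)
Your proposal is correct and follows essentially the same route as the paper: the phase cancellation in the kernel product combined with the $T$-invariance of $\Pro$ gives diagonal translation invariance of $\E(A_\omega(\x;\y)B_\omega(\y;\x))$, and both statements then follow by an unfolding/tiling argument (your change of variables in (i) is just a repackaging of the paper's sum over $\gamma_1$, and your Hilbert--Schmidt decomposition in (ii) spells out the paper's one-line trace-class remark). No substantive differences to report.
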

	\begin{proof}
	Let us start with case (i). Since both families of operators satisfy  \eqref{eq:covariance1DAux}, we have that 
	\begin{equation}
	\label{eq:auxC1}
	A_{\omega}(\x-{ (\gamma_1,0)};\y-{ (\gamma_1,0)}) B_{\omega}(\y-{ (\gamma_1,0)};\x-{ (\gamma_1,0)})=A_{T_{\gamma}\omega}(\x;\y) B_{T_{\gamma}\omega}(\y;\x),\,  \forall  \,  (\gamma_1,0) \in \Z^2.
	\end{equation}
	
	Decomposing $\R=\bigcup_{\gamma_1 \in \Z}\left([0,1] + \gamma_1\right)$ and considering \eqref{eq:auxC1}, we get
	\begin{equation}
	\label{eq:auxC100}
	\begin{aligned}
	&\E\left(\int_{[0,1]\times \R_+}\di \x \sum_{\gamma_1 \in \Z}\int_{0}^{1}\di y_1 \int_{0}^L \di y_2   A_{\dotw}(\x;y_1-\gamma_1,y_2) B_{\dotw}(y_1-\gamma_1,y_2;\x)\right)\\
	&= \E\left(\int_{[0,1]\times \R_+}\di \x \sum_{\gamma_1 \in \Z}\int_{0}^{1}\di y_1 \int_{0}^L \di y_2 A_{T_{\gamma}\dotw}(x_1+\gamma_1,x_2;y_1,y_2) B_{T_{\gamma}\dotw}(y_1,y_2;x_1+\gamma_1,x_2)\right) \\
	&=\E\left(\int_{[0,1]\times \R_+}\di \x \sum_{\gamma_1 \in \Z}\int_{0}^{1}\di y_1 \int_{0}^L \di y_2 B_{\dotw}(y_1,y_2;x_1+\gamma_1,x_2) A_{\dotw}(x_1+\gamma_1,x_2;y_1,y_2) \right)\\
	&=\E\left( \int_{0}^1 \di y_1 \int_0^L \di y_2  \int_{\R \times \R_+} \di \x  B_{\dotw}(y_1,y_2;\x) A_{\dotw}(\x;y_1,y_2) \right)
	\end{aligned}
	\end{equation}
	where first we have used the exponential decay of the integral kernels to exchange the integral and the expectation over the disorder configurations, and then we have used that the expectation is invariant over the action of the translation group.

	Case (ii) can be proved in an analogous way. Indeed, the exponential decay of the integral kernels assures that $\chi_{\Omega} A_{\omega} B_{\omega}$ and $\chi_{\Omega} B_{\omega} A_{\omega}$ are trace class (one can write them as a product of two  Hilbert-Schmidt operators). Since they have a jointly continuous integral kernel we have that 
	$$
	\Tr(\chi_{\Omega} A_{\omega} B_{\omega}) = \int_{\Omega} \mathrm{d}\xx \int_{\R^2} \mathrm{d}\y A_{\omega}(\xx;\y) B_{\omega}(\y;\xx).
	$$
	From \eqref{eq:covariance2DAux} we have that 
	\begin{equation}
	\label{aux:Tinv}
	A_{\omega}(\xx-\gamma;\y-\gamma) B_{\omega}(\y-\gamma;\xx-\gamma)=A_{T_{\gamma}\omega}(\xx;\y) B_{T_{\gamma}\omega}(\y;\xx) \qquad \forall \gamma \in \Z^2.
	\end{equation}
	Then, decomposing $\R^2= \bigcup_{\gamma \in \Z^2}\left(\Omega + \gamma\right)$ and considering \eqref{aux:Tinv}, we can conclude in the same way as we have done in \eqref{eq:auxC1}, that is
	$$
	\begin{aligned}
	&\E\left(\Tr(\chi_{\Omega} A_{\dotw} B_{\dotw})\right) = \E\left(\int_{\Omega}\mathrm{d}\xx \sum_{\gamma \in \Z^2} \int_{\Omega} \mathrm{d}\yy A_{\dotw}(\xx;\yy - \gamma) B_{\dotw}(\yy-\gamma;\xx)\right) \\
	&=\E\left(\int_{\Omega} \mathrm{d}\xx \sum_{\gamma \in \Z^2}  \int_{\Omega} \mathrm{d}\yy A_{T_{\gamma}\dotw}(\xx+\gamma;\yy) B_{T_{\gamma}\dotw}(\yy;\xx+\gamma) \right) \\
	&\int_{\Omega}\mathrm{d}\yy  \sum_{\gamma \in \Z^2} \E\left( \int_{\Omega} \mathrm{d}\xx B_{\dotw}(\yy;\xx+\gamma)  A_{\dotw}(\xx+\gamma;\yy) \right) = \E\left(\Tr(\chi_{\Omega} B_{\dotw} A_{\dotw} ) \right).
	\end{aligned}
	$$
	\end{proof}

	\begin{proof}[Proof of Lemma~\ref{AR2}]
	The strategy is to reduce the statement to the setting of \cite[Proposition~3]{SchulzBaldesTeufel13}, in which the result is proved for bounded Hamiltonians.
	
	Let $a<0$ with $a/4 <\inf \Sigma(0)$. We may assume that $F$ is a Schwartz function supported in $(a/3,\infty)$. Moreover, let us use the shorthand notation $H_{\omega,b}=H_\omega$, since the dependence on $b$ is not playing any role here. We have that $(x-a)^{N}F(x)$ is a bounded function, which implies by functional calculus that 
	$$ \left(H_{\omega}-a\right)^{N} F(H_{\omega})$$
	is a bounded operator for all $N$. 
	
	 Any operator of the type $\chi_\Omega (H_\omega-a)^{-N} X_j H_\omega$ or $\chi_\Omega H_\omega X_j  (H_\omega-a)^{-N}$ is trace class for all $N\geq 3$; this can be proved using the same idea of sending some exponential decay over the resolvents and using that $\chi_\Omega$ is compactly supported, see the proof of Lemma~\ref{AR1} for more details. From Lemma~\ref{lemma:cycle} we get
	\begin{align*}
	&\E\left({\rm Tr}\left(\chi_\Omega  \iu \left[H_{\dotw},X_i\right]F(H_{\dotw})\right) \right )= \E\left({\rm Tr}\left (\chi_\Omega  \iu \left[H_{\dotw},X_i\right]\left(H_{\dotw}-a\right)^{-(N+1)} \left(H_{\dotw}-a\right)^{N+1} F(H_{\dotw})\right )\right) \\
	&= \frac{1}{N}\E\left({\rm Tr}\left (\chi_\Omega  \iu \left[\left(H_{\dotw}-a\right)^{-N},X_i\right] \left(H_{\dotw}-a\right)^{N+1} F(H_{\dotw})\right) \right).
	\end{align*}
	Define the Schwartz function $f(x):=(x-a)^{N+1} F(x)$, and $f (H_{\omega}) = \left(H_{\omega}-a\right)^{N+1} F(H_{\omega}).$

	The function $f$ can be uniformly approximated by functions that are smooth and with compact support in  $(a/2,\infty)$, that is there exists $\{g_n\}_{n \in \N}$ such that $g_n \in C^{\infty}_0( (a/2,\infty))$ and $$\lim_{n \to \infty} \sup_{x \in \R} |g_n(x)-f(x)| =0.$$ If $N \geq 2$, one can prove as before that $\chi_\Omega  \iu [\left(H_{\omega}-a\right)^{-N},X_i]$ is a trace class operator, hence we have
	$$
	\left|{\rm Tr}\left(\chi_\Omega  \iu \left[\left(H_{\omega}-a\right)^{-N},X_i\right] \left(g_n(H_{\omega})- f(H_{\omega})\right)\right)\right| \leq \left\|\chi_\Omega  \iu \left[\left(H_{\omega}-a\right)^{-N},X_i\right] \right\|_{1} \left\|g_n(H_{\omega})-f(H_{\omega})\right\| ,
	$$
    which means that is enough to prove 
    $$
    {\rm Tr}\left(\chi_\Omega  \iu [\left(H_{\omega}-a\right)^{-N},X_i] g(H_{\omega})\right)=0 \qquad \forall \, g \in \C^{\infty}_0( (a/2,\infty)).
    $$
    
    Assume that $N\geq 2$. Define $\tilde{g}(x):=g(x^{-1/N}+a)$ if $x>0$. We see that $\tilde{g}(x)=0$ if $x$ is close enough to $0$, hence we may smoothly extend it by zero to  $(-\infty,0]$. If $x$ becomes large then $\tilde{g}$ is again zero because $a<a/2<0$. Hence $\tilde{g}\in C_0^\infty(\R)$. Then  $\tilde{g}\left(\left(H_{\omega}-a\right)^{-N}\right)=g(H_{\omega})$ and the operator $\left(H_{\omega}-a\right)^{-N}=:\mathfrak{H}$ can be seen as a bounded Hamiltonian. Then we would like to prove 
    $$ {\rm Tr}\left(\chi_\Omega  \iu \left[\left(H_{\omega}-a\right)^{-N},X_i\right] g(H_{\omega})\right)=
    {\rm Tr}\left(\chi_\Omega  \iu [\mathfrak{H},X_i] \tilde{g}(\mathfrak{H})\right)=0,
    $$
    where the second equality can be obtained by mimicking the method in \cite[Proposition 3]{SchulzBaldesTeufel13}. The idea is as follows: since $\chi_\Omega[\mathfrak{H},X_i]$ is trace class and $\tilde{g}$ has compact support, then up to a use of the Stone-Weierstrass theorem it is enough to prove the identity when $\tilde{g}(\mathfrak{H})$ is replaced by powers of $\mathfrak{H}$. If $M \in \N$, by writing  
    $$\chi_\Omega \mathfrak{H}^M X_i-\chi_\Omega X_i \mathfrak{H}^M=\sum_{j=0}^{M-1}\chi_\Omega  \mathfrak{H}^j[\mathfrak{H},X_i]\mathfrak{H}^{M-j}$$
    where each operator is trace class, we may use the usual trace ciclicity on the left, and \eqref{dc28} on the right to get
    $$0=M\; \mathbb{E}\big ( \Tr\left(\chi_\Omega  [\mathfrak{H},X_i]\mathfrak{H}^M\right)\big ),\quad \forall M\,\geq 1.$$
Furthermore, since the family $(H_{\omega,b})_{\omega \in \Par}$ is ergodic with respect to the lattice $\Z^2$, by applying Birkhoff's ergodic theorem we can conclude from \eqref{eq:EqNoCurrent}  that for almost all $\omega \in \Par$ the limit in \eqref{eq:TDLcurrent} holds true.

	\end{proof}

\section{Proof of Theorem \ref{thm-positive}}
\label{sect6}
The proof can be divided in three main steps. 
\subsection{Step 1: Edge integrated density of states}

Let $\chi_{L}$ denote as before the characteristic function of the set $[0,1]\times [0,L]$. For every $b$ and $L>0$, we have
\begin{equation}
\label{eq:DefNLepsilon}
\rho_{L,\omega}(b)= \frac{1}{L} \Tr \left(\chi_{L}  F(H^E_{\omega,b})\right)= \frac{1}{L} \int_{0}^{1} \mathrm{d} x_1  \int_{0}^{L} \mathrm{d} x_2  \, F(H^E_{\omega,b})(x_1,x_2;x_1,x_2).
\end{equation}

\begin{proposition}
For almost every $\omega \in \Par$ we have that
\label{step1}
\begin{equation}
\label{step1:mainEq}
\lim_{L \to \infty} \rho_{L,\omega}(b) =B_F(b) \, .
\end{equation}
\end{proposition}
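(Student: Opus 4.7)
The plan is to split the strip $\mathcal{S}_L=[0,1]\times[0,L]$ into a boundary layer $[0,1]\times[0,L_0]$ (which contributes only $O(L_0/L)$ after dividing by $L$) and a bulk layer $[0,1]\times[L_0,L]$. On the bulk layer I would replace the diagonal kernel of $F(H^E_{\omega,b})$ by that of $F(H_{\omega,b})$, paying a remainder that decays rapidly in $x_2$, and then invoke Birkhoff's ergodic theorem in the $x_2$-direction to identify the limit as $B_F(b)$.

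First I would establish the pointwise comparison
\[
\bigl|F(H^E_{\omega,b})(x_1,x_2;x_1,x_2)-F(H_{\omega,b})(x_1,x_2;x_1,x_2)\bigr|\;\leq\;C_N\,\langle x_2\rangle^{-N},\qquad x_2\geq 0,
\]
uniformly in $\omega\in\Par$ and $x_1\in[0,1]$, for every $N\geq 1$. The mechanism is exactly the one used in the proof of Proposition~\ref{AR1}: representing
\[
F(H^E_{\omega,b})-\chi_E F(H_{\omega,b})\chi_E=-\tfrac{1}{\pi}\!\int_{\mathcal D}\bar\partial G_N(z)\,R^E(z)\,\mathrm{d}z_1\mathrm{d}z_2, \quad R^E(z):=(H^E_{\omega,b}-z)^{-1}-\chi_E(H_{\omega,b}-z)^{-1}\chi_E,
\]
and decomposing $R^E(z)$ by geometric perturbation theory (Section~\ref{subsec:GPT}) so that every summand carries at least one cutoff supported in the boundary strip $\{x_2\in[0,2\ell]\}$. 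Combes--Thomas-type exponential decay of the resolvent kernels then yields, on the diagonal at height $x_2>2\ell$, a bound $\sim\zeta^M\,\e^{-\alpha(x_2-2\ell)/\zeta}$ on each term, with $\zeta=\langle z_1\rangle/|z_2|$; the rapid decay of $\bar\partial G_N$ at large $\zeta$ compensates after integration over $\mathcal D$ and produces arbitrary polynomial decay in $x_2$. Uniformity in $\omega$ is automatic since the ingredients of Section~\ref{subsec:GPT} are.

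Next I would reduce to a purely bulk computation. Using the decay estimate,
\[
\Bigl|\rho_{L,\omega}(b)-\tfrac{1}{L}\!\int_0^1\!\mathrm{d}x_1\!\int_0^L\!\mathrm{d}x_2\,F(H_{\omega,b})(x_1,x_2;x_1,x_2)\Bigr|\;\leq\;\tfrac{C_N}{L}\!\int_0^L\langle x_2\rangle^{-N}\mathrm{d}x_2\;\xrightarrow[L\to\infty]{}\;0.
\]
Set $f(\omega):=\int_0^1\!\mathrm{d}x_1\!\int_0^1\!\mathrm{d}x_2\,F(H_{\omega,b})(x_1,x_2;x_1,x_2)$; then $f\in L^1(\Par,\Pro)$ with $\E(f)=B_F(b)$. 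The $\Z^2$-magnetic covariance \eqref{eq:covariance2D}, evaluated on the diagonal, reads $F(H_{T(\gamma)\omega,b})(x,x)=F(H_{\omega,b})(x-\gamma,x-\gamma)$, so for $k\in\mathbb{N}_0$
\[
f(T(0,-k)\omega)=\int_0^1\mathrm{d}x_1\!\int_k^{k+1}\!\mathrm{d}x_2\,F(H_{\omega,b})(x_1,x_2;x_1,x_2).
\]
Writing $n:=\lfloor L\rfloor$, the bulk integral equals $\tfrac{1}{L}\sum_{k=0}^{n-1}f(T(0,-k)\omega)$ plus a remainder of size $O(1/L)$. Since $\Pro$ is a Bernoulli measure, the subgroup action $\{T(0,-k)\}_{k\in\Z}$ is ergodic, and Birkhoff's theorem yields $\Pro$-almost sure convergence of these Ces\`aro sums to $\E(f)=B_F(b)$.

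The principal obstacle is the first step: the diagonal bound must be \emph{uniform in $\omega$} and must provide \emph{arbitrary polynomial} decay in $x_2$ without any compact-support assumption on $F$. Controlling the polynomial growth in $\zeta$ coming from the resolvent estimates requires taking $N$ large in the almost-analytic extension $G_N$, exactly as in the proof of Proposition~\ref{AR1}; additional care is needed because the operator $F(H_{\omega,b})$ appearing in the comparison need not be bounded, which is handled by the routine trick of trading a positive power of the resolvent against a polynomial weight. Once the decay estimate is in hand, the remaining steps are a straightforward combination of a boundary-layer volume estimate and Birkhoff's theorem.
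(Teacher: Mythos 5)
Your proposal is correct and follows essentially the same route as the paper: both compare the edge and bulk operators via the Helffer--Sj\"ostrand formula combined with geometric perturbation theory, so that the discrepancy between $F(H^E_{\omega,b})$ and $F(H_{\omega,b})$ is confined to a neighbourhood of the boundary, and both identify the limit through Birkhoff's ergodic theorem applied to the $\Z^2$-covariant diagonal kernel of $F(H_{\omega,b})$. The only (harmless) organizational difference is that you extract a pointwise $O(\langle x_2\rangle^{-N})$ bound on the diagonal kernel difference using a fixed-width boundary cutoff, whereas the paper uses $L$-dependent cutoffs at scale $\sqrt{L}$ and estimates the traces of the various geometric-perturbation terms directly, obtaining $O(L^{-1/2})$ errors.
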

\begin{proof}
The idea of the proof is rather standard and consists in showing that by substituting the integral kernel of the resolvent of the edge Hamiltonian with the integral kernel of the resolvent of the bulk Hamiltonian in the trace in \eqref{eq:DefNLepsilon}, the error goes to zero with $L$. We note that by employing the same methods used in the proof of Lemma~\ref{lemma:IntTrClassEstimateF}, one can show that all the operators appearing below are trace class. Using \eqref{gpt1}, we can decompose the trace in \eqref{step1:mainEq} in two terms:
\begin{equation}\label{dc50}
\begin{aligned}
\Tr \left(\chi_{L} F(H^E_{\omega,b})\right)&= -\frac{1}{\pi} \Tr\left(\chi_{L}\int_{\mathcal{D}} \mathrm{d}z_1\mathrm{d}z_2 \, \bar{\partial} {F_N}(z) U_{L,\omega}(z)\right) \\
&\frac{1}{\pi}\Tr\left(\chi_{L} \int_{\mathcal{D}} \mathrm{d}z_1\mathrm{d}z_2 \, \bar{\partial} {F_N}(z)  \left(H_{\omega,b} - z \right)^{-1}W_{L,\omega}(z)\right) \\
&=: A_{1,\omega}(L) + A_{2,\omega}(L).
\end{aligned}
\end{equation}
Use \eqref{gpt1} in order to write 
$$A_{1,\omega}(L)= \chi_{L} \widetilde{\eta}_{0} F(H^E_{\omega,b}) \eta_{0} + \chi_{L}\widetilde{\eta}_{L} F(H_{\omega,b}) \eta_{L}.$$
From \eqref{eq:PolLocalizationF} and \eqref{eq:defEta} we get
$$
\left|  \frac{1}{L}\Tr \left(\eta_0 \chi_{L} F\left(H^E_{\omega,b}\right) \right) \right| \leq \frac{C}{\sqrt{L}},
$$
where $C$ is independent of $\omega$. Thus this term will go to zero when $L\to \infty$.

Considering the term with $\widetilde{\eta}_L$ 
in $A_{1,\omega}(L)$ we obtain 
$$
\begin{aligned}
\Tr \left(\eta_L \chi_{L} F(H_{\omega,b}) \right) & =  \int_{0}^{1} \mathrm{d} x_1 \int_{0}^{{2\sqrt{L}}} \mathrm{d} x_2  \big (\eta_{L}(\x)-1\big ) F\left(H_{\omega,b}\right)(\x;\x) \\
& \phantom{=}+  \int_{0}^{1} \mathrm{d} x_1 \int_{0}^{L} \mathrm{d} x_2  F(H_{\omega,b})(\x;\x) =: A_{{11,\omega}}(L) + A_{{12,\omega}}(L) .
\end{aligned}
$$
Due to \eqref{eq:PolLocalizationF}, we have that $|A_{11,\omega}(L)|\leq C \sqrt{L}$, where again $C$ is a constant that does not depend on $\omega$. Using the covariance of the integral kernel of $F(H_{\omega,b})$ and Birkhoff's ergodic theorem coupled with the isotropic ergodicity of the disorder, we obtain that for almost every $\omega$
$$
\lim_{L \to \infty} \frac{1}{ L} A_{1,\omega}(L) =
\E\left( \int_{\Omega} {\mathrm{d}} \xx \, F(H_{\dotw,b})(\xx;\xx) \right)=B_F(b) . 
$$
To conclude the proof, it is enough to show that $\left|A_{2,\omega}(L)\right|$ is a boundary term that becomes negligible in the limit $L\to \infty$. Indeed, considering the exponential space localization and the fast decay in $L$ of the integral kernel of $W_{L,\omega}(z)$, see \eqref{eq:WKern}, and reusing the method of Lemma~\ref{lemma:IntTrClassEstimateF} which transforms the localization of the integral kernels near the diagonal into trace norm estimates, we obtain that for every $M>0$ there exists $C_M$ such that
$$
\left|\Tr\left(\chi_{L} \int_{\mathcal{D}} \mathrm{d}z_1\mathrm{d}z_2 \, \bar{\partial} {F_N}(z)  \left(H^E_{\omega,b} - z \right)^{-1} W_{L,\omega}(z)\right)\right| \leq C_M L^{-M}.
$$
Hence this term decays faster than any power of $L$ and the proof is over.
\end{proof}

\medskip

\subsection{Step 2: The thermodynamic limit of the magnetic derivative}

\begin{proposition}
\label{prop:TDLMagneticD}
The functions $\rho_{L,\omega}(\cdot)$ and $B_F(\cdot)$  are differentiable at every $b\in \R$ and for almost every $\omega\in \Par$ we have: 
\begin{equation}
\label{eq:aux10}
\lim_{L \to \infty} \frac{\mathrm{d} \rho_{L,\omega}}{\mathrm{d} b} (b)=\lim_{L \to \infty} \E\left(\frac{\mathrm{d}\rho_{L,\dotw}}{\mathrm{d} b} (b)\right) = B'_{F}(b) .
\end{equation}
\end{proposition}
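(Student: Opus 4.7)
My plan is to derive a manageable integral representation of the magnetic derivative of $\rho_{L,\omega}(b)$ and $B_F(b)$, and then to run the thermodynamic-limit argument of Proposition~\ref{step1} on this representation. The starting point is the Helffer--Sj\"ostrand formula
$$F(H^E_{\omega,b}) = -\frac{1}{\pi}\int_{\mathcal{D}} \mathrm{d}z_1\, \mathrm{d}z_2\, \bar{\partial}F_N(z)\, (H^E_{\omega,b}-z)^{-1},$$
which reduces the question to controlling the $b$-derivative of the resolvent. A naive approach fails because $\partial_b H^E_{\omega,b}$ contains the unbounded operator $A\cdot\Pi$. The remedy is gauge covariant magnetic perturbation theory adapted to the Dirichlet half-space: one compares the kernels via a phase,
$$(H^E_{\omega,b+\epsilon}-z)^{-1}(\x,\y) = e^{\iu\epsilon\phi(\x,\y)}(H^E_{\omega,b}-z)^{-1}(\x,\y) + \epsilon R^E_{\omega,b,\epsilon}(\x,\y;z),$$
where $\phi(\x,\y) = -\tfrac{1}{2}(x_2+y_2)(x_1-y_1)$ is the Landau flux along the segment $[\x,\y]$ and $R^E_{\omega,b,\epsilon}$ is a remainder whose integral kernel is localized near the diagonal with trace-norm bounds that grow only polynomially in the H--S contour parameter $\zeta$. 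Crucially $\phi(\x,\x)=0$, so the dangerous linearly growing term disappears on the diagonal. Integrating the resulting expansion against $\bar\partial F_N$ and letting $\epsilon\to 0$ produces an explicit representation for $\partial_b F(H^E_{\omega,b})$ that is trace class against $\chi_L$, which legitimizes exchanging $d/db$ with the trace and yields differentiability of $\rho_{L,\omega}$ (and analogously of $B_F$).

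Next I would re-run the proof of Proposition~\ref{step1} on $\partial_b F(H^E_{\omega,b})$ in place of $F(H^E_{\omega,b})$. Using the geometric decomposition \eqref{gpt1} one obtains a bulk contribution, a cut-off-at-the-bottom contribution supported in $[0,1]\times[0,2\sqrt{L}]$, a cut-off-at-the-top contribution, and a ``boundary'' piece involving $W_{L,\omega}(z)$. As in \eqref{dc50}, after division by $L$ the last three terms are of order $O(L^{-1/2})$ or better, since the H--S contour integrals inherit the polynomial kernel decay of the GCMPT remainder. What remains is the bulk average
$$\frac{1}{L}\int_0^L \mathrm{d}x_2\int_0^1 \mathrm{d}x_1\, [\partial_b F(H_{\omega,b})](\x;\x),$$
and the $\Z^2$-covariance \eqref{eq:covariance2D} is inherited by the diagonal of this kernel, so Birkhoff's ergodic theorem gives almost-sure convergence to $\E(\Tr(\chi_\Omega \partial_b F(H_{\dotw,b}))) = B'_F(b)$. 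Taking the expectation of the same decomposition with respect to the disorder, and using that the finite-$L$ error terms are uniformly controlled in $\omega$, gives $\E(d\rho_{L,\dotw}/db)(b) \to B'_F(b)$ as well, which establishes the second equality in \eqref{eq:aux10}.

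The main obstacle will be implementing GCMPT in the half-space setting: the whole-plane remainder bounds in \cite{CorneanNenciu1998, Nenciu02, CorneanNenciu09, Cornean10} lean on explicit free-resolvent or Feynman--Kac representations that are not available for the Dirichlet problem on $E$. I would therefore first use the geometric splitting \eqref{gpt1} to isolate a bulk piece on which the standard whole-plane phase comparison applies, and treat the near-boundary piece by standard resolvent estimates on the half-space, relying on the diagonal vanishing of $\phi$ to prevent any surface contribution from blowing up. Verifying that the resulting remainder $R^E_{\omega,b,\epsilon}$ satisfies kernel estimates analogous to \eqref{IntKernelResolvent2} uniformly in $\omega$ and with only polynomial growth in $\zeta$ --- in a form compatible with the H--S contour --- is the technically demanding step; once it is in place, the thermodynamic-limit argument follows Proposition~\ref{step1} closely.
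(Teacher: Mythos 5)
Your overall strategy coincides with the paper's: the Helffer--Sj\"ostrand formula plus a gauge covariant phase comparison for the resolvent, followed by the geometric splitting \eqref{gpt1}, Birkhoff's theorem, and dominated convergence. Two implementation choices differ, and one of them would cause trouble as stated. First, the paper does not use the straight-line Peierls phase $\phi(\x,\y)=-\tfrac12(x_2+y_2)(x_1-y_1)$ but the asymmetric, $x_2$-independent phase $\varphi_2(\x,\y)=(y_1-x_1)y_2$ of \eqref{dc40}. Since $\varphi_2$ commutes with $P_{x_2}$ and satisfies \eqref{dc41}, the perturbation $T^E_{\omega,\epsilon,z}$ contains only the tangential term $A_1(\x-\y)P_{x_1}R^E_{\omega,z}$, and the first-order coefficient comes out exactly as $\int R^E_{\omega,z}(\x;\y)(y_2-x_2)P_{y_1}R^E_{\omega,z}(\y;\x)\,\di\y$ in \eqref{eq:derivativeNL}; this specific form is what is later converted into the edge current $\iu[H^E,X_1]$ in Proposition~\ref{prop:TDLMagneticEdge}. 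With your symmetric phase the residual gauge is the transverse one, $A_t(\x-\y)$, so $T^E$ would also contain the normal velocity $P_{x_2}$ applied to the Dirichlet resolvent; the kernel bound \eqref{IntRes2} does cover that component, so the present proposition could still be pushed through, but you would lose the clean link to $[H^E,X_1]$ downstream.

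Second, and more seriously, your plan to apply \eqref{gpt1} \emph{first} and run the phase comparison only on the bulk piece does not dispose of the half-space problem: the near-boundary parametrix term $\widetilde{\eta}_0(H^E_{\omega,b+\epsilon}-z)^{-1}\eta_0$ still contains the edge resolvent at the shifted field, and ``standard resolvent estimates on the half-space'' give no control on its $b$-derivative --- the whole difficulty is the unbounded $A\cdot(-\iu\nabla-\mathcal A-bA)$ in $\partial_b H^E$, which is just as present near the edge as in the bulk, so you cannot avoid a genuine half-space version of the phase comparison. The paper therefore performs the phase comparison directly on $(H^E_{\omega,b+\epsilon}-z)^{-1}$, using the half-space kernel estimates \eqref{IntKRes1} and \eqref{IntRes2} obtained by the reflection method in the appendix, and only afterwards uses \eqref{gpt1} inside the resulting formula \eqref{eq:derivativeNL} to pass to bulk resolvents (plus an extension of the $\y$-integral from $E$ to $\R^2$, which costs $\mathcal O(L^{-1})$). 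Finally, note the paper's shortcut for the differentiability of $B_F$ itself: since the $\mathcal O(\epsilon^2)$ remainder in \eqref{eq:expansionNL} is uniform in $L$ and $\omega$, one can pass to the limit $L\to\infty$ in the Taylor expansion of $\rho_{L,\omega}(b+\epsilon)$ and read off $B_F'(b)$ without redoing the perturbation theory for the bulk operator, as your last step would require.
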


\begin{proof}
Let us fix $b\in \R$ and show that \eqref{eq:aux10} holds true. In order to shorten the next formulas we denote by  $P_{y_i}$, $i \in \left\{1,2\right\}$ the operator $\left(-\iu\frac{\partial}{\partial y_i} - b A_i- \mathcal{A}_i \right)$ and we introduce the shortcuts $R^E_{\omega,z}:=
(H^E_{\omega,b}-z)^{-1}$ and $R_{\omega,z}:=(H_{\omega,b}-z)^{-1}$.  We start by showing that the following two expansions in $\epsilon$, with $|\epsilon|<1$, hold true: 
	\begin{equation}
	\label{eq:expansionNL}
	\begin{aligned}
	&\rho_{L,\omega}(b+\epsilon)= \rho_{L,\omega}(b)\\
	&+\frac{\epsilon}{\pi L}  \int_{0}^1 \di x_1\int_0^{L}  \di x_2 \int_{\mathcal{D}} \mathrm{d}z_1\mathrm{d}z_2 \,  \bar{\partial} {F_N}(z) \int_{E} \mathrm{d}\mathrm{\y} R^E_{\omega,z}(\x;\y) \left(y_2-x_2\right) P_{y_1} R^E_{\omega,z}(\y;\x)  \\
	& + \mathcal{O}(\epsilon^2),
	\end{aligned}
	\end{equation}
	(where the remainder  $\mathcal{O}(\epsilon^2)$  {\it is uniformly bounded in $L$ and $\omega$}),
	and 
	\begin{equation}
	\label{eq:expansionRhoF}
	\begin{aligned}
	&B_F(b+\epsilon)= B_F(b)\\
	&+ \frac{\epsilon}{\pi } \E\left( \int_{\Omega_0} \mathrm{d}\xx  \int_{\mathcal{D}} \mathrm{d}z_1\mathrm{d}z_2 \,  \bar{\partial} {F_N}(z) \int_{\R^2} \mathrm{d}\mathrm{\y} R_{\dotw,z}(\xx;\y) \left(y_2-\xx_2\right) P_{y_1} R_{\dotw,z}(\y;\xx) \right)   + \mathcal{O}(\epsilon^2).
	\end{aligned}
	\end{equation}

	The main tool will be the magnetic perturbation theory \cite{CorneanNenciu1998,CorneanNenciu09,CorneanMonacoMoscolari2018}, but adapted to operators defined on domains and with a modified asymmetric magnetic phase, as we will explain next. 
	
	Let 
	\begin{equation}\label{dc40}
	    \varphi_2(\x,\y):=(y_1-x_1)y_2,\quad \nabla_{\x}\varphi_2(\x,\y)=(-y_2,0),\quad \varphi_2(\x,\x)=0.
	\end{equation}
	This phase has two important properties. First, it commutes with $P_{x_2}$ because it is independent of $x_2$, and 
	\begin{equation}\label{dc41}
	  \big (P_{x_1}-\epsilon A_1(\x)\big )  \e^{\iu \epsilon \varphi_2(\x,\y)}=\e^{\iu \epsilon \varphi_2(\x,\y)}\big (P_{x_1}-\epsilon A_1(\x-\y)\big ).
	\end{equation}
	Second, it obeys the composition rule 
	\begin{equation}\label{dc42}
	\begin{aligned}
	  &\Phi_2(\x,\x',\y):=\varphi_2(\x,\x')+\varphi_2(\x',\y)-\varphi_2(\x,\y)=(x_1-x_1')(y_2-x_2'),\\
	  &|\Phi_2(\x,\x',\y)|\leq \Vert \x-\x'\Vert \; \Vert \x'-\y\Vert. 
	  \end{aligned}
	\end{equation}
	
	Let $S^E_{\omega,\epsilon,z}$ be the operator with the integral kernel 
	$$S^E_{\omega,\epsilon,z}(\x;\y):=\e^{\iu \epsilon \varphi_2(\x,\y)} R^E_{\omega,z}(\x;\y). $$
	Using the estimate in \eqref{IntKRes1} and the Schur test one can show that this operator is bounded. Moreover, its range lies in the domain of $H^E_{\omega,b+\epsilon}$ and we have the identity (here \eqref{dc41} plays a crucial role)
	\begin{align*}
	(H^E_{\omega,b+\epsilon}-z)  S^E_{\omega,\epsilon,z}=1 + T^E_{\omega,\epsilon,z},  
	\end{align*}
	where the operator $T^E_{\omega,\epsilon,z}$ has an integral kernel given by
	$$T^E_{\omega,\epsilon,z}(\x;\y)=\e^{\iu \epsilon \varphi_2(\x,\y)}\Big (-\epsilon A_1(\x-\y)P_{x_1} R^E_{\omega,z}(\x;\y) +2^{-1}\epsilon^2 |A_1(\x-\y)|^2R^E_{\omega,z}(\x;\y) \Big ). $$
	This kernel is strongly localized around the diagonal and defines a bounded operator. Thus we may write a variant of the ``second resolvent identity" as:
	\begin{align*}
	(H^E_{\omega,b+\epsilon}-z)^{-1}&= S^E_{\omega,\epsilon,z}-(H^E_{\omega,b+\epsilon}-z)^{-1}T^E_{\omega,\epsilon,z}\\
	&=S^E_{\omega,\epsilon,z}-S^E_{\omega,\epsilon,z}T^E_{\omega,\epsilon,z}+(H^E_{\omega,b+\epsilon}-z)^{-1}\left(T^E_{\omega,\epsilon,z}\right)^2.
	\end{align*}
	We are interested in writing a ``nice" expansion up to errors of order $\epsilon^2$ for the integral kernel of the resolvent on the left-hand side. Using the composition rule in \eqref{dc42} we may write:
	\begin{equation}
	\label{eq:aux1}
	\begin{aligned}
	&(H^E_{\omega,b+\epsilon}-z)^{-1}(\x;\y)=\e^{\iu \epsilon \varphi_2(\x,\y)} R^E_{\omega,z}(\x;\y) \\
	& - \epsilon \, \e^{\iu \epsilon \varphi_2(\x,\y)} \int_{E} \mathrm{d}\tilde{\x}  R^E_{\omega,z}(\x;\tilde{\x})  \e^{\iu \epsilon \Phi_2(\x,\tilde{\x},\y)} \left(\tilde{x}_2-y_2\right) P_{\tilde{x}_1} R^E_{\omega,z}(\tilde{\x};\y)+ \Reminder^E_{\omega,\epsilon,z}(\x;\y) .
	\end{aligned} 
	\end{equation}	Before analyzing the remainder $\Reminder^E_{\omega,\epsilon,z}(\x;\y)$, let us focus on the first two terms. Substituting the previous equation in \eqref{eq:DefNLepsilon}, and using that $\varphi_2(\x,\x)=0$, we get that the first term on the right-hand side of \eqref{eq:aux1} gives rise to $\rho_{L,\omega}(b)$. In the second term of \eqref{eq:aux1} we may replace $e^{\iu \epsilon \Phi_2(\x,\tilde{\x},\y)}$ with $1$ at a price of an error of order $\epsilon \Vert \x-\tilde{\x}\Vert \; \Vert \tilde{\x}-\y\Vert$. But polynomial growth in either $\Vert \x-\tilde{\x}\Vert $ or $\Vert \tilde{\x}-\y\Vert$ is taken care of by the exponential decay of the nearby resolvents, at a price of some polynomially growing factors in $\langle z_1 \rangle/|z_2|$, which are taken care of by $\overline{\partial}F_N$ as in \eqref{dc27}. To conclude, the only first order contribution to $\rho_{L,\omega}(b+\epsilon)$ coming from the first two terms on the right-hand side of \eqref{eq:aux1} is
	$$
	 \frac{\epsilon}{\pi L} \int_{0}^1 \di x_1\int_{0}^{L} \di x_2 \int_{\mathcal{D}} \mathrm{d}z_1\mathrm{d}z_2 \, \bar{\partial} {F_N}(z) \int_{E} \mathrm{d}\tilde{\x}  R^E_{\omega,z}(\x;\tilde{\x}) \left(\tilde{x}_2-{x}_2\right) \left(P_{\tilde{x}_1} R^E_{\omega,z}\right)(\tilde{\x};\x) ,
	$$ 
	which is exactly the second term in \eqref{eq:expansionNL}.
	
	Let us now focus on the remainder, which also contains several terms. One of these terms is
	$$
	\begin{aligned}
 \frac{\epsilon^2}{2} \int_{E}\mathrm{d}\tilde{\x} \, \e^{\iu \epsilon \varphi_2(\x,\tilde{\x})} R^E_{\omega,z}(\x;\tilde{\x}) \e^{\iu \epsilon \varphi_2(\tilde{\x},\y)} |A_1(\tilde{\x}-\y)|^2R^E_{\omega,z}(\tilde{\x},\y).
	\end{aligned}
	$$
The growth of $(\tilde{x}_2-y_2)^2$ contained in the vector potential is again taken care of the exponential localization of the nearby resolvent. By using the pointwise estimates in Appendix~\ref{appendix:IntKernels} one can show that the above term is pointwise bounded by $\epsilon^2$ times a constant which is  uniform in $\x$, $\y$ and $\omega$, but which grows polynomially in $\langle z_1 \rangle/|z_2|$. This last growth is again taken care of by $\overline{\partial}F_N$ when one integrates over $\mathcal{D}$. Thus, its contribution to \eqref{eq:expansionNL} will be of order $\epsilon^2$ with a constant which is uniform in $L$ and $\omega$. The other terms of the remainder can be treated in a similar way, and we do not give further details.

	In particular, \eqref{eq:expansionNL} also shows that  $\rho_{L,\omega}$ is differentiable and its derivative is given by
	\begin{align}
	\label{eq:derivativeNL}
	&\frac{\mathrm{d}\rho_{L,\omega}}{\mathrm{d}b}(b)\\
	&=  \frac{1}{\pi L} \int_{0}^1 \di x_1 \int_0^{L} \di x_2\;  \int_{\mathcal{D}} \mathrm{d}z_1\mathrm{d}z_2\, \bar{\partial} {F_N}(z) \int_{E} \mathrm{d}\mathrm{\y} R^E_{\omega,z}(\x;\y) \left(y_2-x_2\right) P_{y_1} R^E_{\omega,z}(\y;\x) . \nonumber 
	\end{align}
	The next step is to prove that we can replace the resolvent of the edge Hamiltonian with the resolvent of the bulk Hamiltonian using \eqref{gpt1} by showing that the error terms goe to zero as $L$ goes to $+\infty$. The terms that contain $\eta_{0}$ and $\widetilde{\eta}_0$, together with their derivatives, give a contribution of order at most ${L^{-1/2}}$. Moreover, the terms sandwiched between derivatives of $\widetilde{\eta}_{L}$ and $\eta_L$ decay faster than any power of $L$ by an estimate analogue to \eqref{eq:WKern} (notice that $\widetilde{\eta}_L(\x)=\widetilde{\eta}_L(x_2)$).

	Thus, the only term not going to zero is the one generated by $\widetilde{\eta}_{L} \left(H_{\omega} - z \right)^{-1} \eta_{L}$.  More precisely, using that $\widetilde{\eta}_{L} {\eta}_{L}={\eta}_{L}$:
	$$
	\begin{aligned}
	&\frac{\mathrm{d}\rho_{L,\omega}}{\mathrm{d}b}(b)\\
	&= \frac{1}{\pi L} \int_{0}^1 \di x_1 \int_0^{L} \di x_2\;  \int_{\mathcal{D}} \mathrm{d}z_1\mathrm{d}z_2 \bar{\partial} {F_N}(z) \int_{E} \mathrm{d}\mathrm{\y} \eta_{L}(\x) \eta_{L}(\y) R_{\omega,z}(\x;\y)  \left(y_2-x_2\right) P_{y_1} R_{\omega,z}(\y;\x)\\
	& \quad + \mathcal{O}\left(L^{-1/2}\right),
	\end{aligned}
	$$
    where the remainder is also uniformly bounded in $\omega$.
	Consider the following integral kernel
	\begin{equation}
	\label{eq:aux6}
	 R_{\omega,z}(\x;\y)  \left(y_2-x_2\right) \left(P_{y_1} R_{\omega,z}\right)(\y;\x) .
	\end{equation}
	Using the exponential localization in $\Vert \x-\y\Vert$ of both factors (see \eqref{IntKRes1} and \eqref{IntKernelResolvent2}), the error we make by replacing both $\eta_L(\x)$ and $\eta_L(\y)$ with $1$ is of order $L^{-1/2}$ and uniformly bounded in $\omega$. Thus:
    \begin{equation}
    \label{eq:aux7}
    \begin{aligned}
    &\frac{\mathrm{d}\rho_{L,\omega}}{\mathrm{d}b}(b)\\
    &= \frac{1}{\pi L} \int_{0}^1\di x_1 \int_0^{L} \di x_2\;  \int_{\mathcal{D}} \mathrm{d}z_1\mathrm{d}z_2\, \bar{\partial} {F_N}(z) \int_{E} \mathrm{d}\mathrm{\y}  R_{\omega,z}(\x;\y)  \left(y_2-x_2\right) P_{y_1} R_{\omega,z}(\y;\x)\\
    &\quad + \mathcal{O}\left(L^{-1/2}\right).
    \end{aligned}
    \end{equation}
	We want to extend the $\y$ integral over the whole $\R^2$ plane while keeping $\x$ in $E$. In the extra contribution where $\y\in \R^2\setminus E$, we have $x_2>0>y_2$ and $|x_2-y_2|\geq  |x_2-y_2|/2 + x_2/2 $. Due to the exponential localization in $\Vert \x-\y\Vert $ of \eqref{eq:aux6}, the integrand of the extra term gains an exponentially localized term in $|x_2|$, which makes the integral over $x_2$ to be bounded. Thus the error we get by extending the $\y$ integral over the whole of $\R^2$ is of order $L^{-1}$ and uniformly bounded in $\omega$, hence
	\begin{align}
	\label{eq:aux11}
	&\frac{\mathrm{d}\rho_{L,\omega}}{\mathrm{d}b}(b)\\
	&= \frac{1}{\pi L} \int_{0}^1 \di x_1 \int_0^{L} \di x_2\;  \int_{\mathcal{D}} \mathrm{d}z_1\mathrm{d}z_2 \, \bar{\partial} {F_N}(z) \int_{\R^2} \mathrm{d}\mathrm{\y}  R_{\omega,z}(\x;\y)  \left(y_2-x_2\right) P_{y_1} R_{\omega,z}(\y;\x) \nonumber \\
		&+ \mathcal{O}\left(L^{-1/2}\right) \nonumber.
	\end{align}
	Since the operator corresponding to the integral kernel from \eqref{eq:aux6}  satisfies the covariance relation, from \eqref{eq:aux11} we get that for almost every $\omega\in \Par$ we have
	\begin{equation}\label{dc51}
	\begin{aligned}
	&\lim_{L \to \infty} \frac{\mathrm{d}\rho_{L,\omega}}{\mathrm{d}b}(b)\\
	& = \frac{1}{\pi }\E\left(\int_{\Omega} \mathrm{d} \xx \int_{D} \mathrm{d}z_1\mathrm{d}z_2  \bar{\partial} {F_N}(z) \int_{\R^2} \mathrm{d}\mathrm{\y} \,  R_{\dotw,z}(\xx;\y)  \left(y_2-\xx_2\right) P_{y_1} R_{\dotw,z}(\y;\xx)\right).
	\end{aligned}
	\end{equation}
    Moreover, since the right-hand side of \eqref{eq:aux11} is also uniformly bounded in $L$ and $\omega$, we can exchange the limit $L\to \infty$ with the expectation with respect to $\omega$ by using Lebesgue dominated convergence theorem.

	The last step is to show that the above expression also equals $B_F'(b)$. One way is to prove it directly applying magnetic perturbation theory in exactly the same way but for the bulk operator. A much easier way is to exploit the fact that the remainder in $\epsilon^2$ from \eqref{eq:expansionNL} is uniform in $L$ and $\omega$, hence by choosing some $\omega$ for which both Proposition~\ref{step1} and \eqref{dc51} hold true we have 
	$$B_F(b+\epsilon)=B_F(b)+\epsilon \lim_{L\to\infty} \frac{\mathrm{d}\rho_{L,\omega}}{\mathrm{d}b}(b) +\mathcal{O}(\epsilon^2),$$
	which is exactly \eqref{eq:expansionRhoF}.
\end{proof}

\subsection{Step 3: {Connecting the magnetic derivative to the total edge current}}

We start with a technical lemma. 
	
	\begin{lemma}\label{CurrentIndipG}
	Let $g\in C^1([0,1])$ be such that  $g(0)=1$ and $g(1)=0$. Then the limit 
	 
\begin{align}
\label{eq:GCurrent}
\mathcal{L}_b:=	\lim_{L\to\infty} \E \left( \int_0^1 \mathrm{d}x_1 \int_{0}^{L} \mathrm{d}x_2\;  g\left (\frac{x_2}{L}\right ) \; \big\{\iu [H_{\dotw,b}^E,X_1]F'(H_{\dotw,b}^E)\big\}(x_1,x_2;x_1,x_2) \right)
	\end{align}

exists and is independent of $g$. 
	\end{lemma}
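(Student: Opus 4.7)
Write the integrand as an average of the edge current density
$$j_1^E(x_2):=\int_0^1 dx_1\,\E\!\left(\{\iu[H_{\dotw,b}^E,X_1]F'(H_{\dotw,b}^E)\}(x_1,x_2;x_1,x_2)\right),$$
and let $j_1^B(x_2)$ be the analogous quantity built from the bulk Hamiltonian. The strategy is to split
$$\int_0^L g(x_2/L)j_1^E(x_2)\,dx_2 = \int_0^L g(x_2/L)\bigl(j_1^E(x_2)-j_1^B(x_2)\bigr)dx_2 \;+\; \int_0^L g(x_2/L)j_1^B(x_2)\,dx_2,$$
handle the two terms by completely different mechanisms, and show that each limit exists and is independent of $g$.

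\textbf{First term (locality of the difference).} By Proposition~\ref{AR1}(iii), the diagonal $I_b(x_1,x_2)$ of the integral kernel of $\E(\mathcal{I}_{\dotw,b})$ is continuous on $\mathcal{S}_\infty$ and satisfies $\int_0^1 dx_1\int_0^\infty dx_2\,|I_b|<\infty$. Since $\int_0^1 dx_1\,I_b(x_1,x_2)=j_1^E(x_2)-j_1^B(x_2)$, the difference is in $L^1([0,\infty))$. As $L\to\infty$, $g(x_2/L)\to g(0)=1$ pointwise, and $|g(x_2/L)|\leq\|g\|_\infty$, so by dominated convergence
$$\int_0^L g(x_2/L)(j_1^E-j_1^B)\,dx_2 \;\longrightarrow\; \int_0^\infty(j_1^E(x_2)-j_1^B(x_2))\,dx_2.$$
The limit depends on $g$ only via $g(0)=1$, hence is $g$-independent under the given boundary conditions.

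\textbf{Second term (periodicity and Riemann--Lebesgue).} Covariance of $H_{\omega,b}$ on $\R^2$ under magnetic $\Z^2$-translations, together with invariance of $\E$ under $T(\gamma)$, implies that $\x\mapsto\E[\{\iu[H_{\dotw,b},X_1]F'(H_{\dotw,b})\}(\x;\x)]$ is $\Z^2$-periodic (the magnetic phases cancel on the diagonal). Hence $j_1^B$ is $1$-periodic in $x_2$, and Lemma~\ref{AR2} applied to $G=F'$ gives $\int_0^1 j_1^B(x_2)\,dx_2=0$. Therefore $J_1^B(x_2):=\int_0^{x_2}j_1^B(t)\,dt$ is bounded and $1$-periodic with some mean $\bar J:=\int_0^1 J_1^B(y)\,dy$. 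Integration by parts in $x_2$ yields
$$\int_0^L g(x_2/L)j_1^B(x_2)\,dx_2 = g(1)J_1^B(L)-g(0)J_1^B(0) - \int_0^1 g'(s)J_1^B(sL)\,ds = -\int_0^1 g'(s)J_1^B(sL)\,ds,$$
using $g(1)=0$ and $J_1^B(0)=0$. Decomposing $J_1^B=\bar J+\tilde J$ with $\tilde J$ periodic of zero mean, the $\bar J$-contribution is $-\bar J(g(1)-g(0))=\bar J$, while the $\tilde J$-contribution tends to $0$: partitioning $[0,L]$ into unit intervals after the substitution $u=sL$ and using the modulus of continuity of $g'\in C^0([0,1])$, the zero-mean property of $\tilde J$ kills the leading term on each interval, leaving an error of order $\omega_{g'}(1/L)+1/L\to0$. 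Thus the second term converges to $\bar J$, again independently of $g$.

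\textbf{Conclusion and obstacle.} Combining the two limits gives $\mathcal{L}_b=\int_0^\infty(j_1^E-j_1^B)\,dx_2+\bar J$, which is independent of $g$. The only nontrivial analytic step is the Riemann--Lebesgue-type averaging in the second term, which must be argued directly (rather than by Fourier decomposition) because we only assume $g\in C^1([0,1])$; using the zero-mean property of $\tilde J$ period by period handles this cleanly. A conceptually cleaner phrasing, if preferred, is to apply the whole argument to the difference $g_1-g_2$ of two admissible cutoffs: then $(g_1-g_2)(0)=(g_1-g_2)(1)=0$ makes both contributions tend to zero, directly proving $g$-independence without computing the limit explicitly.
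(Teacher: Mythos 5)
Your proof is correct and follows essentially the same route as the paper's: the same add-and-subtract of the bulk current, Proposition~\ref{AR1}(iii) plus dominated convergence for the difference term, and periodicity of $j_1^B$ together with Lemma~\ref{AR2} for the remaining bulk term. The only (cosmetic) difference is in the last step: you integrate by parts and average $J_1^B(sL)$ against $g'$, whereas the paper periodizes directly and applies the mean value theorem to $g$ on each unit cell; both yield the same $g$-independent value $-\int_0^1\mathrm{d}x_1\int_0^1\mathrm{d}x_2\,x_2\,K_b(x_1,x_2)$ for that term (your $\bar J=\int_0^1(1-t)j_1^B(t)\,\mathrm{d}t=-\int_0^1 t\,j_1^B(t)\,\mathrm{d}t$ by Fubini and the mean-zero property).
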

	
	\begin{proof}
	By using the ergodicity of $(H_{\omega,b})_{\omega \in \Par}$ and  Proposition~\ref{AR1:point2} we get that 
	$$K_b(x_1,x_2):=\E\left( \big\{\iu [H_{\dotw,b},X_1] F'(H_{\dotw,b}\big\}(x_1,x_2;x_1,x_2)\right)$$
is continuous and $\Z^2$-periodic. Moreover, due to Lemma \ref{AR2} we have
	\begin{align}\label{AG4}
	\int_0^1 \mathrm{d}x_1 \int_{0}^{1} \mathrm{d}x_2\; K_b(x_1,x_2)=0.
	\end{align}

Let us start by adding and subtracting to the right-hand side of \eqref{eq:GCurrent} the same expression with $H^E_{\dotw,b}$ substituted by $H_{\dotw,b}$. Then, by using \eqref{eq:currentL1} 
and the Lebesgue dominated convergence theorem, the lemma follows from the proof of the following identity:
	
\begin{align}\label{AG1}
	\lim_{L\to\infty} &\int_0^1 \mathrm{d}x_1 \int_{0}^{L} \mathrm{d}x_2\;  g\left (\frac{x_2}{L}\right ) K_b(x_1,x_2)
	=-\int_0^1 \mathrm{d}x_1 \int_{0}^{1} \mathrm{d}x_2\;  x_2\;   K_b(x_1,x_2). 
	\end{align}
	Let us now prove \eqref{AG1}. Let $N$ be the integer part of $L$. Because $g$ is continuous and $g(1)=0$ we have 
	\begin{align*}
	\int_0^1 \mathrm{d}x_1 \int_{0}^{L} \mathrm{d}x_2\;  g\left (\frac{x_2}{L}\right ) K_b(x_1,x_2)
	=\int_0^1 \mathrm{d}x_1 \int_{0}^{N} \mathrm{d}x_2\;  g\left (\frac{x_2}{L}\right ) K_b(x_1,x_2)+o(1).
	\end{align*}
	Using the periodicity of $K_b$ in the second variable we have
	\begin{align}\label{AG3}
	\int_0^1 \mathrm{d}x_1 & \int_{0}^{L} \mathrm{d}x_2\;  g\left (\frac{x_2}{L}\right ) K_b(x_1,x_2)=\sum_{k=0}^{N-1}\int_0^1 \mathrm{d}x_1 \int_{0}^{1} \mathrm{d}x_2\;  g\left (\frac{x_2+k}{L}\right ) K_b(x_1,x_2)+o(1).
	\end{align}
	Using \eqref{AG4},
	we may replace $g\left (\frac{x_2+k}{L}\right )$ in \eqref{AG3} with 
	$$g\left (\frac{x_2+k}{L}\right )-g\left (\frac{k}{L}\right )=\frac{x_2}{L}g'(t_k),\quad \frac{k}{L}< t_k< \frac{x_2+k}{L}\leq  \frac{1+k}{L}.$$
	We have the identity
	$$\lim_{L\to\infty}\sum_{k=0}^{N-1}\Big ( g\left (\frac{x_2+k}{L}\right )-g\left (\frac{k}{L}\right )\Big)=x_2\lim_{L\to\infty}\frac{1}{L}\sum_{k=0}^{N-1}g'(t_k)=x_2\int_0^1 g'(t) \di t=-x_2,$$
	which introduced in \eqref{AG3} proves \eqref{AG1}. 
	\end{proof}

    \begin{proposition} \label{prop:TDLMagneticEdge}
	If $\mathcal{L}_b$ is the limit from Lemma \ref{CurrentIndipG}, then we have:
	$$
	\lim_{L \to \infty} \E\left(\frac{\di \rho_{L,\dotw}}{\di b}(b)\right)=-\mathcal{L}_b.
	$$
	\end{proposition}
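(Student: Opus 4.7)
The strategy is to transform the magnetic-derivative representation of $\E[\di\rho_{L,\dotw}/\di b]$ obtained in the proof of Proposition~\ref{prop:TDLMagneticD} (formula~\eqref{eq:derivativeNL}) into the edge-current integral defining $-\mathcal{L}_b$. I would proceed in three steps.

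\textbf{Step 1 (commutator identity).} First I exploit the identities $R^E_{z}(\x;\y)(y_2-x_2) = [R^E_z, X_2](\x;\y) = -\iu(R^E_z v_2 R^E_z)(\x;\y)$, together with the vanishing of the diagonal of $R^E_z v_1 R^E_z = \iu[X_1, R^E_z]$ (whose kernel carries a factor $x_1 - y_1$). These reduce the inner $\y$-integral in \eqref{eq:derivativeNL} to the diagonal value of $R^E_z v_1 X_2 R^E_z$ at $(\x;\x)$. After $z$-integration, Duhamel's formula combined with $\partial_b H = v_1 X_2 = X_2 v_1$ (since $A=(-x_2,0)$ and $[X_2,v_1]=0$) provides the compact identification
\begin{equation*}
\E\!\left[\frac{\di\rho_{L,\dotw}}{\di b}\right] = \frac{1}{L}\,\E\big[\Tr\big(\chi_L (\partial_b F)(H^E_{\dotw,b})\big)\big].
\end{equation*}

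\textbf{Step 2 (bulk/edge decomposition).} I add and subtract the bulk operator $(\partial_b F)(H_{\omega,b})$:
\begin{equation*}
\tfrac{1}{L}\E[\Tr(\chi_L(\partial_b F)(H^E))] = \tfrac{1}{L}\E[\Tr(\chi_L(\partial_b F)(H))] + \tfrac{1}{L}\E\big[\Tr(\chi_L[(\partial_b F)(H^E) - (\partial_b F)(H)])\big].
\end{equation*}
The bulk term is $\Z^2$-covariant and its expectation is periodic, with integral over the unit cell equal to $B'_F(b)$; the same averaging used in the proof of Proposition~\ref{prop:TDLMagneticD} shows its limit is $B'_F(b)$. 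The edge correction is proved to be trace class uniformly in $L$ by the same Hilbert-Schmidt factorization strategy employed in Proposition~\ref{AR1}\ref{AR1:point1} -- the factor $\partial_b H = v_1 X_2$ is harmless because it is sandwiched between two copies of $R^E$ whose kernels decay exponentially. Division by $L$ then kills this correction in the limit.

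\textbf{Step 3 (matching with $\mathcal{L}_b$).} It remains to show $B'_F(b) = -\mathcal{L}_b$. I would use the explicit evaluation of $\mathcal{L}_b$ extracted from the proof of Lemma~\ref{CurrentIndipG},
\begin{equation*}
\mathcal{L}_b = \int_0^1 \di x_1 \int_0^\infty I_b(x_1,x_2)\,\di x_2 - \int_0^1 \di x_1 \int_0^1 x_2\, K_b(x_1, x_2)\,\di x_2,
\end{equation*}
and match it against the expression for $B'_F(b)$ obtained by applying the magnetic perturbation expansion directly to $B_F(b) = \E[\Tr(\chi_\Omega F(H_b))]$. The bulk Duhamel identity $B'_F(b) = \E[\Tr(\chi_\Omega F'(H) \partial_b H)] = \int_0^1 \di x_1 \int_0^1 x_2\,K_b\, \di x_2$ is justified rigorously by the same gauge-covariant argument as in Proposition~\ref{prop:TDLMagneticD}. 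The term $\int I_b$ is identified with the trace of the bulk-subtracted edge current operator $\mathcal{I}_{\dotw,b}$ from Proposition~\ref{AR1}, which is produced precisely by the edge correction in Step~2 — the vanishing of the equilibrium bulk current (Lemma~\ref{AR2}) ensures that this correction collects only the genuinely edge-localized contribution, with the correct sign to give $B'_F(b) = -\mathcal{L}_b$.

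\textbf{Main obstacle.} The principal difficulty is the rigorous implementation of the cyclic trace manipulations in the presence of the unbounded multiplication operator $X_2$ and the magnetic velocity operators $v_1, v_2$. Every trace-class factorization must be verified via the Hilbert-Schmidt decomposition of $R^E$-products used in Proposition~\ref{AR1}, combined with the exponential decay estimates of Appendix~\ref{subsec:GPT}. The polynomial bounds on the almost-analytic extension $F_N$ in \eqref{dc27} are essential for exchanging $L\to\infty$ with the $\di z$-integration, while the subtle cancellation of bulk contributions at a unit-cell level relies on Lemma~\ref{AR2} being applied exactly to the emerging commutator structures.
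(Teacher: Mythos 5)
Your proposal has a genuine gap, and it originates in Steps~1 and~3. The identity you rely on, $B_F'(b)=\E\bigl[\Tr\bigl(\chi_\Omega F'(H)\,\partial_b H\bigr)\bigr]=\int_\Omega x_2\,K_b$, is false: in the gapped case $F_0'(H_{\omega,b_0})=0$, so your right-hand side vanishes, while $B_{F_0}'(b_0)=C_0/(2\pi)$ is the (generally nonzero) Chern marker. The point of the gauge-covariant magnetic perturbation theory is precisely that $\partial_b H = X_2 P_1$ cannot be treated by naive Duhamel/cyclicity: the correct expansion \eqref{eq:expansionRhoF} carries the translation-invariant weight $(y_2-x_2)$ \emph{sandwiched between two resolvents}, and one cannot collapse $R_z(\cdots)R_z$ into $F'(H)(\cdots)$ because $X_2$ is unbounded and not covariant, so Lemma~\ref{lemma:cycle} does not apply to it. (Pointwise on the diagonal the $x_2$-term does drop out, as you note, but the resulting object is the diagonal of $\iu R_z v_2 R_z v_1 R_z$ — a three-resolvent, two-velocity Chern-character structure — not of $F'(H)X_2v_1$.) The same confusion makes Steps~2 and~3 mutually inconsistent: Step~2 asserts the edge correction vanishes after dividing by $L$, while Step~3 needs that very correction to supply $\int_0^1\!\int_0^\infty I_b$, which is the total edge current and is not zero. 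If your Steps~1--2 were correct, you would conclude $\lim_L\E[\di\rho_{L,\dotw}/\di b]=\int_\Omega x_2K_b$, which differs from $-\mathcal{L}_b=-\int_0^1\!\int_0^\infty I_b+\int_\Omega x_2K_b$ exactly by the edge current.

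The paper's route avoids all of this and is worth internalizing. Starting from \eqref{eq:derivativeNL}, it uses the two vanishing-trace identities \eqref{dc43} (traces of commutators $\iu[X_1,\chi_L F(H^E)\chi_L]$ and $\iu[X_1,X_2\chi_L F(H^E)\chi_L]$) to replace the weight $(y_2-x_2)$ by $(y_2-L)$ — not by $y_2$. This is the step that manufactures the cutoff $\frac{L-y_2}{L}=1-\tfrac{y_2}{L}=g(y_2/L)$. After extending the integration domains at an $\mathcal{O}(L^{-1})$ cost, the \emph{partial} trace cyclicity of Lemma~\ref{lemma:cycle}(i) (valid for the half-infinite strip, using only $x_1$-covariance) collapses the two resolvents into $F'(H^E)$, yielding $-\E\int_0^1\!\int_0^L(1-y_2/L)\{\iu[H^E,X_1]F'(H^E)\}(\y;\y)$, and Lemma~\ref{CurrentIndipG} with $g(t)=1-t$ finishes the proof. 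You would need to replace your Steps~1--3 by this mechanism (or an equivalent one that keeps the weight translation-invariant until the very end); the decomposition of $\mathcal{L}_b$ you quote from Lemma~\ref{CurrentIndipG} is correct, but it cannot be matched term by term against the naive Duhamel formula.
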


	\begin{proof}
    Recall the notation $\chi_{L}$ for the finite strip $[0,1]\times [0,L]$ and the shortcuts $H^\star_{\omega,b}=H^\star_{\omega}$, $R^\star_{\omega,z}=
(H^\star_{\omega,b}-z)^{-1}$ with $\star \in \left\{ E,\varnothing \right\}$ . Using trace cyclicity we have:
	\begin{equation}\label{dc43}
	\begin{aligned}
	0&= \Tr\left(\iu \left  [X_1,\chi_{L}
	F(H^E_{\omega})\chi_{L}\right]\right)\\
	&=-\frac{1}{\pi} \int_{\R^2} \mathrm{d}\mathrm{\x} \chi_{L}(\x) \int_{\mathcal{D}} \mathrm{d}z_1\mathrm{d}z_2 \,  \bar{\partial} {F_N}(z) \int_{E} \mathrm{d}\mathrm{\y}  R^E_{\omega,z}(\x;\y) P_{y_1} R^E_{\omega,z}(\y;\x),\\
	0&= \Tr\left(\iu \left [X_1,X_2 \chi_{L} F(H^E_{\omega})\chi_{L}\right]\right)  \\
	&=-\frac{1}{\pi}\int_{\R^2} \mathrm{d}\mathrm{\x} \chi_{L}(\x) \; x_2\int_{\mathcal{D}} \mathrm{d}z_1\mathrm{d}z_2 \,  \bar{\partial} {F_N}(z) \int_{E} \mathrm{d}\mathrm{\y}  R^E_{\omega,z}(\x;\y)  P_{y_1} R^E_{\omega,z}(\y;\x).
	\end{aligned}
	\end{equation} 
	 Use \eqref{dc43} in \eqref{eq:derivativeNL} by replacing  $x_2$ with $L$: 
	\begin{align*}
	&\frac{\di \rho_{L,\omega}}{\di b}(b)\\
	&\quad =\frac{1}{\pi L} \int_{0}^1 \di x_1\int_{0}^{L} \di x_2 \int_{\mathcal{D}} \mathrm{d}z_1\mathrm{d}z_2 \,  \bar{\partial} {F_N}(z) \int_{E} \mathrm{d}\mathrm{\y}  R^E_{\omega,z}(\x;\y) \left(y_2-L\right) P_{y_1} R^E_{\omega,z}(\y;\x) .
	\end{align*}

	Now we separate the contribution of the integration over $E$ into two parts: one on the strip $0\leq y_2\leq L$, and the other one on the region $y_2>L$. 
	
	When we integrate over the second region we have $x_2\leq L <y_2$ and 
	$$\Vert \x-\y\Vert\geq \Vert \x-\y\Vert /2 +|x_2-L|/2 +|y_2-L|/2.$$
	Using the exponential localization of the integral kernels in $\Vert \x-\y\Vert$ we can isolate an extra exponential decay in $|x_2-L|$ which makes the integral over $x_2$ convergent, and also an exponential decay in $|y_2-L|$ which takes care of the growth of the factor $y_2-L$. The price we pay is  some polynomial extra growth in $\langle z_1\rangle/|z_2|$. This is a similar argument with the one used in treating the integral of \eqref{eq:aux7}. Hence the contribution from $y_2>L$ is of order $L^{-1}$ and 
	$$
	\begin{aligned}
	&\frac{\di \rho_{L,\omega}}{\di b}(b)\\
	&=-\frac{1}{\pi L} \int_{0}^1 \di x_1\int_{0}^{L} \di x_2 \int_{\mathcal{D}} \mathrm{d}z_1\mathrm{d}z_2 \, \bar{\partial} {F_N}(z) \int_{\R} \mathrm{d}y_1 \int_{0}^{L} \mathrm{d}y_2 R^E_{\omega,z}(\x;\y) \left(L -y_2\right) P_{y_1} R^E_{\omega,z}(\y;\x)\\
	& + \mathcal{O}\left(L^{-1}\right).
	\end{aligned}
	$$
    Now the $y_2$ variable is so that $y_2\leq L$. Reversing the roles of $x_2$ and $y_2$ from the previous argument, we can extend the integral in $x_2$ over the region $x_2>L$,  making only an error of order $L^{-1}$. Thus we get
	\begin{align}
	\label{eq:auxiliary}
	&\frac{\di \rho_{L,\omega}}{\di b}(b)\\	&=-\frac{1}{\pi L} \int_{0}^1 \di x_1 \int_{0}^{\infty} \di x_2  \int_{\mathcal{D}} \mathrm{d}z_1\mathrm{d}z_2 \, \bar{\partial} {F_N}(z) \int_{\R} \mathrm{d}y_1 \int_{0}^{L} \mathrm{d}y_2 R^E_{\omega,z}(\x;\y) \left(L-y_2\right) P_{y_1} R^E_{\omega,z}(\y;\x)\nonumber \\
	& + \mathcal{O}\left(L^{-1}\right) .\nonumber 
	\end{align}
	Take the expectation $\mathbb{E}$ over the random variables. We now apply the ``partial trace cyclicity" from \eqref{dc24}. The integral over $\x$ produces the operator $(R^E_{\omega,z})^2$, which is turned into $F'(H^E_{\omega})$ by the integral over $z$, namely  
	$$
	\frac{1}{\pi}\int_{\mathcal{D}} \mathrm{d}z_1\mathrm{d}z_2 \, \bar{\partial} {F_N}(z) \int_{E} \mathrm{d}\mathrm{\x}  R^E_{\omega,z}(\x;\y)  \left(P_{y_1} R^E_{\omega,z}\right)(\y;\x) = \big ( P_{y_1}F'(H^E_{\omega,b})\big )(\y;\y).
	$$
	
	Thus we obtain: 
	\begin{equation*}
	\begin{aligned}
	&\E\left(\frac{\di \rho_{L,\dotw}}{\di b}(b)\right)=- \E\left(\int_{0}^{1} \mathrm{d}y_1 \int_{0}^{L} \mathrm{d}y_2 \left(1-\frac{y_2}{L}\right) \left\{\iu [H_{\dotw,b}^E,X_1] F'(H_{\dotw,b}^E)\right\} (\y;\y)\right)+ \mathcal{O} \left( L^{-1} \right). 
\end{aligned}
	\end{equation*}
	The function $g:[0,1]\mapsto [0,1]$ given by $g(x_2)=1-x_2$ belongs to $C^1([0,1])$ and obeys $g(0)=1$ and $g(1)=0$. Using this function in  Lemma \ref{CurrentIndipG}, the proof is over.  
	\end{proof}

\section{Proof of Proposition~\ref{prophc1}}\label{sect9}
	Let us assume that $F$ is the Fermi-Dirac distribution
	$$
	F_{\mu,T}(x)=\frac{1}{\e^{(x-\mu)/T}+1}
	$$
	where $\mu \in \R$ is the Fermi energy, and $T>0$ is the temperature in a suitable unit system. $F_{\mu,T}$ is not a Schwartz function, however, since the almost sure spectrum of the family of Hamiltonians $H_{\omega,b}$, that is $\Sigma(b)$, is bounded from  below by some $a<0$ such that $a+2\leq  \inf \Sigma(0)$, we can replace $F_{\mu,T}$ with the function $F_{\mu,T}(x)\varphi_{>a}(x)$, where $0\leq \varphi_{>a}\leq 1$ is a smooth cut-off function which equals $1$ if $x>a+1$, and $0$ if $x<a$. For simplicity of notation we will still denote the new function by $F_{\mu,T}$.

	Consider a function $\varphi \in C^\infty_0(\R)$ such that $\varphi=1$ on $[\mu-\delta,\mu+\delta]$ and $\varphi'$ is supported in $[\mu-2\delta,\mu+2\delta]\subset (E_-,E_+)$, for some $\delta>0$. To ease the notation, in the following we will omit the subscript $\mu$, that is $F_{\mu,T}=F_{T}$. Then we write
	\begin{equation}
	\label{eq:GapLinearity}
	F_{T}= F_{T}\varphi + F_{T}(1-\varphi)\chi_{(-\infty,\mu]} +  F_{T}(1-\varphi)\chi_{(\mu, + \infty)}=:F_T\varphi +F^{(<\mu)}+F^{(>\mu)},
	\end{equation}
	and notice that all the three terms on the right are Schwartz functions. Since the spectral gaps are preserved when $\epsilon:= |b-b_0|$ is small enough  \cite{Cornean10,CorneanMonacoMoscolari2018}, then  $(F_T\varphi)(H_{\omega,b})=0$ if $\epsilon$ is small enough. Also: 
	$$
	\begin{aligned}
	&B_{F_T}(b)=\E\left(\int_{\Omega} \mathrm{d}\xx \,  F^{(<\mu)} (H_{\dotw,b})(\xx;\xx)\right) + \E \left( \int_{{ \Omega}} \mathrm{d}\xx \, F^{(>\mu)} (H_{\dotw,b})(\xx;\xx)\right).
	\end{aligned}
	$$
	Using \eqref{eq:expansionRhoF} and the shorthand notation $\mathcal{R}_{\omega,z}:=\left(H_{\omega,b_0}-z\right)^{-1}$ we may write 
	\begin{equation}
	\label{eq:aux4}
	\begin{aligned}
	&\pi B'_{F_T}(b_0)\\
	& = \E \left( \int_{\Omega} \mathrm{d} \xx \int_{D} \mathrm{d}z_1\mathrm{d}z_2\bar{\partial} {F^{(<\mu)}_N}(z) \int_{\R^2} \mathrm{d}\mathrm{\y}  \mathcal{R}_{\dotw,z}(\xx;\y)  \left(y_2-x_2\right) \left(P_{y_1} \mathcal{R}_{\dotw,z}\right)(\y;\xx) \right) \\
	& \quad + \E \left(\int_{\Omega} \mathrm{d} \xx \int_{D} \mathrm{d}z_1\mathrm{d}z_2 \,\bar{\partial} {F^{(>\mu)}_N}(z) \int_{\R^2} \mathrm{d}\mathrm{\y}  \mathcal{R}_{\dotw,z}(\xx;\y)  \left(y_2-x_2\right)\left(P_{y_1} \mathcal{R}_{\dotw,z}\right)(\y;\xx)\right).
	\end{aligned}
	\end{equation}
	By using the pointwise decay estimates \eqref{IntKRes1} and \eqref{IntKernelResolvent2} for the integral kernels appearing in the above integrals with respect to $\y$, one can show that there exist some $M>0$ and a constant $C$ such that for all $\underline{x}$ and $\omega$ we have 
	\begin{equation}
	\label{eq:aux5}
	\left|\int_{\R^2} \mathrm{d}\mathrm{\y}  \mathcal{R}_{\omega,z}(\xx;\y)  \left(y_2-x_2\right)\left(P_{y_1} \mathcal{R}_{\omega,z}\right)(\y;\xx)\right| \leq C \frac{\langle z_1\rangle ^M}{|z_2|^M}. 
	\end{equation}
	The function $F^{(>\mu)}$ is supported in $[\mu+\delta,\infty)$, hence: 
	\begin{equation}
	\label{eq:FermiDerivatives}
	\sup_{|x-\mu|\geq \delta}\left|\frac{\partial^n F_{\mu,T}}{\partial x^n}(x)\right| \leq C_{n,\delta} \, \e^{- \frac{\delta}{3T}} \e^{- \frac{|x-\mu|}{3T}}.
	\end{equation}
	Let us remind the explicit construction of the almost analytic extension of $F^{>\mu}$. Let $0\leq g(y)\leq 1$ with $g\in C_0^\infty(\R)$ such that $g(y)=1$ if $|y|\leq 1/2$ and $g(y)=0$ if $|y|>1$. Fix some $N>M+2$, where $M$ is the number appearing in \eqref{eq:aux5}. Define 
	\begin{equation*}
	F^{(>\mu)}_N(x+\iu y) = g(y) \sum_{j=0}^N \frac{1}{j!} \frac{\partial^j F^{(>\mu)}}{\partial x^j}(x) (\iu y)^j.
	\end{equation*}
	Then for all $0<T\leq 1$ we have
	\begin{equation}
	\label{eq:auxleqmu}
	\sup_{z\in \mathcal{D}} \left| \bar{\partial} F^{(>\mu)}_N (z) \right| \leq C_{N,\mu,\delta} \frac{|z_2|^N}{\langle z_1\rangle ^N} \e^{-\frac{\delta}{4T}}.
	\end{equation}
	Hence, from \eqref{eq:auxleqmu} and \eqref{eq:aux5} we get:
	\begin{equation}
	\label{eq:zeroT1}
	\begin{aligned}
	&\left|\E \left( \int_{\Omega} \mathrm{d} \xx \int_{D} \mathrm{d}z_1\mathrm{d}z_2\,\bar{\partial} {F^{(>\mu)}_N}(z) \int_{\R^2} \mathrm{d}\mathrm{\y}  \mathcal{R}_{\dotw,z}(\xx;\y)  \left(y_2-x_2\right)\left(P_{y_1} \mathcal{R}_{\dotw,z}\right)(\y;\xx) \right) \right| \leq C  \e^{- \frac{\delta}{4T}}
	\end{aligned}
	\end{equation}
	which implies that the second term on the right-hand side of \eqref{eq:aux4} vanishes when $T$ goes to zero.
	
	Let us denote by $	F_0:= \varphi_{>a}(1-\varphi)\chi_{(-\infty,\mu]}$. 
	Let us now prove that the first term on the right-hand side of \eqref{eq:aux4} converges exponentially fast to $B'_{F_0}(b_0)$. First, $F^{(<\mu)}$ is supported in the interval $[a,\mu-\delta]$, hence the integral over $\mathcal{D}$ reduces to an integral over $\mathcal{D}_{<\mu}:= [a,\mu - \delta] \times [-1,1]$. Second, we have that 
	$$
	\sup_{z \in \mathcal{D}_{<\mu}} \left| \bar{\partial} F^{(<\mu)}_N (z) -  \bar{\partial}F_{0,N}(z) \right| \leq C_{N,\delta} \frac{|z_2|^N}{\langle z_1\rangle ^N} \e^{-\frac{\delta}{4T}} .
	$$
	The previous estimate, together with \eqref{eq:aux5}, implies:
	$$
	\left| \E\left( \int_{\Omega} \mathrm{d} \xx \int_{\mathcal{D}} \mathrm{d}z_1\mathrm{d}z_2\bar{\partial} {F^{(<\mu)}_N}(z) \int_{\R^2} \mathrm{d}\mathrm{\y}  \mathcal{R}_{\dotw,z}(\xx;\y)  (y_2-x_2) \left(P_{y_1} \mathcal{R}_{\dotw,z}\right ) (\y;\xx) \right)- B'_{F_0}(b_0) \right|\leq  C \e^{-\frac{\delta}{4T}}. 
	$$
	Introducing this estimate and \eqref{eq:zeroT1} in \eqref{eq:aux4},  we get \eqref{hc4}.
	
	Then, since $\textrm{supp}(F_0') \subset \R \setminus \Sigma(b_0)$, we have that $F'_0(H_{\omega,b_0})=0$ for almost every $\omega$. Therefore, by exploiting also the regularity of the integral kernels given by Proposition~\ref{AR1}(ii) to exchange the trace with the expectation with the disorder,  we get
	$$
	\begin{aligned}
    B'_{F_0}(b_0) &=  -\lim_{L\to \infty} \mathbb{E}\left(\Tr  \left( \widetilde{\chi}_{L} \chi_{\infty} \left(\iu \left[H_{\dotw,b}^E,X_1\right]  F'_0(H_{\dotw,b}^E) - \chi_{E} \iu \left[H_{\dotw,b},X_1\right]  F'_0(H_{\dotw,b}) \chi_{E} \right)\right) \right) \\
    &= -\lim_{L\to \infty} \mathbb{E}\left(\Tr\left( \widetilde{\chi}_{L} \mathcal{I}_{\dotw,b} \right) \right) =-\lim_{L\to \infty} \int_0^1 \mathrm{d}x_1 \int_{0}^{L} \mathrm{d}x_2 \,  g\left(\frac{x_2}{L}\right) I_{b}(x_1,x_2) . \\ 
	\end{aligned}
	$$
    Finally, the zero-temperature limit of the bulk-edge correspondence, i.e.  \eqref{eq:ZeroTLimit}, follows by using \eqref{eq:currentL1} and the Lebesgue dominated converge theorem. \hfill \qedsymbol

\appendix

\section{Analysis of integral kernels}

\subsection{Geometric perturbation theory}
\label{subsec:GPT}
In the following it will be important to relate the resolvent of the edge Hamiltonian, $H^E_{\omega,b}$, with the resolvent of the bulk Hamiltonian  $H_{\omega,b}$. 

Let $L\geq1$ and consider the set $\Xi_L(t):=\left\{\x \in E \, | \, \mathrm{dist}(\x,\partial E) \leq t \sqrt{L} \right\}$, $t>0$.  Then let $0\leq \eta_0, \eta_L \leq 1$ be two smooth non-negative functions only depending on $x_2$ such that $\eta_0(\x)+\eta_L(\x)=1$ for every $\x \in E$. Moreover, we assume that 
\begin{equation}
\label{eq:defEta}
\begin{aligned}
&\mathrm{supp}(\eta_0) \subset \Xi_L(2) ,\\
&\mathrm{supp}(\eta_L) \subset E \setminus \Xi_L(1) , \\
&\| \partial^{n}_2\eta_i \|_{\infty} \simeq L^{-\frac{n}{2}},\quad n\geq 1, \quad i \in \left\{0,L\right\} .
\end{aligned}
\end{equation}
We now introduce another couple of  non-negative functions  $0\leq \widetilde{\eta}_0, \widetilde{\eta}_L \leq 1$ again only depending on $x_2$, with the properties:
\begin{equation}
\label{eq:defEta2}
\begin{aligned}
& \mathrm{supp}(\widetilde{\eta}_0) \subset  \Xi_L\left(\frac{11}{4}\right), \\
& \mathrm{supp}(\widetilde{\eta}_L) \subset E \setminus \Xi_L\left(\frac{1}{4}\right), \\
& \widetilde{\eta}_i \eta_i = \eta_i , \quad i \in \left\{0,L\right\} ,  \\
& \mathrm{dist}\left(\mathrm{supp}(\partial_2 \widetilde{\eta}_i), \mathrm{supp}(\eta_i) \right) \simeq \sqrt{L} , \\
&\| \partial^{n}_2{ \widetilde{\eta}_i} \|_{\infty} \simeq L^{-\frac{n}{2}},\quad n\geq 1, \quad i \in \left\{0,L\right\} .
\end{aligned}
\end{equation}
The functions $\widetilde{\eta}_0, \widetilde{\eta}_L$ are a sort of stretched version of ${\eta}_0$ and ${\eta}_L$. 

The space $L^2(E)$ can be canonically identified with a subset of $L^2(\R^2)$, where the identification operator $I_{E}: L^2(E)  \to L^2(\R^2)$ is given by 
$$
I_{E}(\psi):= \chi_{E} \psi \in L^2(\R^2) .
$$ 
In this setting, the multiplication operators $\eta_{L}, \widetilde{\eta}_L$,  can be viewed as operators from $L^2(\R^2)$ to $L^2(E)$, and vice versa.  Moreover, we have the following identity valid on the domain of $H_{\omega,b}$:
$$
\left(H^E_{\omega,b}- z \right)  \tilde{\eta}_{L} =  \left(H_{\omega,b} - z \right)  \tilde{\eta}_{L}.
$$

Define the bounded operator in $L^2(E)$:
$$
U_{L,\omega}(z):=\widetilde{\eta}_{L} \left(H_{\omega,b} - z \right)^{-1} \eta_{L}  +   \widetilde{\eta}_0 \left( H^E_{\omega,b} - z \right)^{-1} \eta_0.
$$
Then, we have
$$
\left(H^E_{\omega,b} - z \right)U_{L,\omega}(z)=1 + W_{L,\omega}(z) \, ,
$$
where $W_{L,\omega}(z)$ is the bounded operator given by 
$$
\begin{aligned}
W_{L,\omega}(z)&:= \left(-2 \iu \nabla \widetilde{\eta}_L \cdot \left(-\iu\nabla - \mathcal{A} - b A\right) - (\Delta \widetilde{\eta}_L) \right)   \left(H_{\omega,b} - z \right)^{-1} \eta_{L}  \\
& \left(-2 \iu \nabla \widetilde{\eta}_0 \cdot \left(-\iu\nabla - \mathcal{A} - b A\right) - (\Delta \widetilde{\eta}_0) \right)   \left( H^E_{\omega,b} - z \right)^{-1} \eta_{0} .
\end{aligned}
$$
Therefore, the resolvent of the edge Hamiltonian obeys the identity:
\begin{equation}
\label{gpt1}
\left(H^E_{\omega,b} - z \right)^{-1}=U_{L,\omega}(z)  - \left(H^E_{\omega,b} - z \right)^{-1}W_{L,\omega}(z) .
\end{equation}

\subsection{Integral and trace-class estimates for bulk and edge Hamiltonian}
\label{appendix:IntKernels}
In this section we collect the important estimates on the integral kernel that we use in the proof of the main result. The proofs relies on the control, with respect to $z \in \C \setminus \R$, of the integral kernels of three type of operators: $\left(H^\star_{\omega,b}- z\right)^{-1}$, $\left[H^\star_{\omega,b}, X_i\right]\left(H^\star_{\omega,b}- z\right)^{-1}$ and $F(H^\star_{\omega,b})$, where $\star \in \{E, \varnothing\}$ and $i \in \{1,2\}$. No gap assumption is required in the following.

Consider the set $\mathcal{D}:=\R \times [-1,1] \subset \C$ and let $z_1:=\mathrm{Re}(z)$, $z_2:=\mathrm{Im}(z)$ and $\zeta:=\langle z_1 \rangle |z_2|^{-1}$. Since the scalar potential $V$ is smooth and periodic, and the random potential $V_\omega$ is smooth, uniformly bounded and with uniformly bounded derivatives, from \cite[Proposition B.8]{CorneanFournaisFrankHelffer} we get that there exists some $\delta>0$ such that for every $z \in \mathcal{D}$ and for all $\x\neq \x' \in \R^2$ we have 
\begin{equation}
\label{IntKRes1}
\sup_{\omega \in \Par}\left|\left(H^\star_{\omega,b}-z\right)^{-1}(\x;\x')\right| \leq C \zeta^4 \left(1+ \left| \ln \left\|\x-\x'\right\| \right| \right) \e^{-\frac{\delta}{\zeta} \left\|\x-\x'\right\| } \qquad \star \in \{E, \varnothing\}.
\end{equation}

Moreover, in view of the regularity of the magnetic potential $\mathcal{A}$ and the definition of $A$, we get a control on the integral kernel of the bulk operators $\left[H_{\omega,b}, X_i\right]\left(H_{\omega,b}- z\right)^{-1}$. For $i \in \left\{1,2\right\}$, it holds true \cite[Proposition B.9]{CorneanFournaisFrankHelffer}, for all $\x\neq \x' \in \R^2$, that
\begin{equation}
\label{IntKernelResolvent2}
\sup_{\omega \in \Par}\left| \left(\left(-\iu\nabla_{\x} - \mathcal{A} - b A \right)_i(H_{\omega,b}-z)^{-1}\right)(\x;\x') \right| \leq C \zeta^4 \left(1+\frac{1}{\|\x-\x'\|}\right)\e^{-\frac{\delta}{\zeta}\|\x-\x'\|} .
\end{equation}
Notice that, when $\star = \varnothing$ and $z \in \C \setminus \Sigma(b)$, the integral kernels in \eqref{IntKRes1} and \eqref{IntKernelResolvent2} are jointly continuous for $\x \neq \x' \in \R^2$,  see \cite[Proposition 3.1]{CorneanNenciu09}.

Let us now prove an estimate analogous to \eqref{IntKernelResolvent2} for the edge Hamiltonian. We will exploit gauge covariant magnetic perturbation theory \cite{CorneanNenciu1998,Nenciu02,CorneanNenciu09,CorneanMonacoMoscolari2018} together with the explicit formula for the integral kernel of the free Laplacian defined with Dirichlet boundary condition on $E$.

Consider the operator $(-\Delta_E+\lambda)^{-1}$, where $\Delta_E$ is the Laplace operator restricted to the upper half plane $\R \times [0,+\infty)$ with Dirichlet boundary conditions.  Using the reflection method we can explicitly write the integral kernel of $(-\Delta_E+\lambda)^{-1}$ :
\begin{equation}
\label{eq:freeLaplacianEdge}
(-\Delta_E+\lambda)^{-1}(\x;\y)= (-\Delta+\lambda)^{-1}(\x;\y) - (-\Delta+\lambda)^{-1}(\x;\y_*) ,
\end{equation}
where $\lambda >0$, $\y_*:=(y_1,-y_2)$, and $(-\Delta+\lambda)^{-1}(\x;\y)$ is the integral kernel of the resolvent of the Laplace operator on $\R^2$, that is 
$$
(-\Delta+\lambda)^{-1}(\x;\x')=\frac{1}{2\pi} K_0 (\sqrt{\lambda} \|\x-\y\|)
$$
where $K_\nu$, $\nu \in \Z$ denotes the Macdonald function. Let us briefly remind here some useful properties of the Macdonald functions:
\begin{align*}
&K'_0(x)=-K_1(x),\\
&K_\nu(x)=\sqrt{\frac{\pi}{2 x}} e^{-x} \left(1+ \mathcal{O}\left(\frac{1}{x}\right)\right) \text{ as } x \to +\infty ,\\
&K_0(x)= -\log(x) + \mathcal{O}(1)  \text{ as } x \to 0, \\
&x^{\nu}K_\nu(x)=2^{\nu-1} \Gamma(\nu) + o(1)  \text{ as } x \to 0, \nu \neq 0.
\end{align*} Consider the operator $S(-\lambda)$, which is the bounded and selfadjoint operator associated to the integral kernel
$$
S(-\lambda)(\x;\y):=\e^{\iu  \varphi(\x,\y)} (-\Delta_E+\lambda)^{-1}(\x;\y) ,
$$
where $\varphi$ is the Peierls phase defined by 
$$
\varphi(\x;\y):= -b\int_{\x}^{\y} A(\mathbf{s}) \cdot \mathrm{d}\mathbf{s} =-\frac{b}{2}(x_1-y_1)(x_2+y_2)
$$
for a fixed $b \in \R$, while $A(\x)=(-x_2,0)$.  If $A_t(\x):=\frac{1}{2}(-x_2,x_1)$ denotes the transverse gauge, then we have the important identity 
\begin{equation}\label{dc20}
bA(\x)-\nabla_{\x} \varphi(\x,\y)=bA_t(\x-\y).
\end{equation}
 
 The operator $S(-\lambda)$ maps the Hilbert space $L^2(E)$ into the domain of the operator $\left(-\iu\nabla - b A\right)^2$ with Dirichlet boundary conditions at $x_2=0$. Moreover, we have
$$
\left(\left(-\iu\nabla - b A\right)^2 + \lambda\right) S(-\lambda) = 1+ T(-\lambda),
$$
where $T(-\lambda)$ is the bounded operator associated to the integral kernel
$$
\begin{aligned}
T(-\lambda)(\x;\y):=-2\e^{\iu  \varphi (\x,\y)}b\;  A_t(\x-\y)\cdot ({-\iu\nabla}_\x) \left((-\Delta+\lambda)^{-1}(\x;\y)  - (-\Delta+\lambda)^{-1}(\x;\y_*)\right)\\
+\e^{\iu  \varphi (\x,\y)}b^2\; |  A_t(\x-\y)|^2\; \left((-\Delta+\lambda)^{-1}(\x;\y) -  (-\Delta+\lambda)^{-1}(\x;\y_*)\right).
\end{aligned}
$$
We note that the kernel $T(-\lambda)(\x;\y)$ is bounded near the diagonal because of the presence of $A_t(\x-\y)$.
For $\lambda>0$, the operator $\left(\left(-\iu\nabla - b A \right)^2 + \lambda\right)$ with Dirichlet boundary condition at $x_2=0$ is invertible, therefore we get
\begin{equation}\label{jul1}
\left(\left(-\iu\nabla - b A\right)^2 + \lambda\right)^{-1}=S(-\lambda) - \left(\left(-\iu\nabla - b A\right)^2 + \lambda\right)^{-1} T(-\lambda) .
\end{equation}
Moreover, if $\lambda$ is sufficiently large, then using the Schur test one can show that the norm of $T(-\lambda)$ goes to zero and we may write 
$$\big ((-i\nabla -bA)^2+\lambda\big )^{-1}=S(-\lambda)\sum_{n= 0}^{\infty} (-1)^n T(-\lambda)^n.$$
Using the theory of polar operators, in particular \cite[Lemma 2, p.213]{Vladimirov} (see also \cite{MoscolariStottrup}), one can show that the above series consists of integral operators with more and more regular integral kernels; in fact only the term with $n=0$ has a logarithmic singularity near the diagonal. Moreover, the series defines a smooth integral kernel outside the diagonal.

Taking the adjoint on both sides of \eqref{jul1} we have:
$$
\left(\left(-\iu\nabla - b A \right)^2 + \lambda\right)^{-1}=S(-\lambda) -T(-\lambda)^* \left(\left(-\iu\nabla - b A\right)^2 + \lambda\right)^{-1} . 
$$
 Exploiting  \eqref{IntKRes1}, \eqref{eq:freeLaplacianEdge} and \eqref{dc20}, we get also for $\lambda$ large enough:
$$
\begin{aligned}
\big |\left(-\iu\nabla_{\x} - b A(\x)  \right)_i & \left(\left(-\iu\nabla - b A\right)^2 + \lambda\right)^{-1}(\x;\y) \big | \\
& \leq \left| \left(-\iu\nabla_{\x} - b A \right)_i S(-\lambda)(\x;\y)\right|  \\
& + \int_{E} \mathrm{d} \tilde{\x}\left| \left(-\iu\nabla_{\x} - b A  \right)_i T(-\lambda)^*(\x; \tilde{\x}) \left(\left(-\iu\nabla - b A\right)^2 + \lambda\right)^{-1}(\tilde{\x};\y) \right| \\
& \leq C  \left(1+ \frac{1}{\|\x-\y\|}\right) \e^{-\frac{\sqrt{\lambda}}{2}\|\x-\y\|} .
\end{aligned}
$$

Then, because all the other scalar and magnetic potentials are bounded, by using the second resolvent identity we get
\begin{equation}
\label{eq:derivativeRLambda}
\begin{aligned}
\sup_{\omega \in \Par}\left|\left(-\iu\nabla_{\x} - b A - \mathcal{A} \right)_i\left( H^E_{\omega,b} + \lambda\right)^{-1}(\x;\y) \right| \leq C'  \left(1+ \frac{1}{\|\x-\y\|}\right) \e^{-\frac{\sqrt{\lambda}}{2}\|\x-\y\|} \, .
\end{aligned}
\end{equation}
Eventually, exploiting the first resolvent identity, together with \eqref{eq:derivativeRLambda} and \eqref{IntKRes1}, we obtain that there exist $\delta, C >0$ such that for every $z \in \mathcal{D}$ it holds true that
\begin{equation}
\label{IntRes2}
\sup_{\omega \in \Par}\left|\left(-\iu\nabla_{\x} - b A - \mathcal{A} \right)_i\left(H^E_{\omega,b}- z\right)^{-1}(\x;\y) \right| \leq C' \zeta^6 \left(1+ \frac{1}{\|\x-\y\|}\right) \e^{-\frac{\delta}{\zeta}\|\x-\y\|} .
\end{equation}

\medskip

In order to shorten our formulas, in the rest of the section we use the shorthand notation $H^\star_{\omega,b}=H^\star_{\omega}$, $\star \in \{E,\varnothing\}$.
\begin{lemma}
\label{lemma:L2toContinuous}
Let $g_0$ and $g$ be two smooth compactly supported functions. We also assume that $\text{supp} (g_0) \subset E$, $\text{dist}\left(\text{supp} (g_0),\partial E\right) >0$. Then, for $\delta>0$ small enough, the operators  $g_0 \iu \left[H^E_{\omega},X_i\right]\left(H^E_{\omega}-z\right)^{-2} \e^{\delta|X|}$ and $g\iu \left[H_{\omega},X_i\right]\left(H_{\omega}-z\right)^{-2} \e^{\delta|X|}$, with $i \in \{1,2\}$ and $z \in \C \setminus \R$, map $L^2(E)$, respectively $L^2(\R^2)$, into the space of continuous functions on $\R^2$. 
\end{lemma}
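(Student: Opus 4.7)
The plan is to show that each of these operators admits an integral kernel which is jointly continuous on $\R^2 \times \R^2$ and, after absorbing the factor $\e^{\delta|X|}$, remains uniformly bounded in $L^2_y$ as $x$ varies in a compact set. Once we have that, the $L^2$-to-$C(\R^2)$ mapping follows at once from Cauchy-Schwarz and the Lebesgue dominated convergence theorem.

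First I would factor the double resolvent as $(H^\star_\omega - z)^{-2} = (H^\star_\omega - z)^{-1}(H^\star_\omega - z)^{-1}$ with $\star\in\{E,\varnothing\}$, and write the kernel of $\iu[H^\star_\omega,X_i](H^\star_\omega - z)^{-2}$ as the composition
\begin{equation*}
K^\star(x,y) = \int \{\iu[H^\star_\omega, X_i](H^\star_\omega - z)^{-1}\}(x,w)\,(H^\star_\omega - z)^{-1}(w,y)\,\mathrm{d}w.
\end{equation*}
Since $\iu[H^\star_\omega,X_i] = -\iu\partial_i - \mathcal{A}_i - bA_i$ is the $i$-th component of the velocity operator, the first factor is pointwise bounded by \eqref{IntRes2} (resp.\ \eqref{IntKernelResolvent2}) as $C\zeta^6(1 + \|x-w\|^{-1})\e^{-\delta_0\|x-w\|/\zeta}$, while the second is bounded by \eqref{IntKRes1} as $C\zeta^4(1 + |\log\|w-y\||)\e^{-\delta_0\|w-y\|/\zeta}$. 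Both the $\|\cdot\|^{-1}$ and logarithmic singularities are locally integrable in two dimensions, so the convolution converges and, by splitting $\|x-w\| + \|w-y\| \geq \|x-y\|$, delivers a jointly continuous kernel $K^\star(x,y)$ with a pointwise bound $|K^\star(x,y)| \leq C_z\,\e^{-\delta_z\|x-y\|}$ for some $0<\delta_z<\delta_0/\zeta$; joint continuity follows by a standard dominated-convergence argument in which the integrand is dominated by an integrable function of $w$ uniformly as $(x,y)$ varies in a compact set.

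Next, I would absorb the multiplication by $\e^{\delta|y|}$. Since $g_0$ (resp.\ $g$) has compact support contained in some ball of radius $R$, for $x$ in that support one has $|y| \leq R + \|x-y\|$, so $\e^{\delta|y|} \leq \e^{\delta R}\e^{\delta\|x-y\|}$. Choosing $0<\delta<\delta_z$, the modified kernel
\begin{equation*}
\widetilde{K}^\star(x,y) := g_0(x)\,K^\star(x,y)\,\e^{\delta|y|}
\end{equation*}
stays jointly continuous on $\R^2\times\R^2$ (it vanishes for $x\notin\mathrm{supp}(g_0)$) and satisfies $|\widetilde{K}^\star(x,y)| \leq C\,\e^{-(\delta_z-\delta)\|x-y\|}$ uniformly in $x$. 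For any $\psi\in L^2(E)$ (resp.\ $L^2(\R^2)$), $(A\psi)(x) := \int \widetilde{K}^\star(x,y)\psi(y)\,\mathrm{d}y$ is then controlled by Cauchy-Schwarz with an $x$-uniform bound, and continuity follows from the joint continuity of $\widetilde{K}^\star$ combined with dominated convergence.

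The main obstacle is establishing the joint continuity of $K^\star(x,y)$ near the diagonal $x=y$, where the two factors are each mildly singular. This relies on the local $L^p$-integrability of $\|\cdot\|^{-1}$ for $p<2$ and of $|\log\|\cdot\||$ for all $p<\infty$ in two dimensions, together with the continuity of the kernel factors away from their diagonals (which is already noted in the bulk case, e.g.\ via \cite{CorneanNenciu09}, and follows in the edge case from the reflection-plus-Neumann-series representation developed earlier in the appendix). For the edge case, the condition $\mathrm{dist}(\mathrm{supp}(g_0),\partial E) > 0$ keeps the analysis at a fixed positive distance from $\partial E$, so the pointwise estimate \eqref{IntRes2} can be applied uniformly without additional boundary corrections.
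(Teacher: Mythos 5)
Your argument is correct and follows essentially the same route as the paper: compose the pointwise bounds \eqref{IntRes2}/\eqref{IntKernelResolvent2} and \eqref{IntKRes1} for the two resolvent factors, observe that the $\|\cdot\|^{-1}$ and logarithmic singularities compose to a jointly continuous kernel with exponential off-diagonal decay, absorb $\e^{\delta|X|}$ using the compact support of $g_0$ (resp.\ $g$), and conclude by Cauchy--Schwarz and dominated convergence. The only cosmetic difference is that the paper delegates the joint continuity of the composed polar kernel to \cite[Lemma 2, p.213]{Vladimirov} rather than arguing it by hand as you sketch.
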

\begin{proof}
The integral kernel of $\left[H^E_{\omega},X_i\right]\left(H^E_{\omega}-z\right)^{-1}(\x;\x') $ and $\left(H^E_{\omega}-z\right)^{-1}(\x;\x')$ are jointly continuous for $\x\neq\x' \in \R^2$ and they have a singularity that goes like $\|\x-\x'\|^{-1}$, and $\ln{\|\x-\x'\|}$ respectively. Hence, by applying the same result in the theory of polar integral operators  as before, \cite[Lemma 2, p.213]{Vladimirov}, one can show that, if $\mathcal{U}_1,  \mathcal{U}_2\subset E$ are compact sets, the integral kernel  $\left[H^E_{\omega},X_i\right]\left(H^E_{\omega}-z\right)^{-2} (\x;\x')$ is jointly continuous for $(\x;\x') \in \mathcal{U}_1\times\mathcal{U}_2$. By using Lebesgue's dominated convergence theorem together with \eqref{IntKRes1} and \eqref{IntKernelResolvent2}, this  implies that for every $\psi \in L^2(E)$, $i \in \{1,2\}$ and $z \in \C \setminus \Sigma_{E}$,   
$$
\left(g_{0} \iu \left[H^E_{\omega},X_i\right]\left(H^E_{\omega}-z\right)^{-2} \e^{\delta|X|}\psi\right)(\x)=\int_{\R^2} \di \y g_{0}(\x)  \left(\iu\left[H^E_{\omega},X_i\right]\left(H^E_{\omega}-z\right)^{-2}\right)(\x;\y)\e^{\delta|\y|}\psi(\y)
$$ is a continuous function.
\end{proof}

It remains to analyze the integral kernel of $F(H^\star_{\omega})$. We do that by using the H-S formula \cite{HelfferSjostrand1989}.

Consider a function $F$ as in Theorem~\ref{thm-positive}. We still denote by $F$ its extension to a Schwartz function. By following the strategy of \cite{CorneanFournaisFrankHelffer}, we can write $F(H^\star_{\omega})$ as
\begin{equation}\label{dc22}
F(H^\star_{\omega})=-\frac{1}{\pi} \int_{\mathcal{D}} \mathrm{d}z_1\mathrm{d}z_2 \, \bar{\partial} {F_N}(z) (H^\star_{\omega}-z)^{-1} \, , \quad z=z_1+\iu z_2,
\end{equation}
where $\mathrm{Re}z=z_1, \mathrm{Im}z=z_2 \in \R$, $\mathrm{d}z_1\mathrm{d}z_2$ is the Lebesgue measure of $\C\cong \R \times \R$, $\mathcal{D}:=\R \times [-1,1]$, and $F_N$ is an almost analytic extension of $F$ constructed as follows: Let $0\leq g(y)\leq 1$ with $g\in C_0^\infty(\R)$ such that $g(y)=1$ if $|y|\leq 1/2$ and $g(y)=0$ if $|y|>1$. Fix some $N\geq 2$ and define 
	\begin{equation*}
	F_N(z_1+\iu z_2) := g(z_2) \sum_{j=0}^N \frac{1}{j!} \frac{\partial^j F}{\partial {z_1}^j}(z_1) (\iu z_2)^j.
	\end{equation*}
Then the following properties hold true:
\begin{itemize}
	\item $F_N(x)=F(x) \qquad \forall \, x \in \R$;
	\item $\mathrm{supp}(F_N) \subset \mathcal{D}$;
	\item there exists a positive constant $C_N$ ( dependent on $N$) such that
	$$
	\left|\bar{\partial} {F_N}(z)\right| \leq C_N \frac{|z_2|^N}{\langle z_1\rangle^N} \, .
	$$
\end{itemize}
\smallskip

\begin{lemma}
	\label{lemma:IntTrClassEstimateF}
	For all $g_1,g_2\in C_0^\infty (E)$, and all $h_1,h_2\in C_0^\infty(\R^2)$, the operators  $g_1F(H^E_{\omega})g_2$,  $g_1\left(\iu\left[H^E_{\omega},X_i\right]F(H^E_{\omega})\right)g_2$,  $h_1F(H_{\omega})h_2$ and $h_1\left(\iu\left[H_{\omega},X_i\right]F(H_{\omega})\right)h_2$ have a jointly continuous integral kernel. Also, for $\star \in \{E,\varnothing\}$ and every $M>0$, there exists a positive constant $C$ such that
	\begin{equation}
	\label{eq:PolLocalizationF}
	\sup_{\omega \in \Par}\left|F(H^\star_{\omega} )(\x;\y) \right| \leq C \langle \x-\y\rangle^{-M} 
	\end{equation}
	and 
	\begin{equation}
	\label{eq:PolLocalizationFprime}
	\sup_{\omega \in \Par}\left|\left(\iu\left[H^\star_{\omega},X_i\right]F(H^\star_{\omega})\right)(\x;\y) \right| \leq C \langle \x-\y\rangle^{-M} \qquad i \in \{1,2\}.
	\end{equation}
	Moreover, $\chi_{L} F(H^\star_{\omega})$ is trace class and
	\begin{equation}
	\label{eq:TraceIntegral}
	\Tr \left(\chi_{L} F(H^\star_{\omega})\right)  = \int_{\R^2} \mathrm{d}\x \chi_{L}(\x) F(H^\star_{\omega})(\x;\x)  .
	\end{equation}
\end{lemma}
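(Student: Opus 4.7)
The proof rests on the Helffer--Sj\"ostrand representation~\eqref{dc22} combined with the pointwise resolvent bounds \eqref{IntKRes1}, \eqref{IntKernelResolvent2}, \eqref{IntRes2}. First, to establish the polynomial off-diagonal decay \eqref{eq:PolLocalizationF}, I would insert \eqref{dc22} into the integral kernel of $F(H^\star_\omega)$ and apply \eqref{IntKRes1}:
\begin{equation*}
|F(H^\star_\omega)(\x;\y)| \leq \frac{1}{\pi} \int_\mathcal{D} |\bar\partial F_N(z)|\,C\,\zeta^4\,(1+|\ln\|\x-\y\||)\,\e^{-\delta\|\x-\y\|/\zeta}\,\di z_1\,\di z_2.
\end{equation*}
The crucial inequality $\langle \x-\y\rangle^M\,\e^{-\delta\|\x-\y\|/\zeta}\leq C_M\,\zeta^M$, valid for $\zeta\geq 1$, combined with the bound $|\bar\partial F_N(z)|\leq C_N\,\zeta^{-N}$ and the identity $\zeta^{-\alpha}=|z_2|^\alpha/\langle z_1\rangle^\alpha$, makes the integral over $\mathcal{D}$ convergent for $N$ large enough relative to $M$, yielding arbitrary polynomial decay (the logarithmic factor is absorbed by slightly increasing $M$). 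The bound \eqref{eq:PolLocalizationFprime} follows identically, replacing \eqref{IntKRes1} with \eqref{IntKernelResolvent2} in the bulk case and with \eqref{IntRes2} in the edge case.

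Next, for joint continuity, the resolvent kernel is jointly continuous off the diagonal but only $\log$-singular at $\x=\y$, so the H-S integral does not directly give continuity across the diagonal. I would regularize by fixing $a>-\inf\Sigma(0)$ and writing
\begin{equation*}
F(H^\star_\omega) = (H^\star_\omega+a)^{-2}\,\widetilde F(H^\star_\omega),\qquad \widetilde F(x):=(x+a)^2 F(x),
\end{equation*}
with $\widetilde F$ still extendable to a Schwartz function on $[\inf\Sigma(0),\infty)$. The kernel of $(H^\star_\omega+a)^{-2}$ is a composition of two resolvent kernels, and the locally $L^p$ diagonal $\log$-singularity in \eqref{IntKRes1} is smoothed out to a globally jointly continuous kernel. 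Dominated convergence in the H-S integral for $\widetilde F(H^\star_\omega)$, using the pointwise bound just established together with off-diagonal continuity of the resolvent kernel, then delivers joint continuity of $g_1 F(H^\star_\omega) g_2$. The commutator version is handled in the same way using \eqref{IntKernelResolvent2}/\eqref{IntRes2}, possibly raising the number of smoothing factors to kill the $\|\x-\y\|^{-1}$ singularity of the velocity-resolvent kernel.

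Finally, the bound \eqref{eq:PolLocalizationF} together with the compact support of $\chi_L$ makes $\chi_L F(H^\star_\omega)$ Hilbert--Schmidt immediately. To upgrade to trace class, I factor
\begin{equation*}
\chi_L F(H^\star_\omega) = \bigl[\chi_L(H^\star_\omega+a)^{-2}\bigr]\cdot\bigl[(H^\star_\omega+a)^2 F(H^\star_\omega)\bigr],
\end{equation*}
and observe that both factors on the right are Hilbert--Schmidt by the same H-S analysis applied to the Schwartz functions (after a cutoff if needed) $x\mapsto(x+a)^{-2}$ and $x\mapsto(x+a)^2 F(x)$, together with the compactness of $\mathrm{supp}(\chi_L)$ for the first factor. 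Their product is therefore trace class, and joint continuity on $\mathcal{S}_L\times\mathcal{S}_L$ delivers the identity \eqref{eq:TraceIntegral} via a standard Mercer-type argument. The main obstacle is precisely the joint continuity step: the diagonal singularity of the resolvent kernel forces the resolvent-smoothing factorization, which is the same device that subsequently enables the trace-class upgrade.
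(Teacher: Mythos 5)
Your overall strategy (Helffer--Sj\"ostrand formula plus the pointwise resolvent bounds, a smoothing factorization for continuity, and a Hilbert--Schmidt factorization for the trace class property) is the same as the paper's, but two of your steps contain genuine errors. First, the pointwise bound \eqref{eq:PolLocalizationF}: inserting a \emph{single} resolvent into \eqref{dc22} and using \eqref{IntKRes1} leaves the factor $\left(1+\left|\ln\|\x-\y\|\right|\right)$ in your estimate, and this cannot be ``absorbed by slightly increasing $M$'': the right-hand side $C\langle\x-\y\rangle^{-M}$ is \emph{bounded} at $\x=\y$, whereas $|\ln\|\x-\y\||$ is not, so your estimate does not prove the claimed uniform bound on and near the diagonal (and the diagonal values are exactly what \eqref{eq:TraceIntegral} requires). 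The situation is worse for \eqref{eq:PolLocalizationFprime}, where the velocity--resolvent kernel has a $\|\x-\y\|^{-1}$ singularity. The paper's fix, which you should adopt before taking any pointwise bounds, is to use Green's theorem and the first resolvent identity to write $F(H^\star_\omega)=-\tfrac{1}{\pi}\int_{\mathcal D}\di z_1\di z_2\,(z-\iu)\,\bar\partial F_N(z)\,(H^\star_\omega-z)^{-1}(H^\star_\omega-\iu)^{-1}$; the composition of two exponentially localized kernels with locally integrable (log or $\|\cdot\|^{-1}$) diagonal singularities is bounded, and only then does the $\zeta$-bookkeeping give \eqref{eq:PolLocalizationF}--\eqref{eq:PolLocalizationFprime}.

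Second, your trace-class factorization fails: the second factor $(H^\star_\omega+a)^2F(H^\star_\omega)=\widetilde F(H^\star_\omega)$ is \emph{not} Hilbert--Schmidt. Its kernel decays only in $\x-\y$, not in $\x+\y$, so $\int\int|\widetilde F(H^\star_\omega)(\x;\y)|^2\,\di\x\,\di\y=\infty$; indeed for the pure Landau Hamiltonian $\widetilde F(H)=\sum_n\widetilde F(e_{n,b})\Pi_{n,b}$ is not even compact. The compact support of $\chi_L$ sits entirely in the first factor and does not propagate to the second through your splitting. The paper's device is to transfer localization via exponential weights: write $\chi_L F(H^\star_\omega)=\bigl[\chi_L(H^\star_\omega-\iu)^{-2}\e^{\delta|X|}\bigr]\bigl[\e^{-\delta|X|}(H^\star_\omega-\iu)^{-1}F(H^\star_\omega)(H^\star_\omega-\iu)^{3}\bigr]$, where the first factor is Hilbert--Schmidt because $\chi_L$ controls the weight $\e^{\delta|X|}$ through the exponential off-diagonal decay, and the second is Hilbert--Schmidt because the weight $\e^{-\delta|X|}$ supplies the missing integrability in the centre-of-mass direction. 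Your continuity argument via the $(H^\star_\omega+a)^{-2}$ regularization and dominated convergence is a legitimate alternative to the paper's finite-rank approximation scheme (Lemma~\ref{lemma:L2toContinuous} plus the $AB_NC$ argument), provided you justify the joint continuity of the composed polar kernels \`a la Vladimirov, but the two issues above must be repaired for the lemma to be proved.
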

\begin{proof}
	Consider the operator $F(H^E_{\omega})$. In view of the integrability of the integral kernel of the resolvent of $H^E_{\omega}$, see \eqref{IntKRes1}, by using Fubini-Tonelli's theorem together with a Schur estimate, we can prove that $F(H^E_{\omega})$ is a bounded operator from $L^2(E)$ to $L^\infty(E)$ and
	\begin{equation}
	\label{ManyResolvents}
	F(H^E_{\omega})(\x;\y)=-\frac{1}{\pi} \int_{\mathcal{D}} \mathrm{d}z_1\mathrm{d}z_2 \,  \bar{\partial} {F_N}(z) (H^E_{\omega}-z)^{-1}(\x;\y).
	\end{equation}
	
	Moreover, by using Green's theorem and the first resolvent formula, we get
	$$
	F(H^E_{\omega})=-\frac{1}{\pi} \int_{\mathcal{D}} \mathrm{d}z_1\mathrm{d}z_2 \, (z-\iu) \bar{\partial} {F_N}(z) (H^E_{\omega}-z)^{-1} (H^E_{\omega}-\iu)^{-1} \, .
	$$
	Hence, for every $M\geq 2$, by using \eqref{IntKRes1} and by choosing $N$ large enough in \eqref{dc22}, we get that there exist $a>1$ and a constant $C$ such that
	\begin{align}
	\label{PolLoc1}
	&\sup_{\omega \in \Par}\left| \langle \x-\x' \rangle ^{M} F(H^E_{\omega})(\x;\x') \right| \nonumber \\
	&\leq \sup_{\omega \in \Par}\frac{1}{\pi} \int_{\mathcal{D}} \mathrm{d}z_1\mathrm{d}z_2 \, \left|\bar{\partial} {F_N}(z) \right| \left| \langle \x-\x' \rangle ^{M} \int_{E} \mathrm{d} \y (H^E_{\omega}-z)^{-1}(\x;\y)(H^E_{\omega}-\iu)^{-1}(\y;\x')\right| \nonumber  \\
	&\leq C\;  \int_{\mathcal{D}} \mathrm{d}z_1\mathrm{d}z_2 \, \frac{1}{\langle z_1 \rangle^a}<\infty.  
	\end{align}
	 Notice that in the second inequality we have used Cauchy-Schwarz inequality together with \eqref{IntKRes1}, and that $|z_2|\leq 1$. 
    The proof of \eqref{eq:PolLocalizationFprime} can be done in the same way, using \eqref{IntKRes1}. 
	
    Let us now investigate the continuity of the integral kernels by mimicking the arguments in \cite{CorneanNenciu09,CorneanFournaisFrankHelffer}. We only consider $\iu\left[H^E_\omega,X_i\right]F(H^E_\omega)$, since the other operators can be treated in a similar way.
	
	We have
	$$
	\begin{aligned}
	&g_1 \iu\left[H^E_{\omega},X_i\right]F(H^E_{\omega}) g_2\\
	&=\left(g_1 \iu\left[H^E_{\omega},X_i\right] \left(H^E_{\omega}-\iu\right)^{-2} \e^{\delta |X|}\right) \left(\e^{-\delta |X|}\left(H^E_{\omega}-\iu\right)^{-1} F(H^E_{\omega})\left(H^E_{\omega}-\iu\right)^{5} \right) \left(H^E_{\omega}-\iu\right)^{-2} g_2 \\
	&:= A B C \, .
	\end{aligned}
	$$
	By functional calculus, $F(H^E_{\omega})\left(H^E_{\omega}-\iu\right)^{5}$ is a bounded operator, and since $\e^{-\delta |X|}\left(H^E_{\omega}-\iu\right)^{-1}$ is a Hilbert-Schmidt operator, the same holds true for $B$. From the proof of Lemma~\ref{lemma:L2toContinuous} we get that both operators $A$ and $C$ have a bounded integral kernel which may be dominated by a term that goes like constant times $\e^{-\epsilon \Vert \x-\y\Vert}$ for some small enough $\epsilon$. Using the Cauchy-Schwarz inequality, we get that there exists a constant $C_{g_1,g_2}$ such that
	\begin{equation}
	\label{eq:IntF}
	\sup_{\x,\y \in \R^2}\left| g_1(\x) \left(\iu\left[H^E_{\omega},X_i\right]F(H^E_{\omega})\right)(\x;\y) g_2(\y)\right|  \leq C_{g_1,g_2} \|B\|_{2}\, ,
	\end{equation}
	where $\|B\|_2$ denotes the Hilbert-Schmidt norm of $B$. We can now mimic the argument from \cite[Appendix C]{CorneanFournaisFrankHelffer} which uses that the integral kernel $B(\cdot;\cdot)\in L^2(E \times E)$ can be approximated with the kernel of a finite rank operator $B_N=\sum_{j,i=1}^{N}\beta_{i,j}\;  f_i \otimes f_j$, where $\{f_i\}_{i \in \N}$ is an orthonormal basis of $L^2(E)$. 
	Indeed, by applying Lemma~\ref{lemma:L2toContinuous} to the operator $A$ and $C^*$, we get that both operators maps $L^2(E)$ to the space of continuous functions. Since the integral kernel of $AB_NC$ can be written as
	$$
	\sum_{i,j=1}^{N} \beta_{i,j} \left(A f_i\right)(\x) \overline{\left(C^* f_j\right)(\y)},
	$$
	we get that it is jointly continuous.
	
	Since $B_N$ converges to $B$ in the Hilbert-Schmidt norm, together with the estimate \eqref{eq:IntF},  we can uniformly approximate the integral kernel of $g_1 \left(\iu\left[H^E_{\omega},X_i\right]F(H^E_{\omega})\right) g_2$ with the integral kernel of the operator $AB_NC$, which has a jointly continuous integral kernel. In view of the generality of $g_1$ and $g_2$, we obtain that $\iu\left[H^E_{\omega},X_i\right]F(H^E_{\omega})$ has a jointly continuous integral kernel on every compact set $\mathcal{V}_1 \times \mathcal{V}_2 \subset E \times E$.

	Let us now show that $\chi_{L} F(H^E_{\omega})$ is a trace class operator in $L^2(E)$ for every $L>0$. Consider
	\begin{equation}
	\label{eq:DGDeco}
	\chi_{L} F(H^E_{\omega})=\left(\chi_{L} \left(H^E_{\omega}-\iu\right)^{-2} \e^{\delta |X|}\right) \left(\e^{-\delta |X|}\left(H^E_{\omega}-\iu\right)^{-1} F(H^E_{\omega})\left(H^E_{\omega}-\iu\right)^{3} \right):= D G.
	\end{equation}
	As it is done in the proof of Lemma~\ref{AR1}, one can show that $D$ is an Hilbert-Schmidt operator, while the operator $G$ can be proved to be Hilbert-Schmidt in the same way we have done for the operator $B$ above. Therefore, $\chi_{L} F(H^E_{\omega})$ is the product of two Hilbert-Schmidt operators, thus it is trace class.

	Let us now show \eqref{eq:TraceIntegral}. Let $ g_n$ be smooth functions, uniformly bounded and compactly supported in $E$, such that $ g_n^2 \to \chi_{L}$  strongly in the $L^2(E)$ norm and $g_n\chi_{L}=g_n$. The operator $g_n F(H^E_{\omega})$ can be written as a product of two Hilbert-Schmidt operators in the same way as we have done in \eqref{eq:DGDeco}, hence it is a trace class operator. Then:
	$$
	\left| \Tr\left(\chi_{L} F(H^E_{\omega})\right) - \Tr\left(g_n F(H^E_{\omega}) g_n \right)  \right|\leq \left\|\left( g_n^2 -\chi_{L}\right) \chi_{L} F(H^E_{\omega}) \right\|_1   \to 0,
	$$
   where we used the well known fact that if $U_n$ is a sequence of bounded operators strongly convergent to zero and $V$ is trace class, then $U_n V$ converges to zero in the trace norm. 
	Moreover, since $g_n$ is compactly supported in $E$, we have that $g_n F(H_{\omega})g_n$ has a jointly continuous integral kernel, hence we can compute its trace simply by integrating the diagonal of the integral kernel. Putting all this together, we get that
	$$
	\Tr\left(\chi_{L} F(H_{\omega})\right) = \lim_{n \to \infty} \int_{\R^2} \di\x \; g_n^2(\x) F(H_{\omega})(\x;\x).
	$$
	Eventually, by using the estimate \eqref{eq:PolLocalizationF} together with Lebesgue dominated convergence theorem we obtain \eqref{eq:TraceIntegral}.
	The proof for $F(H_{\omega})$ is the same.
\end{proof}

\begin{remark}For the bulk operator $F(H_\omega)$ one may prove a much higher regularity of its integral kernel, using tools from the magnetic psudodifferential calculus developed in \cite{IMP, CHP-Beals, C4}. This will be done somewhere else. 
\end{remark}

\begin{lemma}
\label{lemma:WKern}
The operator $W_{L,\omega}(z)$ in \eqref{gpt1} has a smooth integral kernel that is exponentially localized around the diagonal  and decays faster than any polynomial in $L$; more precisely, for every $M>1$,  there exist $C,\delta_0>0$ and $a\geq 1$ such that for every   $L\geq 1$ and $\zeta:= \langle z_1 \rangle |z_2|^{-1}$ we have: 
\begin{equation}
\label{eq:WKern}
\sup_{\omega \in \Par} \left|W_{L,\omega}(z)(\x;\y) \right| \leq C  \frac{\zeta^a}{L^{M}}     \e^{-\frac{\delta_0}{\zeta}\|\x-\y\|} .
\end{equation}
\end{lemma}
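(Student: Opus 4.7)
I would first unfold the definition of $W_{L,\omega}(z)$ so that its integral kernel is explicitly written as a finite sum of two types of terms, indexed by $i\in\{0,L\}$:
\begin{equation*}
W_{L,\omega}(z)(\x;\y) \;=\; \sum_{i\in\{0,L\}} \sum_{j=1}^{2} \phi_{i,j}(\x)\,\bigl[\mathcal{D}_{j}^{(\x)} R^{\star_i}_{\omega,z}\bigr](\x;\y)\,\eta_{i}(\y)\; +\; \sum_{i\in\{0,L\}} \psi_{i}(\x)\,R^{\star_i}_{\omega,z}(\x;\y)\,\eta_{i}(\y),
\end{equation*}
where $\phi_{i,j}=-2\iu\,\partial_{j}\widetilde{\eta}_{i}$, $\psi_{i}=-\Delta\widetilde{\eta}_{i}$, $\mathcal{D}_{j}^{(\x)}=-\iu\partial_{x_{j}}-\mathcal{A}_{j}(\x)-bA_{j}(\x)$, and $\star_{L}=\varnothing$, $\star_{0}=E$.

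Next I would exploit the geometric separation encoded in \eqref{eq:defEta}--\eqref{eq:defEta2}: on the support of $\phi_{i,j}(\x)\eta_{i}(\y)$ and of $\psi_{i}(\x)\eta_{i}(\y)$, the second coordinates of $\x$ and $\y$ lie in intervals whose distance is $\simeq \sqrt{L}$, and hence $\|\x-\y\|\geq c\sqrt{L}$ for a universal $c>0$. Combining this with the pointwise bounds \eqref{IntKRes1}, \eqref{IntKernelResolvent2}, \eqref{IntRes2} together with the scalings $\|\partial_{j}\widetilde{\eta}_{i}\|_{\infty}\simeq L^{-1/2}$ and $\|\Delta\widetilde{\eta}_{i}\|_{\infty}\simeq L^{-1}$ yields the uniform pointwise estimate
\begin{equation*}
\sup_{\omega\in\Par}|W_{L,\omega}(z)(\x;\y)| \;\leq\; C\,\zeta^{6}\,L^{-1/2}\bigl(1+|\ln\|\x-\y\||+\|\x-\y\|^{-1}\bigr)\,e^{-\delta\|\x-\y\|/\zeta}\,\mathbf{1}_{\{\|\x-\y\|\geq c\sqrt{L}\}}.
\end{equation*}
On the support of the indicator one has $\|\x-\y\|\geq c\sqrt{L}\geq c$, so the bracket is bounded by an absolute constant times $(1+|\ln L|)$, which is absorbed by any polynomial power of $L$.

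Finally I would extract the polynomial decay in $L$ from the exponential. Splitting $e^{-\delta\|\x-\y\|/\zeta}=e^{-(\delta/2)\|\x-\y\|/\zeta}\cdot e^{-(\delta/2)\|\x-\y\|/\zeta}$, the second factor is bounded on the support by $e^{-\delta c\sqrt{L}/(2\zeta)}$, and the elementary inequality $e^{-t}\leq k!\,t^{-k}$ ($t>0$, $k\in\N$) applied with $k=2M$ and $t=\delta c\sqrt{L}/(2\zeta)$ gives $e^{-\delta c\sqrt{L}/(2\zeta)}\leq C_{M}\,\zeta^{2M}\,L^{-M}$. Setting $\delta_{0}:=\delta/2$ and $a:=2M+6$ produces exactly the claimed bound. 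Smoothness of the kernel is then automatic: the resolvent kernels of $H_{\omega,b}$ and $H^{E}_{\omega,b}$ are $C^{\infty}$ off the diagonal by elliptic regularity with smooth coefficients, the cutoffs $\widetilde{\eta}_{i},\eta_{i}$ are smooth, and the geometric separation keeps us strictly off the diagonal.

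The bookkeeping above is essentially routine; the only point deserving care is confirming the lower bound $\|\x-\y\|\geq c\sqrt{L}$ on the support from \eqref{eq:defEta}--\eqref{eq:defEta2}, after which everything reduces to propagating the pointwise resolvent estimates already used in the proof of Proposition~\ref{AR1}. No genuinely new technical ingredient is required.
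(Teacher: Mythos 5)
Your proposal is correct and follows essentially the same route as the paper: decompose $W_{L,\omega}(z)$ into its four terms, use the $\sqrt{L}$-separation between $\operatorname{supp}(\partial_2\widetilde{\eta}_i)$ and $\operatorname{supp}(\eta_i)$ together with the pointwise resolvent kernel bounds, and trade a fraction of the exponential decay $\e^{-\delta c\sqrt{L}/\zeta}$ for $\zeta^{2M}L^{-M}$ via the elementary bound on $t^{2M}\e^{-t}$. The only cosmetic slip is the claim that the logarithmic/singular bracket is bounded by a constant times $(1+|\ln L|)$ uniformly — for large $\|\x-\y\|$ the logarithm grows, but it is trivially absorbed by another fraction of the exponential, so nothing is lost.
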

\begin{proof}
$W_{L,\omega}(z)$ contains four terms and each of them can be treated separately, proving that it satisfies \eqref{eq:WKern}. Consider, for example, the term
$$
 \nabla \widetilde{\eta}_L \cdot \left(-\iu\nabla - \mathcal{A} - b A\right)   \left(H_{\omega} - z \right)^{-1} \eta_{L} = \partial_2 \widetilde{\eta}_L \left[H_{\omega}, X_2 \right] \left(H_{\omega} - z \right)^{-1} \eta_{L}
$$
since the other ones can be treated in a similar way. 
The main idea is to couple the exponential localization around the diagonal of the integral kernel of $\left[H_{\omega}, X_2 \right] \left(H_{\omega} - z \right)^{-1}$, see   \eqref{IntKernelResolvent2}, with the fact that the distance between the support of $\nabla \widetilde{\eta}_L$ and the support of $\eta_L$ is of order $\sqrt{L}$, see \eqref{eq:defEta2}. Then, by using the maximum of the function $x^{2M}\e^{-|x|}$ we have:
$$ \e^{-c\sqrt{L}}\leq \frac{(2M)^{2M}}{\e^{2M}} c^{-2M} L^{-M},\quad \forall M>1.$$ 
Hence we can trade off a part of the exponential decay of the integral kernel with any polynomial decay in $L$, at the price of getting a prefactor that grows polynomially in $\zeta$. 
\end{proof}

\section{A short argument for the quantization of the edge conductivity}\label{appendix1}

In the absence of the random potential one may use the so-called edge states in order to directly show that the right-hand side of \eqref{EdgeCurrent} must be an integer. The argument below is well known to physicists and it already appeared in the mathematical literature \cite{CombesGerminet,GrafPorta,DeNittisSchulzBaldes}, but we sketch it here for completeness and adapted to our setting. Let us use the shorthand notation $H^E:=H^E_{b_0}$  for the edge Hamiltonian.

First, let us notice that the operator $H^E$ is $\mathbb{Z}$-periodic in the $x_1$-direction and to a Bloch-Floquet-Zak transform, it is unitarily equivalent to 
$\int_{[-\pi,\pi]}^\oplus \di k_1 \, h^E(k_1) $, where the fiber operator $$h^E(k_1)=\frac{1}{2} (-\iu \partial_{x_1}-\mathcal{A}_1(x_1,x_2)+b_0x_2 +k_1)^2 +\frac{1}{2} (-\iu \partial_{x_2}-\mathcal{A}_2(x_1,x_2))^2+V(x_1,x_2)$$
 is densely defined in $L^2(\mathcal{S}_\infty)$ with periodic boundary conditions on the lateral lines $x_1\in\{0,1\}$ restricted to $E$, and with a Dirichlet boundary condition at the bottom $x_2=0$. The bulk operator $H_{b_0}$ can be written in similar manner, but its fiber $h(k_1)$ will be defined in $L^2([0,1]\times \R)$ with periodic boundary conditions on the infinite lines $x_1\in\{0,1\}$. 
 
 Second, we note that the bulk fiber $h(k_1)$ does not have spectrum inside the gap $[E_-,E_+]$, regardless of which $k_1$ we choose. Moreover, one can prove that $(h^E(k_1)-z)^{-1} -\chi_\infty (h(k_1)-z)^{-1}\chi_\infty$ is compact for any $z$ with ${\rm Im}(z)\neq 0$, thus the spectral projection of $h^E(k_1)$ inside $[E_-,E_+]$ is compact, hence $h^E(k_1)$ can only have finitely many discrete eigenvalues which can be inside $[E_-,E_+]$. They are the {\em edge states}, corresponding to eigenfunctions exponentially localized near the boundary $x_2=0$. According to Rellich's theorem, each edge state eigenvalue $\lambda(k_1)$ can be analytically followed as a function of $k_1$ as long as its value belongs to $[E_-,E_+]$. 
 
 It is important in what follows to note, that the left-hand side of the formula in the next lemma (or the right-hand side of \eqref{EdgeCurrent}) remains the same if we slightly move $E_-$ to the left or $E_+$ to the right. This is because $F_0'$ is supported in $[E_-,E_+]$. Thus we may assume that $E_\pm$ do not coincide with some edge state eigenvalues when $k_1\in \pm \pi$.
 
 \begin{lemma}\label{lemahc}  Let $N<\infty$ be the total number of edge state eigenvalues which can enter the interval $[E_-,E_+]$. Without loss of generality we may assume that no such eigenvalue starts or ends at $E_\pm$, i.e. $\lambda_n(\pm \pi)\not \in \{E_-,E_+\}$. Then: 
 $$-2\pi {\rm{Tr}}\left( \chi_{\infty} \iu \left[H^{E},X_1\right] F_0'(H^{E}) \right)=\sum_{n=1}^N \left(F_0(\lambda_n(-\pi))-F_0(\lambda_n(\pi)) \right),$$
and the right-hand side is an integer.
 \end{lemma}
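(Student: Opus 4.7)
My plan is to use the Bloch--Floquet--Zak fiber decomposition $H^E\cong\int^{\oplus}_{[-\pi,\pi]}h^E(k_1)\,\di k_1/(2\pi)$ set up in the excerpt. The key structural inputs are: (i) $\chi_\infty$ is exactly the characteristic function of the BFZ fundamental domain $\mathcal{S}_\infty$; and (ii) from the explicit expression $h^E(k_1)=\tfrac12(-\iu\partial_{x_1}+k_1-\mathcal{A}_1+b_0x_2)^2+\tfrac12(-\iu\partial_{x_2}-\mathcal{A}_2)^2+V$ one reads off that the velocity $\iu[H^E,X_1]$ is mapped under BFZ to the fiber derivative $\partial_{k_1}h^E(k_1)$. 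Together these convert the left-hand side into the fiber integral
\begin{equation*}
\Tr\bigl(\chi_\infty\,\iu[H^E,X_1]\,F_0'(H^E)\bigr)=\frac{1}{2\pi}\int_{-\pi}^{\pi}\tr_{L^2(\mathcal{S}_\infty)}\!\bigl(\partial_{k_1}h^E(k_1)\,F_0'(h^E(k_1))\bigr)\,\di k_1.
\end{equation*}

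Since $F_0'$ is supported strictly inside the bulk gap $(E_-,E_+)$, the operator $F_0'(h^E(k_1))$ is finite rank, carried by the edge eigenvectors $\psi_n(k_1)$ with eigenvalues $\lambda_n(k_1)\in(E_-,E_+)$. Where $\lambda_n$ is a simple isolated eigenvalue, Rellich--Kato perturbation theory produces a real-analytic branch and the Feynman--Hellmann identity $\langle\psi_n,\partial_{k_1}h^E\psi_n\rangle=\partial_{k_1}\lambda_n$, so that the fiber trace equals $\sum_n F_0'(\lambda_n)\,\partial_{k_1}\lambda_n=\tfrac{\di}{\di k_1}\sum_n F_0(\lambda_n)$ (the sum is effectively finite since $F_0$ is locally constant outside the gap). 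The fundamental theorem of calculus on $[-\pi,\pi]$ then yields $\sum_n[F_0(\lambda_n(\pi))-F_0(\lambda_n(-\pi))]$, and multiplication by $-1/(2\pi)$ gives the claimed identity.

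Integrality of the right-hand side follows because $F_0\equiv1$ below $E_-$ and $F_0\equiv0$ above $E_+$: each edge band contributes $0$ or $\pm1$ according to whether it connects the bulk region below the gap to the bulk region above, the total being the net integer-valued spectral flow of edge states across the gap. The main technical obstacle I expect is the rigorous handling of bands that touch the bulk continuum of $h^E(k_1)$ at interior points $k_1^{*}\in(-\pi,\pi)$ where $\lambda_n(k_1^{*})\in\{E_-,E_+\}$: there the Rellich labeling breaks down and the naive sum $\sum_n F_0(\lambda_n(k_1))$ may apparently jump. The cleanest way around this is to rephrase the telescoping argument in terms of the trace-class difference $F_0(h^E(k_1))-F_0(h(k_1))$ against the bulk fiber, whose $k_1$-dependence is genuinely continuous thanks to the magnetic perturbation techniques developed in the body of the paper together with the almost-analytic extension of $F_0$.
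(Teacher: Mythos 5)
Your main computation follows the paper's proof essentially verbatim: the Bloch--Floquet--Zak fibration, the identification of the fiber of $\iu\left[H^E,X_1\right]F_0'(H^E)$ with the finite-rank operator $(\partial_{k_1}h^E(k_1))\,F_0'(h^E(k_1))$, Rellich's theorem to obtain smooth eigenpairs, the Feynman--Hellmann identity, and the fundamental theorem of calculus in $k_1$. Your worry about branches touching $E_\pm$ at interior points of $(-\pi,\pi)$ is dealt with in the paper without any comparison to the bulk fiber: each branch is followed analytically only as long as its value stays in $[E_-,E_+]$, and since $F_0$ equals $1$ at $E_-$ and $0$ at $E_+$, telescoping on each maximal such subinterval still produces only boundary values of $F_0$, which at interior entry/exit points are $0$ or $1$. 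The proposed detour through $F_0(h^E(k_1))-F_0(h(k_1))$ is not needed.

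The one genuine gap is in the integrality claim. Your argument (each band contributes $0$ or $\pm1$ according to whether it connects the bulk region below the gap to the one above) only covers branches with $\lambda_n(\pm\pi)\notin[E_-,E_+]$. A branch whose eigenvalue at $k_1=\pm\pi$ lies strictly inside the gap contributes $F_0(\lambda_n(-\pi))-F_0(\lambda_n(\pi))$, which is generically not an integer, and "net spectral flow" does not by itself dispose of these terms. The paper closes this by observing that the spectrum of $h^E(k_1)$, counted with multiplicity, is $2\pi$-periodic in $k_1$, so the multisets $M_+=\{\lambda_j(\pi)\in(E_-,E_+)\}$ and $M_-=\{\lambda_k(-\pi)\in(E_-,E_+)\}$ coincide and $\sum_{\lambda\in M_+}F_0(\lambda)-\sum_{\mu\in M_-}F_0(\mu)=0$; only after this cancellation does the total reduce to an integer count of crossings of $E_-$. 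You need to add this periodicity argument to conclude.
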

 
 \begin{proof}
 In the Bloch-Floquet-Zak picture, the fiber of $\iu \left[H^{E},X_1\right] F_0'(H^{E})$ is a finite rank operator given by 
 $$(\partial_{k_1} h^E(k_1)) F_0'(h^E(k_1)).$$
 By direct computation we can prove
 \begin{align*}
 2\pi {\rm{Tr}}\left( \chi_{\infty} \iu \left[H^{E},X_1\right] F_0'(H^{E}) \right)&= \int_{-\pi}^{\pi} \di k_1 \; \Tr \left((\partial_{k_1} h^E(k_1)) F_0'(h^E(k_1))\right) \\
 &=\sum_{n=1}^N \int_{-\pi}^{\pi} \di k_1 \; F_0'(\lambda_n(k_1))\langle \psi_n(k_1),(\partial_{k_1} h^E(k_1))\psi_n(k_1)\rangle,
 \end{align*}
 where the eigenvectors $\psi_n(k_1)$ may be chosen to be smooth in $k_1$ due to Rellich's theorem. Then using the Feynman-Hellmann formula and the chain rule, the formula is proved.

 The  contribution coming from edge states for which $\lambda_n(\pm \pi)\not\in [E_-,E_+]$ consists of $0$'s and $\pm 1$'s. Let us also remark that the sets 
 $$M_+:=\{\lambda_j(\pi) \in (E_-,E_+)\},\quad M_-:=\{\lambda_k(-\pi) \in (E_-,E_+)\}$$
 must be equal because both the spectrum of $h_E(k_1)$ as a set, and the multiplicity of its eigenvalues are $2\pi$-periodic and hence:
 $$\sum_{\lambda\in  M_+}F_0(\lambda)- \sum_{\mu \in M_-}F_0(\mu)=0.$$
 \end{proof}
 
 Thus the Chern  character of $P_0$ can be seen as the number of edge state eigenvalues which initially enter the interval $[E_-,E_+]$ by crossing $E_-$, minus the number of edge states which finally exit the interval $[E_-,E_+]$ by also crossing $E_-$.

 \section{Positive temperature bulk-edge correspondence for the Landau Hamiltonian}
 \label{appendix:Kubo}
Let us consider the pure Landau Hamiltonian $H_{\omega,b} \equiv H_b=\frac{1}{2}\left(-\iu \nabla - b A\right)^2$, with $b> 0$, whose spectrum is given by the Landau levels  $\sigma(H_b)=\left\{e_{n,b}=b(n+\frac{1}{2}) \, |\, n \in \N\right\}$ (for a recent review on the Landau Hamiltonian see \cite{MoscolariPanati}). We denote by $\Pi_{n,b}$ the spectral projection onto the infinite dimensional eigenspace associated to $e_{n,b}$. It is well-known (see for example \cite{DeNittisMoscolariGomi}) that the integrated density of states associated to each Landau level is just given by the value of the diagonal of the integral kernel of $\Pi_{n,b}$ at the point $\x=0$, that is $\Pi_{n,b}(0;0)=\frac{b}{2\pi}$.

Assume now that the Fermi energy does not coincide with any of the Landau levels, that is  $\mu \notin \sigma(H_b)$, for simplicity let us assume that $e_{0,b} <\mu<e_{1,b}$. In \cite{CorneanNenciuPedersen06} the authors analyze the linear response to a constant electric field of a fermionic non-interacting systems described by the Landau Hamiltonian. It turns out that for this particular case  one can compute the transverse DC (direct current) conductivity at positive temperature by considering the Kubo formula and by taking the adiabatic limit:
 \begin{equation}
 \label{eq:LandauHallT}
 \lim_{\eta \to 0} \sigma_H(b,T,\eta)=e^2\frac{n^B(b,\mu,T)}{b}.
 \end{equation}

Furthermore, by explicit computation, see for example \cite{HaiduGummich}, we have that the bulk pressure and the bulk density of states are simply given by
\begin{align}
\label{eq:LandauPressure}
p^B(b,T,\mu)&=-\sum_{n=0}^{\infty} F_{\mu,T}(e_{n,b}) \Pi_{n,b}(0;0) =- \sum_{n=0}^{\infty} F_{\mu,T}(e_{n,b}) \frac{b}{2\pi} \\
\label{eq:LandauDensity}
n^B(b,T,\mu)&= \partial_{\mu} p^B(b,T,\mu) =\sum_{n=0}^{\infty} F_{FD}(e_{n,b}) \Pi_{n,b}(0;0) = \sum_{n=0}^{\infty} F_{FD}(e_{n,b}) \frac{b}{2\pi} .
\end{align}
By taking the derivative with respect to $\mathfrak{B}$ of the pressure \eqref{eq:LandauPressure} we get the bulk magnetization, that is
	$$
	\begin{aligned}
	m^B(b,T,\mu)(b)&= -e \partial_b p(b,T,\mu) =  e\sum_{n=1}^{\infty} F_{FD}(e_{n,b}) \frac{d e_{n,b} }{db}  \frac{b}{2\pi} - e \frac{p(b,T,\mu)}{b}\\
	&=: m^{B, \rm circ}(b,T,\mu) + m^{B, \rm res}(b,T,\mu)
	\end{aligned}
	$$
	where the first term $m^{B, \rm circ}(b,T,\mu)$ denotes the magnetization given by the local circulation of electrons and $m^{B, \rm res}(b,T,\mu)$  denotes the residual magnetization which captures the contribution of edge currents, see Section \ref{sec:magnetization}.

Again by explicit computation we have that
	\begin{equation}
	 \frac{\partial m^{B}(b,T,\mu) }{\partial \mu } =- \sum_{n=1}^{\infty} e F'_{FD}(e_{n,b}) \frac{d e_{n,b} }{db}  \frac{b}{2\pi} -e \frac{n^B(b,T,\mu)}{b} .
	\end{equation}
    Therefore, taking into account \eqref{eq:LandauHallT}, we get
    that 
    \begin{equation}
    \label{eq:BulkSide}
     \lim_{\eta \to 0} \sigma_H(b,T,\eta) =  e^2\frac{\partial n^B(b,T,\mu)}{\partial b} +e \frac{\partial m^{B, \rm circ}(b,T,\mu)}{\partial \mu } = -e \frac{\partial m^{B, \rm res}(b,T,\mu)}{\partial \mu }.
    \end{equation}
    Equation \eqref{eq:BulkSide} provides a complete picture of the bulk transport at positive temperature \cite{Streda,HaiduGummich,Resta2010}. In accordance with the derivation of St{\v r}eda \cite{Streda} we can recognize two contributions to the Hall conductivity: the first one, which is given by the magnetic derivative of the integrated density of states, which is also called the \emph{Fermi sea contribution} since it contains the Fermi-Dirac distribution, and the second one, which is given by the derivative w.r.t. of the chemical potential of $m^{circ}$, which contains the derivative of the Fermi-Dirac distribution and it is therefore called the \emph{Fermi surface contribution} to the conductivity. Actually the two names Fermi sea and Fermi surface contribution make real sense only at zero temperature, but we extend them to positive temperature with a little abuse of notation.
    
    Since $\mu$ is in a spectral gap of the bulk Hamiltonian, one can prove that $\lim_{T\to 0}  \frac{\partial m^{B, \rm circ}(b,T,\mu)}{\partial \mu } = 0$, hence
    \begin{equation*}
        \lim_{T\to 0}  \lim_{\eta \to 0} \sigma_H(b,T,\eta) = e^2\frac{\partial n^B(b,T,\mu)}{\partial b}.
    \end{equation*}
    where the right-hand side is proportional to the Chern character of the projection $\Pi_{0,b}$. Therefore, we have that the Hall conductivity converges to its zero-temperature quantized value. Let us now apply Theorem \ref{thm:Main} to this simple situation. We obtain that 
    \begin{equation}
    \label{eq:MEqI}
    m^{B}(b,T,\mu)=I^E_1(b,T,\mu)
    \end{equation}
    which, coupled with \eqref{eq:BulkSide}, implies that the limit to zero temperature of $\partial_{\mu}I_1^E$ converges to its quantized zero-temperature valued, hence establishing the well-known zero-temperature bulk-edge correspondence.
    
    In view of the discussion of Section \ref{sec:currentDef}, we can now subtract from \eqref{eq:MEqI} the contribution of the magnetization currents, thus obtaining
    \begin{equation}
        m^{B, \rm res}(b,T,\mu)=I^{E, \rm tr}_1(b,T,\mu).
    \end{equation}
    Finally, by taking a derivative with respect to $\mu$ and multiplying by $-e$, we obtain an equality between the bulk conductivity and edge conductance at any temperature for the pure Landau case:
    \begin{equation}
        \sigma_H(b,T,\mu)=-e\partial_\mu m^{B, \rm res}(b,T,\mu)=-e \partial_\mu I^{E, \rm tr}_1(b,T,\mu)=\sigma_E(b,T,\mu).
    \end{equation}


\section{Results for discrete systems}

    \label{subsec:Discrete}
    Our method can be adapted to tight-binding models as considered in \cite{ElgartGrafSchenker}. In what follows, we will sketch (without proofs, to be given elsewhere) how to interpret the results of \cite{ElgartGrafSchenker} in terms of our magnetic response approach to bulk-edge correspondence. 
    
    Let us consider a family of self-adjoint bulk Hamiltonians $\left(H_{\omega,b}\right)_{\omega \in \Par}$ on $\ell^2(\Z^2)$  such that the matrix elements of $H_{\omega,b}$ are given by 
    $$
    H_{\omega,b}(\x,\y)=\e^{\iu b \varphi(\x,\y)} T_{\omega}(\x-\y),
    $$
    with $\varphi(\x,\y):=-\frac{1}{2}(x_1-y_1)(x_2+y_2)$, and there exist $C,\delta>0$ such that 
    $$
   \left|T_{\omega}(\x) \right| \leq C \e^{-\delta\|\x\|}, \; \forall \, \x \in \Z^2. 
    $$
    
    We also assume that $\{H_{\omega,b}\}_{\omega \in \Par}$ satisfies the following covariant relation
\begin{equation}
\label{eq:covariance2D'}
\tau_{b,\gamma}H_{\omega,b}\tau^*_{b,\gamma}=H_{T(\gamma)\omega,b}, \quad  \left(\tau_{b,\gamma}\psi\right)(\x)= \e^{\iu b (\gamma_1-x_1)\gamma_2 }\psi(\x-\gamma)=\e^{\iu b\varphi_2(\x,\gamma)} \psi(\x-\gamma),\quad \forall \, \gamma \in \Z^2,
\end{equation}
where $\tau_{b,\gamma}$ is a family of magnetic translations compatible with the Landau gauge, while $T(\gamma)$ is the canonical action of $\Z^2$ on $\Theta$. Hence $(H_{\omega,b})_{\omega \in \Par}$ is ergodic with respect to the lattice $\Z^2$.

The family of edge Hamiltonians $(H^E_{\omega,b})_{\omega \in \Par}$ is defined by restricting the bulk operators to the set $\Z \times \N$. As in the continuous case, we have that this family is still ergodic with respect to the one-dimensional lattice generated by the vector $(1,0)$:

\begin{equation}
\label{eq:covariance1D'}
\tau_{b,\gamma}H^E_{\omega,b}\tau^*_{b,\gamma}=H^E_{T(\gamma)\omega,b} \qquad \forall \, \gamma=(\gamma_1,0) \in \Z^2 \,.
\end{equation}

In the discrete case, the unit cell simply reduces to one point, which simplifies things a lot. In particular, defining 
$$B_F(b):=\mathbb{E}\left(F\big (H_{\dotw,b}\big )({\bf 0},{\bf 0}) \right ),$$
the analogue of \eqref{ar1} becomes:
\begin{equation}
\label{eq:MainDiscrete1}
\frac{\di B_F}{\di b}(b)= -\lim_{N\to \infty} \mathbb{E}\left(\Tr  \left( \chi_{N}\iu \left[H_{\dotw,b}^E,X_1\right]  F'(H_{\dotw,b}^E)\right)\right).
\end{equation}
The proof in the discrete case follows the same step as in the continuous case. However, the presence of two different phases, namely $\varphi(\cdot,\cdot)$ in the definition of the tight-binding Hamiltonian and $\varphi_2(\cdot,\cdot)$ in the definition of the almost resolvent $S^E_{\omega,\epsilon,z}$, requires some additional care. In particular, there is an extra first order contribution to the magnetic derivative of the "smeared" integrated density of states $\rho_{L,\omega}(b)$ which can be proven to be equal to zero.
Moreover, notice that, in comparison to \eqref{ar1}, on the right-hand side only appears the function $\chi_N$, which now simply denotes the characteristic function of a  finite "strip" containing $N+1$ points located on the vertical axis: $\mathcal{S}_{N}:=\{0\times [0,N]\} \cap \Z^2$, $N\geq 1$. 

We will now relate our method and results  to the approach 
based on switch functions \cite{AvronSeilerSimon,ElgartGrafSchenker}, see also \cite{MarcelliPanatiTauber} for a recent discussion on the topic. Even though we postpone a detailed discussion and full proofs to a future paper, we will briefly sketch the main ideas.

Let $i \in \{1,2\}$ and $N \in \N$, and consider the set $E_i^N:=\{ \x \in \Z^2 \, | \,  x_i < N \}$. We define the switch function $\Lambda_i^{(N)}: \Z \to [0,1]$ to be the characteristic function of $E_i^N$. We also use the shorthand notation $\Lambda_i^{(0)} \equiv \Lambda_i$. By exploiting the key property of the switch function
\begin{equation}
\label{eq:SwitchKey}
\sum_{x_i \in \Z} \Lambda_i^{(N)}(x_i+y_i) - \Lambda_i^{(N)}(x_i) = -y_i
\end{equation}
we can show that 
\begin{equation}
\label{eq:switch-current}
\E\left(\Tr  \left( \chi_{N}\iu \left[H_{\dotw,b}^E,X_1\right]  F'(H_{\dotw,b}^E)\right)\right)=\E\left(\Tr\left( \Lambda_2^{(N)}\iu \left[\Lambda_1,H_{\dotw,b}^E\right]  F'(H_{\dotw,b}^E)\right)\right). 
\end{equation}
As a side remark we note that instead of changing the switch function, one could equivalently move the edge to ``minus infinity" as done in \cite{ElgartGrafSchenker}: let the operator $H_{\omega,b}^{(E,N)}$ be the truncation of the bulk Hamiltonian on the set $\Z \times [-N,+\infty)$, then by exploiting the covariance of the Hamiltonian we get
$$
\E\left(\Tr\left( \Lambda_2^{(N)}\iu \left[H_{\dotw,b}^E, \Lambda_1\right]  F'(H_{\dotw,b}^E)\right)\right) = \E\left(\Tr\left( \Lambda_2\iu \left[H_{\dotw,b}^{(E,N)}, \Lambda_1\right]  F'(H_{\dotw,b}^{(E,N)})\right)\right) \, .
$$
The last equality shows that the magnetic derivative of $B_F$ can be written in terms of a current written by using the switch functions. However, we can also understand the right-hand side of \eqref{eq:switch-current} by means of a suitable magnetic derivative, in the same spirit as we do in Theorem \ref{thm-positive} . Consider the vector ${\bf{\Lambda}}_\x=(\Lambda_1(\x),\Lambda_2(\x))$ and the magnetically perturbed bulk Hamiltonian 
$$
 \widetilde{H}_{\omega,b,\epsilon}(\x,\y)=\e^{\iu \epsilon \varphi\left(\bf{\Lambda}_\x,\bf{\Lambda}_\y\right)} H_{\omega,b}(\x,\y) \, ,
$$
together with its corresponding edge version, $\widetilde{H}_{\omega,b,\epsilon}^E$.
Notice that the magnetic perturbation depending on $\epsilon$ is not singular anymore because $\varphi\left(\bf{\Lambda}_\x,\bf{\Lambda}_\y\right)$ is bounded.
Let $\Omega_L:=[-L,L]^2 \cap \Z^2$ and $\chi_{\Omega_L}$ be its characteristic function. We consider the object 
$$
\mathfrak{B}_F^{(L)}(\epsilon):=\Tr\left(\chi_{\Omega_L} F(\widetilde{H}_{\omega,b,\epsilon}) \right) .
$$
It is not difficult to see that the above quantity diverges when $L\to \infty$. The interesting fact is that its derivative with respect to $\epsilon$ does have a limit.  Namely, using the magnetic perturbation theory adapted to the discrete setting we can show that
\begin{equation}
\label{eq:SwitchMagnetic}
\lim_{L \to \infty} \frac{\di \mathfrak{B}_F^{(L)}}{\di \epsilon}(\epsilon) = -\frac{1}{\pi}  \mathrm{Re} \int_{\mathcal{D}} \mathrm{d}z_1\mathrm{d}z_2\, \bar{\partial} {F_N}(z) \Tr\left( R_{\omega,z}  \left[{H}_{\omega,b}, \Lambda_1 \right] R_{\omega,z} \left[{H}_{\omega,b}, \Lambda_2\right]  R_{\omega,z} \right) \, ,
\end{equation}
where we used the Helffer-Sjostrand formula defined in \eqref{dc22} and the  shorthand notation $R_{\omega,z}\equiv({H}_{\omega,b}-z)^{-1}$. 

We notice two facts: first, the magnetic derivative and the limit $L \to \infty$ do not commute; second, \eqref{eq:SwitchMagnetic} holds true for every configuration of the disorder, namely no average is needed. The right-hand side of \eqref{eq:SwitchMagnetic} plays a key role in the proof of the bulk-edge correspondence as presented in \cite{ElgartGrafSchenker}. It is interesting to see how the expression \eqref{eq:SwitchMagnetic} can be connected to our approach to bulk-edge correspondence. First we need to take the expectation over the disorder and, then, by using twice the key property of the switch function \eqref{eq:SwitchKey} we obtain
\begin{equation}
\label{eq:ItyAnceEquiv}
\begin{aligned}
   &\E \left( \lim_{L \to \infty} \frac{\di \mathfrak{B}_F^{(L)}}{\di \epsilon}(0) \right)\\
   &=-\frac{1}{\pi}  \mathrm{Re} \int_{\mathcal{D}} \mathrm{d}z_1\mathrm{d}z_2\, \bar{\partial} {F_N}(z) \E \left(\Tr\left( R_{\dotw,z}  \left[{H}_{\dotw,b}, \Lambda_1 \right] R_{\dotw,z} \left[{H}_{\dotw,b}, \Lambda_2\right]  R_{\dotw,z} \right) \right) \\
   &=-\frac{1}{\pi}  \mathrm{Re} \int_{\mathcal{D}} \mathrm{d}z_1\mathrm{d}z_2\, \bar{\partial} {F_N}(z) \E \left(\Tr\left( \chi_0 R_{\dotw,z}  \left[{H}_{\dotw,b}, X_1 \right] R_{\dotw,z} \left[{H}_{\dotw,b}, X_2\right]  R_{\dotw,z} \chi_0\right) \right)
   \end{aligned}
\end{equation}
where $\chi_0$ is the characteristic function of the origin ${\bf 0} \in \Z^2$. One may recognize (after some non-trivial computations involving the "standard" magnetic perturbation theory) that the last term of the equality \eqref{eq:ItyAnceEquiv} is nothing but the derivative $B_F'(b)$. By putting together  \eqref{eq:ItyAnceEquiv}, \eqref{eq:SwitchMagnetic} and \eqref{eq:MainDiscrete1} we obtain: 
\begin{equation}
\label{eq:BECEquivalence}
\begin{aligned}
\lim_{N\to \infty} \mathbb{E}\left(\Tr  \left( \chi_{N}\iu \left[X_1,H_{\dotw,b}^E\right]  F'(H_{\dotw,b}^E)\right)\right) & = \frac{\di B_F}{\di b}(b)= \E \left( \lim_{L \to \infty} \frac{\di \mathfrak{B}_F^{(L)}}{\di \epsilon}(0) \right)  \\
&=\lim_{N \to \infty} \E\left(\Tr\left( \Lambda_2^{(N)}\iu \left[H_{\dotw,b}^E,\Lambda_1\right]  F'(H_{\dotw,b}^E)\right)\right) \, .
\end{aligned}
\end{equation}
We remark that the underlying mechanism behind these equations is the same that lies behind the equivalence of Hall conductivity and Hall conductance and it is an intrinsic two-dimensional phenomenon. The equality \eqref{eq:BECEquivalence} has two major implications:
\begin{itemize}
\item If $F$ is the smooth characteristic function of an almost sure spectral island of the family $(H_{\omega,b})_{\omega  \in \Par}$, then the right-hand side of \eqref{eq:SwitchMagnetic} is nothing but the Hall conductance, thus \eqref{eq:SwitchMagnetic} can be rewritten as
\begin{equation}
\label{eq:ExtStreda}
\lim_{L \to \infty} \frac{\di \mathfrak{B}_F^{(L)}}{\di \epsilon}(\epsilon) =  \frac{1}{2\pi} \Tr\left(F({H}_{\omega,b})\left[\left[ F({H}_{\omega,b}), \Lambda_1 \right],\left[ F({H}_{\omega,b}), \Lambda_2 \right] \right]\right)
\end{equation}
which has the same form of the usual St\v reda formula. We can interpret \eqref{eq:ExtStreda} as an \emph{extensive macroscopic St\v reda formula}: the derivative of the macroscopic charge with respect to a localized magnetic field (left-hand side of \eqref{eq:ExtStreda}) corresponds to the Hall conductance (right-hand side of \eqref{eq:ExtStreda}). Therefore, \eqref{eq:ItyAnceEquiv} is nothing but the equality between the disordered average of the Hall conductance (which is constant for almost every $\omega$) and the Hall conductivity, namely it is an equality between the Chern character written using the position operators and the switch functions. Or, in other words, \eqref{eq:ItyAnceEquiv} says that one can either compute the derivative of the density of states with respect to an extended magnetic field (the usual St\v reda formula) or compute the derivative of the macroscopic charge with respect to a localized magnetic field (the \emph{extensive St\v reda formula} \eqref{eq:ExtStreda}). Both procedures are equivalent and produce the Chern character. Finally, \eqref{eq:BECEquivalence} connects the two bulk quantities, on the one hand to the edge conductance written in term of position operators (the first term of the equality \eqref{eq:BECEquivalence}), and on the other hand to the edge conductance written in term of switch-functions (the last term of the equality \eqref{eq:BECEquivalence}), thus showing the equivalence between these two approaches.

\item If $F$ is a function such that $F'$ is supported in a region of dynamical localization for almost every $H_{\omega,b}$, then, for such $\omega$ we are in the setting described in \cite{ElgartGrafSchenker}. We notice that a true zero temperature limit would require to work with $F=F_{\rm FD}$ where the chemical potential $\mu$ belongs to the mobility gap, while taking $\beta$ to infinity. Instead, this procedure is circumvented in \cite{ElgartGrafSchenker} in the following way. First, they prove that the Chern character of the Fermi projection $P_\mu$ is independent on the chemical potential when $\mu$ belongs to a mobility gap. Second, they show that the difference 
$$2\pi \lim_{L\to\infty}\frac{\di \mathfrak{B}_F^{(L)}}{\di \epsilon}(0) - {\rm Chern}(P_\mu)$$
can be expressed only in terms of the pure point spectrum of $F'(H_{\omega,b})$ and the associated eigenfunctions. Third, using the previously mentioned non-trivial result of \cite{ElgartGrafSchenker} and \eqref{eq:BECEquivalence}, one can write 
\begin{equation} 
\begin{aligned}
\label{eq:discreteBECmGap}
\frac{{\rm Chern}(P_\mu)}{2\pi}=&\lim_{N \to \infty} \E\left(\Tr\left( \Lambda_2^{(N)}\iu \left[H_{\dotw,b}^E,\Lambda_1\right]  F'(H_{\dotw,b}^E)\right)\right)\\
&+\text{ term only depending on } F'(H_{\dotw,b}).
\end{aligned}
\end{equation}
The right-hand side of the above equation only depends on $F'$ and it is independent on the choice of $F$, provided that $F'$ is supported in the bulk mobility gap. The authors of \cite{ElgartGrafSchenker} interpret the right-hand side of \eqref{eq:discreteBECmGap} as the edge conductance, and they heuristically argue that the second term on the right hand side comes from the bulk contribution to the magnetization, which in our Section \ref{sec:magnetization} coincides to what we call $m^{B,\mathrm{circ}}(b,T,\mu)$. Furthermore, they also provide an alternative expression for the edge conductance only in terms of the edge Hamiltonian. We stress that in \cite{ElgartGrafSchenker} the authors do not need to take the average with respect to the disorder configuration.
\end{itemize}

\addtocontents{toc}{\protect\setcounter{tocdepth}{0}}

  
\addtocontents{toc}{\protect\setcounter{tocdepth}{2}}

\vfill

{
	
	\begin{tabular}{rl}
	    (H.~D. Cornean) & \textsc{Department of Mathematical Sciences, Aalborg University} \\
		&  Thomas Mannsvej 23, 9220 Aalborg, Denmark \\
		&  \textsl{E-mail address}: \href{mailto:cornean@math.aau.dk}{\texttt{cornean@math.aau.dk}} \\
		\\
		(M. Moscolari) & \textsc{Dipartimento di Matematica, Politecnico di Milano}\\
		& Piazza Leonardo da Vinci 32, 20133 Milano, Italy \\
		&  \textsl{E-mail address}: \href{mailto:massimo.moscolari@polimi.it}{\texttt{massimo.moscolari@polimi.it}} \\
		\\
		(S. Teufel) & \textsc{Fachbereich Mathematik, Eberhard-Karls-Universit\"at}\\
		& Auf der Morgenstelle 10, 72076 T\"ubingen, Germany \\
		&  \textsl{E-mail address}: \href{mailto:stefan.teufel@uni-tuebingen.de}{\texttt{stefan.teufel@uni-tuebingen.de}} \\
	\end{tabular}
	
}

\end{document}